\newtheorem{theorem}{Theorem}[section]
\newtheorem{lemma}[theorem]{Lemma}
\newtheorem{corollary}[theorem]{Corollary}
\newtheorem{definition}[theorem]{Definition}
\theoremstyle{remark}
\newtheorem{example}[theorem]{Example}
\newcommand{\bigO}{\mathcal{O}}
\let\OLDthebibliography\thebibliography
  \renewcommand\thebibliography[1]{
  \OLDthebibliography{#1}
  \setlength{\parskip}{0pt}
  \setlength{\itemsep}{0pt plus 0.3ex}
}
\begin{document}

\setenumerate{itemsep=0ex, parsep=1pt}

\renewcommand\Affilfont{\normalsize}

\title{At the Roots of Dictionary Compression: String Attractors}
\author[1]{Dominik Kempa}
\author[2,3]{Nicola Prezza}

\affil[1]{Department of Computer Science, University of Helsinki, Finland}
\affil[ ]{\href{mailto:dkempa@cs.helsinki.fi}{\nolinkurl{dkempa@cs.helsinki.fi}}}
\affil[2]{DTU Compute, Technical University of Denmark, Denmark}
\affil[3]{Department of Computer Science, University of Pisa, Italy}
\affil[ ]{\href{nicola.prezza@di.unipi.it}{\nolinkurl{nicola.prezza@di.unipi.it}}}

\date{\vspace{-1.5cm}}
\maketitle

\begin{abstract}
  A well-known fact in the field of lossless text compression is that
  high-order entropy is a weak model when the input contains long
  repetitions. Motivated by this fact, decades of research have
  generated myriads of so-called \emph{dictionary compressors}:
  algorithms able to reduce the text's size by exploiting its
  repetitiveness. Lempel-Ziv 77 is one of the most successful and
  well-known tools of this kind, followed by straight-line programs,
  run-length Burrows-Wheeler transform, macro schemes, collage
  systems, and the compact directed acyclic word graph. In this paper,
  we show that these techniques are different solutions to the same,
  elegant, combinatorial problem: to find a small set of positions
  capturing all distinct text's substrings. We call such a set a
  \emph{string attractor}.  We first show reductions between
  dictionary compressors and string attractors. This gives the
  approximation ratios of dictionary compressors with respect to the
  smallest string attractor and allows us to uncover new asymptotic
  relations between the output sizes of different dictionary
  compressors. We then show that the \emph{$k$-attractor problem} ---
  deciding whether a text has a size-$t$ set of positions capturing
  all substrings of length at most $k$ --- is NP-complete for $k\geq
  3$. This, in particular, includes the full string attractor
  problem. We provide several approximation techniques for the
  smallest $k$-attractor, show that the problem is APX-complete for
  constant $k$, and give strong inapproximability results.  To
  conclude, we provide matching lower and upper bounds for the random
  access problem on string attractors.  The upper bound is proved by
  showing a data structure supporting queries in optimal time. Our
  data structure is \emph{universal}: by our reductions to string
  attractors, it supports random access on any dictionary-compression
  scheme. In particular, it matches the lower bound also on LZ77,
  straight-line programs, collage systems, and macro schemes, and
  therefore essentially closes (at once) the random access problem for
  all these compressors.
\end{abstract}

\section{Introduction}

The goal of lossless text compression is to reduce the size of a given
string by exploiting irregularities such as skewed character
distributions or substring repetitions.  Unfortunately, the holy grail
of compression --- Kolmogorov complexity~\cite{kolmogorov1965three}
--- is non-computable: no Turing machine can decide, in a finite
number of steps, whether a given string has a program generating it
whose description is smaller than some fixed value $K$. This fact
stands as the basis of all work underlying the field of data
compression: since we cannot always achieve the best theoretical
compression, we can at least try to approximate it. In order to
achieve such a goal, we must first find a model that captures, to some
good extent, the degree of regularity of the text. For example, in the
case of the text generated by a Markovian process of order $k$, the
$k$-th order entropy $H_k$ of the source represents a lower bound for
our ability to compress its outputs. This concept can be extended to
that of \emph{empirical entropy}~\cite{cover2006elements} when the
underlying probabilities are unknown and must be approximated with the
empirical symbol frequencies.  The $k$-th order compression, however,
stops being a reasonable model about the time when $\sigma^k$ becomes
larger than $n$, where $\sigma$ and $n$ are the alphabet size and the
string length, respectively.  In particular,
Gagie~\cite{gagie2006large} showed that when $k \geq \log_\sigma n$,
no compressed representation can achieve a worst-case space bound of
$c \cdot n H_k + o (n \log \sigma)$ bits, regardless of the value of
the constant $c$.  This implies that $k$-th order entropy is a weak
model when $k$ is large, i.e., when the goal is to capture long
repetitions. Another way of proving this fact is to observe that, for
any sufficiently long text $T$, symbol frequencies (after taking their
context into account) in any power of $T$ (i.e., $T$ concatenated with
itself) do not vary
significantly~\cite[Lem. 2.6]{kreft2013compressing}. As a result, we
have that $t\cdot nH_k(T^t) \approx t\cdot nH_k(T)$ for any $t>1$: entropy is
not sensitive to very long repetitions.

This particular weakness of entropy compression generated, in the last
couple of decades, a lot of interest in algorithms able to directly
exploit text repetitiveness in order to beat the entropy lower bound
on very repetitive texts.  The main idea underlying these algorithms
is to replace text substrings with references to a dictionary of
strings, hence the name \emph{dictionary compressors}.  One effective
compression strategy of this kind is to build a context-free grammar
that generates (only) the string. Such grammars (in Chomsky normal
form) are known by the name of \emph{straight-line programs}
(SLP)~\cite{KiefferY00}; an SLP is a set of rules of the kind
$X\rightarrow AB$ or $X \rightarrow a$, where $X$, $A$, and $B$ are
\emph{nonterminals} and $a$ is a \emph{terminal}. The string is
obtained from the expansion of a single starting nonterminal $S$. If
also rules of the form $X \rightarrow A^\ell$ are allowed, for any
$\ell > 2$, then the grammar is called \emph{run-length SLP}
(RLSLP)~\cite{NishimotoIIBT16}.  The problems of finding the smallest
SLP --- of size $g^*$ --- and the smallest run-length SLP --- of size
$g_{rl}^*$ --- are NP-hard~\cite{CLLPPSS05, hucke2016smallest}, but
fast and effective approximation algorithms are known, e.g.,
LZ78~\cite{ziv1978compression}, LZW~\cite{welch1984technique},
Re-Pair~\cite{larsson2000off}, Bisection~\cite{kieffer2000universal}.
An even more powerful generalization of RLSLPs is represented by
\emph{collage systems}~\cite{KidaMSTSA03}: in this case, also rules of
the form $X\rightarrow Y[l..r]$ are allowed (i.e., $X$ expands to a
substring of $Y$). We denote with $c$ the size of a generic collage
system, and with $c^*$ the size of the smallest one.  A related
strategy, more powerful than grammar compression, is that of replacing
repetitions with pointers to other locations in the string. The most
powerful and general scheme falling into this category takes the name
of \emph{pointer macro scheme}~\cite{storer1978macro,storer1982data},
and consists of a set of substring equalities that allow for
unambiguously reconstructing the string. Finding the smallest such
system --- of size $b^*$ --- is also
NP-hard~\cite{gallant1982string}. However, if we add the constraint of
unidirectionality (i.e., text can only be copied from previous
positions), then Lempel and Ziv in~\cite{lempel1976complexity} showed
that a greedy algorithm (LZ77) finds an optimal solution to the
problem (we denote the size of the resulting parsing by $z$).
Subsequent works showed that LZ77 can even be computed in linear
time~\cite{crochemore2008computing}.  Other effective techniques to
compress repetitive strings include the run-length Burrows-Wheeler
transform~\cite{burrows1994block} (RLBWT) and the compact directed
acyclic word graph~\cite{blumer1987complete,crochemore1997direct}
(CDAWG). With the first technique, we sort all circular string
permutations in an $n\times n$ matrix; the BWT is the last column of
this matrix. The BWT contains few equal-letter runs if the string is
very repetitive, therefore run-length compression often significantly
reduces the size of this string
permutation~\cite{makinen2010storage}. The number $r$ of runs in the
BWT is yet another good measure of repetitiveness. Finally, one can
build a compact (that is, path compressed) automaton recognizing the
string's suffixes, and indicate with $e$ the number of its edges. The
size $e^*$ of the smallest such automaton --- the CDAWG --- also grows
sublinearly with $n$ if the string is very
repetitive~\cite{belazzougui2015composite}. Both RLBWT and CDAWG can
be computed in linear
time~\cite{munro2016space,belazzougui2014linear,crochemore1997compact}.

The promising results obtained in the field of dictionary compression
have generated --- in recent years --- a lot of interest around the
closely-related field of \emph{compressed computation}, i.e.,
designing compressed data structures that efficiently support a
particular set of queries on the text. The sizes of these data
structures are bounded in terms of repetitiveness measures. As with
text compression, the landscape of compressed data structures is
extremely fragmented: different solutions exist for each compression
scheme, and their space/query times are often not even comparable, due
to the fact that many asymptotic relations between repetitiveness
measures are still missing. See, for example, Gagie et
al.~\cite{gagie2017optimal} for a comprehensive overview of the
state-of-the-art of dictionary-compressed full-text indexes (where
considered queries are random access to text and counting/locating
pattern occurrences). In this paper we consider data structures
supporting random access queries (that is, efficient local
decompression).  Several data structures for this problem have been
proposed in the literature for each distinct compression scheme. In
Table \ref{table:extract} we report the best time-space trade-offs
known to date (grouped by compression scheme).  Extracting text from
Lempel-Ziv compressed text is a notoriously difficult problem.  No
efficient solution is known within $\bigO(z)$ space (they all require
time proportional to the parse's height), although efficient queries
can be supported by raising the space by a logarithmic
factor~\cite{PhBiCPM17,BGGKOPT15}. Grammars, on the other hand, allow
for more compact and time-efficient extraction strategies.  Bille et
al.~\cite{BLRSRW15} have been the first to show how to efficiently
perform text extraction within $\bigO(g)$ space. Their time bounds
were later improved by Belazzougui et al.~\cite{BPT15}, who also
showed how to slightly increase the space to $\bigO(g\log^\epsilon
n\log(n/g))$ while matching a lower bound holding on
grammars~\cite{CVY13}. Space-efficient text extraction from the
run-length Burrows-Wheeler transform has been an open problem until
recently. Standard solutions~\cite{makinen2010storage} required
spending additional $\bigO(n/s)$ space on top of the RLBWT in order to
support extraction in a time proportional to $s$. In a recent
publication, Gagie et al.~\cite{gagie2017optimal} showed how to
achieve near-optimal extraction time in the packed setting within
$\bigO(r\log(n/r))$ space. Belazzougui and Cunial~\cite{BCspire17}
showed how to efficiently extract text from a CDAWG-compressed
text. Their most recent work~\cite{belazzougui2017representing} shows,
moreover, how to build a grammar of size $\bigO(e)$: this result
implies that the solutions for grammar-compressed text can be used on
the CDAWG. To conclude, no strategies for efficiently extracting text
from general macro schemes and collage systems are known to date: the
only solution we are aware of requires explicitly navigating the
compressed structure, and can therefore take time proportional to the
text's length in the worst case.

\paragraph{Our Contributions}

At this point, it is natural to ask whether there exists a common (and
simple) principle underlying the complex set of techniques
constituting the fields of dictionary compression and
compressed-computation.  In this paper, we answer (affirmatively) this
question. Starting from the observation that string repetitiveness can
be defined in terms of the cardinality of the set of distinct
substrings, we introduce a very simple combinatorial object --- the
\emph{string attractor} --- capturing the complexity of this
set. Formally, a string attractor is a subset of the string's
positions such that all distinct substrings have an occurrence
crossing at least one of the attractor's elements. Despite the
simplicity of this definition, we show that dictionary compressors can
be interpreted as algorithms approximating the smallest string
attractor: they all induce (very naturally) string attractors whose
sizes are bounded by their associated repetitiveness measures.  We
also provide reductions from string attractors to most dictionary
compressors and use these reductions to derive their approximation
rates with respect to the smallest string attractor.  This yields our
first efficient approximation algorithms computing the smallest string
attractor, and allows us to uncover new relations between
repetitiveness measures. For example, we show that $g^*,z \in
\bigO(c^*\log^2(n/c^*))$, $c^* \in
\bigO(b^*\log(n/b^*))\ \cap\ \bigO(r\log(n/r))$, and $b^* \in
\bigO(c^*\log(n/c^*))$.

Our reductions suggest that a solution (or a good approximation) to
the problem of finding an attractor of minimum size could yield a
better understanding of the concept of text repetitiveness and could
help in designing better dictionary compressors. We approach the
problem by first generalizing the notion of string attractor to that
of $k$-attractor: a subset of the string's positions capturing all
substrings of length at most $k$. We study the computational
complexity of the $k$-attractor problem: to decide whether a text has
a $k$-attractor of a given size. Using a reduction from $k$-set-cover,
we show that the $k$-attractor problem is NP-complete for $k\geq
3$. In particular, this proves the NP-completeness of the original
string attractor problem (i.e., the case $k=n$). Given the hardness of
computing the smallest attractor, we focus on the problem of
approximability. We show that the smallest $k$-attractor problem is
APX-complete for constant $k$ by showing a $2k$-approximation
computable in linear time and a reduction from $k$-vertex-cover. We
also use reductions to $k$-set-cover to provide $\bigO(\log
k)$-approximations computable in polynomial time.  Our
APX-completeness result implies that the smallest $k$-attractor
problem has no PTAS unless P=NP.  Using a reduction from
$3$-vertex-cover and explicit constants derived by Berman and
Karpinski~\cite{bk99}, we strengthen this result and show that, for
every $\epsilon > 0$ and every $k \geq 3$, it is NP-hard to
approximate the smallest $k$-attractor within a factor of 11809/11808
- $\epsilon$.

We proceed by presenting an application of string attractors to the
domain of compressed computation: we show that the simple property
defining string attractors is sufficient to support random access in
optimal time.  We first extend a lower bound~\cite[Thm. 5]{CVY13} for
random access on grammars to string attractors.  Let $\gamma$ be the
size of a string attractor of a length-$n$ string $T$ over an alphabet
of size $\sigma$.  The lower bound states that $\Omega(\log n/\log\log
n)$ time is needed to access one random position within $\bigO(\gamma
\ \mathrm{polylog}\ n)$ space.  Let $w$ be the memory word size (in
bits). We present a data structure taking $\bigO(\gamma \tau
\log_\tau(n/\gamma))$ words of space and supporting the extraction of
any length-$\ell$ substring of $T$ in $\bigO(\log_\tau(n/\gamma) +
\ell\log\sigma/w)$ time, for any $\tau\geq 2$ fixed at construction
time.  For $\tau=\log^\epsilon n$ (for any constant $\epsilon>0$) this
query time matches the lower bound.  Choosing $\tau =
(n/\gamma)^{\epsilon}$, we obtain instead optimal time in the packed
setting within $\bigO(\gamma^{1-\epsilon}n^\epsilon)$ space.  From our
reductions, our solution is \emph{universal}: given a
dictionary-compressed text representation, we can induce a string
attractor of the same size and build our structure on top of it.  We
note, as well, that the lower bound holds, in particular, on most
compression schemes.  As a result, our data structure is also optimal
for SLPs, RLSLPs, collage systems, LZ77, and macro schemes.  Tables
\ref{table:extract} and \ref{table:extract A-DAG} put our structure in
the context of state-of-the-art solutions to the problem. Note that
all existing solutions depend on a specific compression scheme.

\section{Preliminaries}\label{sec:preliminaries}

Throughout the paper, we use the terms \emph{string} and \emph{text}
interchangeably. The notion $T[i..j]$, $1 \leq i \leq j \leq n$,
denotes the substring of string $T\in\Sigma^n$ starting at position
$i$ and ending at position $j$. We denote the alphabet size of string
$T$ by $|\Sigma|=\sigma$.

\begin{table}
  \centering
  \caption{Best trade-offs in the literature for extracting text from
    compressed representations.}
  \begin{tabular}{l c c}
    \toprule
    Structure & Space & Extract time\\
    \midrule
    \cite[Lem.~5]{PhBiCPM17}
    & $\bigO(z\log(n/z))$ & $\bigO(\ell+\log(n/z))$\\
    \cite[Thm.~2]{BGGKOPT15}   
    & $\bigO(z\log(n/z))$ & $\bigO((1+\ell/\log_\sigma n)\log(n/z))$\\
    \cite[Thm.~1]{BPT15} 
    & $\bigO(g)$ &  $\bigO(\ell/\log_\sigma n+\log n)$ \\
    \cite[Thm.~3]{BPT15}
    & $\bigO(g\log^\epsilon n\log\frac{n}{g})$ &
    $\bigO(\ell/\log_{\sigma}n+\frac{\log n}{\log \log n})$\\
    \cite[Thm.~4]{gagie2017optimal}
    & $\bigO(r\log(n/r))$ & $\bigO(\ell\log(\sigma)/w + \log(n/r))$ \\
    \cite[Thm.~7]{belazzougui2017representing}
    & $\bigO(e)$ & $\bigO(\ell/\log_{\sigma}n + \log n)$\\
    \bottomrule
  \end{tabular}
  \label{table:extract}
\end{table}

\begin{table}
  \centering
  \caption{Some trade-offs achievable with our structure for different
    choices of $\tau$, in order of decreasing space and increasing
    time. Query time in the first row is optimal in the packed
    setting, while in the second row it is optimal within the
    resulting space due to a lower bound for random access on string
    attractors. To compare these bounds with those of Table
    \ref{table:extract}, just replace $\gamma$ with any of the
    measures $z$, $g$, $r$, or $e$ (possible by our reductions to
    string attractors).}
  \begin{tabular}{ccc}
    \toprule
    $\tau$ & Space & Extract time \\
    \midrule
    $(n/\gamma)^{\epsilon}$&$\bigO(\gamma^{1-\epsilon}n^\epsilon)$ & $\bigO(\ell\log(\sigma)/w)$\\
    $\log^\epsilon n$ &$\bigO\left(\gamma \log^{\epsilon} n\log(n/\gamma)\right)$ & 
    $\bigO\left(\ell\log(\sigma)/w + \frac{\log(n/\gamma)}{\log\log n}\right)$ \\
    2 &$\bigO\left(\gamma \log(n/\gamma)\right)$ & $\bigO(\ell\log(\sigma)/w + \log(n/\gamma))$\\
    \bottomrule
  \end{tabular}
  \label{table:extract A-DAG}
\end{table}

The \emph{LZ77 parsing}~\cite{LZ77,lempel1976complexity} of a string
$T$ is a greedy, left-to-right parsing of $T$ into \emph{longest
  previous factors}, where a longest previous factor at position $i$
is a pair $(p_i,\ell_i)$ such that, $p_i < i$, $T[p_i..p_i+\ell_i-1] =
T[i..i+\ell_i-1]$, and $\ell_i$ is maximized.  In this paper, we use
the LZ77 variant where no overlaps between phrases and sources are
allowed, i.e., we require that $p_i+\ell_i-1 < i$.  Elements of the
parsing are called \emph{phrases}.  When $\ell_i > 0$, the substring
$T[p_i..p_i+\ell_i-1]$ is called the {\em source} of phrase
$T[i..i+\ell_i-1]$.  In other words, $T[i..i+\ell_i-1]$ is the longest
prefix of $T[i..n]$ that has another occurrence not overlapping
$T[i..n]$ and $p_i<i$ is its starting position.  The exception is when
$\ell_i=0$, which happens iff $T[i]$ is the leftmost occurrence of a
symbol in $T$. In this case we output $(T[i],0)$ (to represent
$T[i..i]$: a phrase of length 1) and the next phrase starts at
position $i+1$.  LZ77 parsing has been shown to be the smallest
parsing of the string into phrases with sources that appear earlier in
the text~\cite{LZ77}. The parsing can be computed in $\bigO(n)$ time
for integer alphabets~\cite{crochemore2008computing} and in $\bigO(n
\log \sigma)$ for general (ordered) alphabets~\cite{RodehPE81}.  The
number of phrases in the LZ77 parsing of string $T$ is denoted by $z$.

A \emph{macro scheme}~\cite{storer1982data} is a set of $b$ directives
of two possible types:
\begin{enumerate}
  \item $T[i..j] \leftarrow T[i'..j']$ (i.e., copy $T[i'..j']$ in
    $T[i..j]$), or
  \item $T[i] \leftarrow c$, with $c\in\Sigma$ (i.e., assign character
    $c$ to $T[i]$),
\end{enumerate}
such that $T$ can be unambiguously decoded from the directives.

A \emph{bidirectional parse} is a macro scheme where the left-hand
sides of the directives induce a text factorization, i.e., they cover
the whole $T$ and they do not overlap.  Note that LZ77 is a particular
case of a bidirectional parse (the optimal unidirectional one), and
therefore it is also a macro scheme.

A \emph{collage system}~\cite{KidaMSTSA03} is a set of $c$ rules of
four possible types:
\begin{enumerate}
  \item $X \rightarrow a$: nonterminal $X$ expands to a terminal $a$,
  \item $X \rightarrow AB$: nonterminal $X$ expands to $AB$, with $A$
    and $B$ nonterminals different from $X$,
  \item $X \rightarrow R^{\ell}$: nonterminal $X$ expands to
    nonterminal $R\neq X$ repeated $\ell$ times,
  \item $X \rightarrow K[l..r]$: nonterminal $X$ expands to a
    substring of the expansion of nonterminal $K\neq X$.
\end{enumerate}
The text is the result of the expansion of a special starting
nonterminal $S$. Moreover, we require that the collage system does not
have cycles, i.e., the derivation tree of any nonterminal $X$ does not
contain $X$ nor $X[l..r]$ for some integers $l,r$ as an internal node.
Collage systems generalize SLPs (where only rules 1 and 2 are allowed)
and RLSLPs (where only rules 1, 2, and 3 are allowed). The
\emph{height} $h_X$ of a nonterminal $X$ is defined as follows. If $X$
expands to a terminal with rule 1, then $h_X=1$. If $X$ expands to
$AB$ with rule 2, then $h_X = \max\{h_A, h_B\}+1$. If $X$ expands to
$R^\ell$ with rule 3, then $h_X = h_R+1$. If $X$ expands to $K[l..r]$
with rule 4, then $h_X = h_K+1$. The \emph{height of the collage
  system} is the height of its starting nonterminal.

By $SA[1..n]$ we denote the \emph{suffix array}~\cite{ManberM93} of
$T$, $|T|=n$, defined as a permutation of the integers $[1..n]$ such
that $T[SA[1]..n] \prec T[SA[2]..n] \prec \cdots \prec T[SA[n]..n]$,
where $\prec$ denotes the lexicographical ordering.  For simplicity we
assume that $T[n]=\$$, where $\$$ is a special symbol not occurring
elsewhere in $T$ and lexicographically smaller than all other alphabet
symbols.  The \emph{Burrows-Wheeler Transform}~\cite{burrows1994block}
$BWT[1..n]$ is a permutation of the symbols in $T$ such that $BWT[i] =
T[SA[i]-1]$ if $SA[i]>1$ and $\$$ otherwise.  Equivalently, BWT can be
obtained as follows: sort lexicographically all cyclic permutations of
$T$ into a (conceptual) matrix $M\in\Sigma^{n\times n}$ and take its
last column. Denote the first and last column of $M$ by $F$ and $L$,
respectively. The key property of $M$ is the \emph{LF mapping}: the
$i$-th occurrence of any character $c$ in column $L$ corresponds to
the $i$-th occurrence of any character $c$ in column $F$ (i.e., they
represent the same position in the text). With $LF[i]$, $i\in[1,n]$ we
denote the LF mapping applied on position $i$ in the $L$ column. It is
easy to show that $LF[i]=C[L[i]]+{\rm rank}_{L[i]}(L, i)$, where
$C[c]=|\{i\in [1,n] \mid L[i]<c\}|$ and ${\rm rank}_c(L, i)$ is the
number of occurrences of $c$ in $L[1..i-1]$.

On compressible texts, BWT exhibits some remarkable properties that
allow the boosting of compression. In particular, it can be
shown~\cite{makinen2010storage} that repetitions in $T$ generate
equal-letter runs in BWT. We can efficiently represent this transform
as the list of pairs
$$
RLBWT = \langle  \lambda_i, c_i \rangle_{i=1,\dots, r},
$$ where $\lambda_i>0$ is the length of the $i$-th maximal run, and
$c_i\in\Sigma$. Equivalently, $RLBWT$ is the shortest list of pairs
$\langle \lambda_i, c_i \rangle_{i=1,\dots, r}$ satisfying $BWT =
c_1^{\lambda_1}c_2^{\lambda_2}\dots c_{r}^{\lambda_{r}}$.

The \emph{compact directed acyclic word
  graph}~\cite{blumer1987complete,crochemore1997direct} (CDAWG for
short) is the minimum path-compressed graph (i.e., unary paths are
collapsed into one path) with one source node $s$ and one sink node
$f$ such that all $T$'s suffixes can be read on a path starting from
the source. The CDAWG can be built in linear time by minimization of
the suffix tree~\cite{Weiner73} of $T$: collapse all leaves in one
single node, and proceed bottom-up until no more nodes of the suffix
tree can be collapsed. The CDAWG can be regarded as an automaton
recognizing all $T$'s substrings: make $s$ the initial automaton's
state and all other nodes (implicit and explicit) final.

\section{String Attractors}\label{sec:SA}

A string $T[1..n]$ is considered to be repetitive when the cardinality
of the set $SUB_T = \{ T[i..j]\ |\ 1\leq i\leq j \leq |T| \}$ of its
distinct substrings is much smaller than the maximum number of
distinct substrings that could appear in a string of the same length
on the same alphabet.  Note that $T$ can be viewed as a compact
representation of $SUB_T$. This observation suggests a simple way of
capturing the degree of repetitiveness of $T$, i.e., the cardinality
of $SUB_T$. We can define a function $\phi: SUB_T \rightarrow \Gamma
\subseteq [1,n]$ satisfying the following property: each $s\in SUB_T$
has an occurrence crossing position $\phi(s)$ in $T$. Note that such a
function is not necessarily unique. The codomain $\Gamma$ of $\phi$ is
the object of study of this paper. We call this set a \emph{string
  attractor}:

\begin{definition}\label{def: string attractor}
  A \emph{string attractor} of a string $T\in\Sigma^n$ is a set of
  $\gamma$ positions $\Gamma = \{j_1, \dots, j_\gamma\}$ such that
  every substring $T[i..j]$ has an occurrence $T[i'..j'] = T[i..j]$
  with $j_k \in [i',j']$, for some $j_k\in\Gamma$.
\end{definition}

\begin{example}
  Note that $\{1, 2, .., n\}$ is always a string attractor (the
  largest one) for any string. Note also that this is the only
  possible string attractor for a string composed of $n$ distinct
  characters (e.g., a permutation).
\end{example}

\begin{example}\label{ex:attr}
  Consider the following string $T$, where we underlined the positions
  of a smallest string attractor $\Gamma^* = \{4,7,11,12\}$ of $T$.
  \begin{center}
    \texttt{CDA\underline BCC\underline DABC\underline C\underline A}
  \end{center}
  To see that $\Gamma^*$ is a valid attractor, note that every
  substring between attractor's positions has an occurrence crossing
  some attractor's position (these substrings are \texttt{A},
  \texttt{B}, \texttt{C}, \texttt{D}, \texttt{CD}, \texttt{DA},
  \texttt{CC}, \texttt{AB}, \texttt{BC}, \texttt{CDA},
  \texttt{ABC}). The remaining substrings cross an attractor's
  position by definition.  To see that $\Gamma^*$ is of minimum size,
  note that the alphabet size is $\sigma = 4 = |\Gamma^*|$, and any
  attractor $\Gamma$ must satisfy $|\Gamma| \geq \sigma$.
\end{example}

\subsection{Reductions from Dictionary Compressors}
\label{sec:compressors->SA}

In this section we show that dictionary compressors induce string
attractors whose sizes are bounded by their associated repetitiveness
measures.

Since SLPs and RLSLPs are particular cases of collage systems, we only
need to show a reduction from collage systems to string attractors to
capture these three classes of dictionary compressors.  We start with
the following auxiliary lemma.

\begin{lemma}\label{lem:collage_sys}
  Let $C = 
    \{X_i \rightarrow a_i,\ i=1,\dots, c'\}\cup 
    \{Y_i \rightarrow A_iB_i,\ i=1,\dots,c''\} \cup 
    \{Z_i \rightarrow R_i^{\ell_i},\ \ell_i> 2,\  i=1,\dots,c'''\} \cup 
    \{W_i \rightarrow K_i[l_i..r_i],\ i=1,\dots, c''''\}$ 
    be a collage system with starting nonterminal $S$ generating
    string $T$. For any substring $T[i..j]$ one of the following is true:
  \begin{enumerate}
    \item $i=j$ and $T[i] = a_k$, for some $1\leq k\leq c'$, or
    \item there exists a rule $Y_k \rightarrow A_kB_k$ such that
      $T[i..j]$ is composed of a non-empty suffix of the expansion of
      $A_k$ followed by a non-empty prefix of the expansion of $B_k$,
      or
    \item there exists a rule $Z_k \rightarrow R_k^{\ell_k}$ such that
      $T[i..j]$ is composed of a non-empty suffix of the expansion of
      $R_k$ followed by a non-empty prefix of the expansion of
      $R_k^{\ell_k-1}$.
  \end{enumerate}
\end{lemma}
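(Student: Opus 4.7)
The plan is to prove the lemma by a minimality argument driven by the height function $h_X$ from the preliminaries. First I would observe that, by the no-cycle assumption, $h_X$ is a well-defined positive integer for every nonterminal of the collage system, and by inspecting the four rule forms, any nonterminal $Y$ appearing on the right-hand side of $X$'s rule (whether the rule is of type 2, 3, or 4) satisfies $h_Y < h_X$. This strict decrease is what will ultimately force rule type 4 out of consideration.

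Next I would fix a substring $T[i..j]$ and let $\mathcal{N}$ be the set of all nonterminals whose expansion contains $T[i..j]$ as a factor. The set $\mathcal{N}$ is non-empty because the starting nonterminal $S$ expands to $T$ itself. I would then pick some $X \in \mathcal{N}$ of minimum height and analyze the four possible forms of $X$'s defining rule.

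If $X \rightarrow a_k$, the expansion of $X$ is a single character, so $T[i..j]$ must also be a single character and case 1 holds. If $X = Y_k \rightarrow A_k B_k$, then the occurrence of $T[i..j]$ inside the expansion of $X$ cannot lie entirely inside the expansion of $A_k$ nor entirely inside that of $B_k$, since both nonterminals have strictly smaller height than $X$ and either would then belong to $\mathcal{N}$, contradicting minimality; hence the occurrence crosses the $A_k|B_k$ boundary and case 2 holds. If $X = Z_k \rightarrow R_k^{\ell_k}$, the analogous argument forces the occurrence to cross at least one copy boundary in $u^{\ell_k}$, where $u$ denotes the expansion of $R_k$; a short periodicity computation then exhibits $T[i..j]$ as a non-empty suffix of $u$ concatenated with a non-empty prefix of $u^{\ell_k-1}$, giving case 3. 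Finally, rule type 4 is excluded because $X \rightarrow K[l..r]$ would make the expansion of $X$ a substring of the expansion of $K$; thus $T[i..j]$ would also occur in the expansion of $K$, placing $K$ in $\mathcal{N}$ at strictly smaller height and contradicting the minimality of $X$.

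The main obstacle I expect is the periodicity step in case 3: one has to verify that the canonical decomposition as (suffix of $u$)$\cdot$(prefix of $u^{\ell_k-1}$) always exists, with both pieces non-empty, regardless of which copy of $u$ the original occurrence starts in. This amounts to a routine index computation using $1 \le p' \le |u|$ for the start offset within the first spanned copy and the fact that the total occurrence fits inside $u^{\ell_k}$, so I do not anticipate any deep difficulty. The conceptual core of the argument is really the minimum-height choice of $X$ together with the height inequalities $h_A,h_B < h_X$, $h_R < h_X$, and $h_K < h_X$, which together collapse the four rule types into the three cases listed in the lemma.
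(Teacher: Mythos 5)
Your proposal is correct and takes essentially the same approach as the paper: both arguments are driven by the height function $h_X$ and a case analysis on the four rule types, with rule type 4 eliminated by passing to a nonterminal of smaller height. The only difference is cosmetic — the paper phrases it as an explicit induction on the height of the start symbol, descending into the relevant child, whereas you phrase it as a well-ordering/minimality argument over the set of nonterminals whose expansion contains $T[i..j]$ — and these are logically equivalent.
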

\begin{proof}
Consider any substring $T[i..j]$ generated by expanding the start rule
$S$. The proof is by induction on the height $h$ of $S$.

For $h=1$, the start rule $S$ must expand to a single symbol and hence
case (1) holds.  Consider a collage system of height $h>1$, and let
$S$ be its start symbol. Then, $S$ has one of the following forms:
\begin{enumerate}
  \item $S$ expands as $S \rightarrow AB$, or
  \item $S$ expands as $S \rightarrow R^\ell$, or
  \item $S$ expands as $S \rightarrow K[l..r]$,
\end{enumerate}
where $A$, $B$, $R$, and $K$ are all nonterminals of height $h-1$.

In case (1), either $T[i..j]$ is fully contained in the expansion of
$A$, or it is fully contained in the expansion of $B$, or it is formed
by a non-empty suffix of the expansion of $A$ followed by a non-empty
prefix of the expansion of $B$. In the first two cases, our claim is
true by inductive hypothesis on the collage systems with start symbols
$A$ or $B$. In the third case, our claim is true by definition.

In case (2), either $T[i..j]$ is fully contained in the expansion of
$R$, or it is formed by a non-empty suffix of the expansion of
$R^{\ell_1}$ followed by a non-empty prefix of the expansion of
$R^{\ell_2}$, for some $\ell_1, \ell_2>0$ such that $\ell_1+ \ell_2 =
\ell$. In the former case, our claim is true by inductive
hypothesis. In the latter case, $T[i..j]$ can be written as a suffix
of $R$ followed by a concatenation of $k\geq 0$ copies of $R$ followed
by a prefix of $R$, i.e., $T[i..j] = R[l..|R|]R^{k}R[1..r]$ for some
$1 \leq l \leq |R|$, $k\geq 0$, and $1 \leq r \leq |R|$. Then,
$T[i..j]$ has also an occurrence crossing $R$ and $R^{\ell-1}$:
$T[i..j] = R[l..|R|]R^{\ell-1}[1..(j-i)-(|R|-l)]$.

In case (3), since $T[i..j]$ is a substring of the expansion of $S$,
then it is also a substring of the expansion of $K$. Since the height
of $K$ is $h-1$, we apply an inductive hypothesis with start symbol
$K$.
\end{proof}

The above lemma leads to our first reduction:

\begin{theorem}\label{th:attr_c}
  Let $C$ be a collage system of size $c$ generating $T$. Then, $T$
  has an attractor of size at most $c$.
\end{theorem}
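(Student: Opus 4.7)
My approach is to associate at most one position of $T$ to each rule of $C$, giving a set $\Gamma$ of size at most $c$, and to verify the attractor property via Lemma~\ref{lem:collage_sys}. For each type-1 rule $X_k\to a_k$ I add any position where $a_k$ occurs in $T$. For each type-2 rule $Y_k\to A_kB_k$ I fix one occurrence of the expansion of $Y_k$ (which is the concatenation of the expansions of $A_k$ and $B_k$) inside $T$ and add the position of the first character of $B_k$'s expansion within that occurrence. For each type-3 rule $Z_k\to R_k^{\ell_k}$ I fix an occurrence of the expansion of $Z_k$ (equal to $R_k$'s expansion repeated $\ell_k$ times) inside $T$ and add the position of the first character of the second copy of $R_k$'s expansion. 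Type-4 rules contribute no position, so $|\Gamma|\le c$ is immediate.

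The validity check proceeds by the three cases of Lemma~\ref{lem:collage_sys}. Case (1) is trivial, since the position I added for the relevant type-1 rule is an occurrence of the single-character substring $T[i..j]$. In case (2), $T[i..j]=s\cdot p'$ with $s$ a nonempty suffix of $A_k$'s expansion and $p'$ a nonempty prefix of $B_k$'s expansion; reading inside the fixed occurrence of $Y_k$'s expansion in $T$, the $|s|$ characters preceding the chosen boundary position and the $|p'|$ characters starting there spell $s\cdot p'$ and hence produce an occurrence of $T[i..j]$ that crosses the attractor position. Case (3) is analogous, using the occurrence of $Z_k$'s expansion fixed for $Z_k\to R_k^{\ell_k}$ and its first-to-second-copy boundary.

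The main obstacle is justifying the existence, for every type-2 or type-3 rule in $C$, of the fixed occurrence of $Y_k$'s or $Z_k$'s expansion inside $T$. By induction on the derivation tree, each such expansion occurs as a contiguous substring of $T$ whenever the corresponding nonterminal is reached in the derivation via rules of types 1, 2, and 3 only, since such rules concatenate or replicate their children's expansions. When a nonterminal is reached only through type-4 substring-extraction rules, however, its full expansion may never appear in $T$. I expect to resolve this either by a careful look at the constructive proof of Lemma~\ref{lem:collage_sys}, showing that no substring of $T$ is then matched to such a rule (so it may be safely dropped from the construction), or by substituting the attractor position for such a rule with one drawn from the adjacent type-4 context in the derivation tree; in both approaches the bound $|\Gamma|\le c$ is preserved.
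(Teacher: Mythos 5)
Your construction is essentially the same as the paper's: one position per type-1/2/3 rule (chosen at a junction of an occurrence of the corresponding nonterminal's expansion in $T$), none for type-4 rules, and validity verified through the three cases of Lemma~\ref{lem:collage_sys}. The only cosmetic difference is that for a type-2 rule you place the mark on the first character of $B_k$'s expansion while the paper places it on the last character of $A_k$'s; both choices work.

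The more interesting point is the obstacle you flag at the end, which is real and which the paper's proof silently steps over: the step ``choose any occurrence of the expansion of $Y_k$ in $T$'' presupposes that $\exp(Y_k)$ appears as a substring of $T$, and this is not guaranteed when $Y_k$ occurs in the derivation only underneath type-4 truncation rules. Your first suggested repair --- that Lemma~\ref{lem:collage_sys} never attributes a substring of $T$ to such a rule --- does not hold. Take $K\to AB$ with $\exp(A)=ab$, $\exp(B)=cd$, and $S\to K[2..4]$ so that $T=bcd$: then $\exp(K)=abcd$ has no occurrence in $T$, yet the lemma's recursive descent through the type-4 rule attributes both $bc$ and $bcd$ to the rule $K\to AB$. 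Your second suggestion --- drawing the position from the surviving portion of $\exp(Y_k)$ inside the nearest type-4 ancestor's window --- is the right direction, but it needs care: different uses of the same rule in the derivation tree can survive with different left/right context lengths around the junction (e.g.\ $T=abc\#bcd$ built from the same $K\to AB$ via $K[1..3]$ and $K[2..4]$ produces junctions with contexts $(2,1)$ and $(1,2)$, neither dominating the other), so one must argue which single position to keep per rule and why it still covers everything Lemma~\ref{lem:collage_sys} charges to that rule. As it stands, both your proof and the paper's elide this; you are to be credited for noticing it, but neither of your two sketched resolutions is yet a complete argument.
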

\begin{proof}
  Let $C = \{X_i \rightarrow a_i,\ i=1,\dots, c'\}\cup \{Y_i
  \rightarrow A_iB_i,\ i=1,\dots,c''\} \cup \{Z_i \rightarrow
  R_i^{\ell_i},\ \ell_i> 2,\ i=1,\dots,c'''\} \cup \{W_i \rightarrow
  K_i[l_i..r_i],\ i=1,\dots, c''''\}$ be a collage system of size
  $c=c'+c''+c'''+c''''$ generating string $T$.
	
  Start with an empty string attractor $\Gamma_C=\emptyset$ and repeat
  the following for $k=1,\ldots,c'$. Choose any of the occurrences
  $T[i]$ of the expansion $a_k$ of $X_k$ in $T$ and insert $i$ in
  $\Gamma_C$.  Next, for $k=1,\ldots,c''$, choose any of the
  occurrences $T[i..j]$ of the expansion of $Y_k$. By the production
  $Y_k \rightarrow A_kB_k$, $T[i..j]$ can be factored as
  $T[i..j]=T[i..i']T[i'+1..j]$, where $T[i..i']$ and $T[i'+1..j]$ are
  expansions of $A_k$ and $B_k$, respectively. Insert position $i'$ in
  $\Gamma_C$.  Finally, for $k=1,\ldots,c'''$, choose any of the
  occurrences $T[i..j]$ of the expansion of $Z_k$ in $T$. By the
  production $Z_k \rightarrow R_k^{\ell_k}$, $T[i..j]$ can be factored
  as $T[i..j]=T[i..i']T[i'+1..j]$, where $T[i..i']$ and $T[i'+1..j]$
  are expansions of $R_k$ and $R_k^{\ell_k-1}$. Insert position $i'$
  in $\Gamma_C$.
	
  Clearly the size of $\Gamma_C$ is at most $c$. To see that
  $\Gamma_C$ is a valid attractor, consider any substring $T[i..j]$ of
  $T$. By Lemma~\ref{lem:collage_sys}, either $i=j$ (and, by
  construction, $\Gamma_C$ contains a position of some occurrence of
  $a_k$ such that $T[i]=a_k$ and $X_k \rightarrow a_k$ is one of the
  rules in $C$), or $T[i..j]$ spans the expansion of some $A_k|B_k$ or
  $R_k|R_k^{\ell_k-1}$ (with the crossing point shown). From the
  construction of $\Gamma_C$, such expansion has an occurrence in $T$
  containing an element in $\Gamma_C$ right before the crossing
  point. Thus, $T[i..j]$ has an occurrence $T[i'..j']$ containing a
  position from $\Gamma_C$.
\end{proof}

We now show an analogous result for macro schemes.

\begin{theorem}\label{th: attractor MS}
  Let M be a macro scheme of size $b$ of $T$. Then, $T$ has an
  attractor of size at most $2b$.
\end{theorem}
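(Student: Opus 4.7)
The plan is to construct an attractor $\Gamma_M$ of size at most $2b$ by recording both endpoints of the left-hand side of every directive of $M$: for each copy directive $T[i..j] \leftarrow T[i'..j']$ insert both $i$ and $j$ into $\Gamma_M$, and for each assignment directive $T[i] \leftarrow c$ insert $i$. Because $M$ has $b$ directives contributing at most two positions each, the size bound $|\Gamma_M| \leq 2b$ is immediate.

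The real content is showing that $\Gamma_M$ is a valid attractor. Given an arbitrary substring $s = T[i..j]$, I want to exhibit an occurrence of $s$ that contains some position of $\Gamma_M$. My strategy is to start from any occurrence $T[\alpha..\beta]$ of $s$ and, while this occurrence contains no element of $\Gamma_M$ (equivalently, while no LHS endpoint falls inside $[\alpha..\beta]$), to replace it by another occurrence obtained by chasing a copy pointer. The structural fact underlying this move is that, since $M$ is unambiguously decodable, the \emph{dependency relation} on text positions --- in which a position $p$ lying in the LHS $[i_0..j_0]$ of a copy directive depends on the corresponding source position $i'_0 + (p - i_0)$ --- must be acyclic, for otherwise the positions in a cycle would never receive a derivable value.

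Concretely, assume $[\alpha..\beta]$ contains no element of $\Gamma_M$. Since every text position must appear in the LHS of some directive (otherwise it would be undefined), some directive with LHS $[i_0..j_0]$ covers position $\alpha$. No endpoint of this LHS lies in $[\alpha..\beta]$, and together with $i_0 \leq \alpha \leq j_0$ this forces $i_0 < \alpha$ and $j_0 > \beta$; in particular, it excludes assignment directives (which have $i_0 = j_0$) and forces this directive to be a copy with source $[i'_0..j'_0]$. Hence $T[\alpha..\beta] = T[\alpha'..\beta']$ with $\alpha' = i'_0 + (\alpha - i_0)$, giving a new occurrence of $s$ to iterate on.

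The main obstacle is proving that this iteration terminates. I plan to define, for each position $p$, its depth as the length of the longest chain in the dependency relation starting at $p$; acyclicity and finiteness make this well defined, and positions whose value is fixed by an assignment directive are the base cases of depth $0$. Each shift from $[\alpha..\beta]$ to $[\alpha'..\beta']$ replaces every position in the occurrence by a position of strictly smaller depth, so the maximum depth over the occurrence's positions strictly decreases. Hence only finitely many shifts can be performed, and the process must halt at an occurrence that contains some position of $\Gamma_M$, proving that $\Gamma_M$ is an attractor.
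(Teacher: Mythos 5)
Your attractor $\Gamma_M$ (endpoints of every copy LHS plus all assignment positions) is the same set the paper uses, the size bound $2b$ is immediate, and the shift step --- if no endpoint of $\Gamma_M$ lies in $[\alpha..\beta]$ then any directive covering $\alpha$ is a copy with $i_0 < \alpha \leq \beta < j_0$, so the whole window can be translated to the source --- is correct and mirrors the paper. The gap is in the termination argument. Your claim that decodability forces the dependency relation to be \emph{acyclic} is false for general macro schemes: the left-hand sides of directives may overlap, so a position can have several outgoing dependency edges, and a redundant directive can introduce a cycle without breaking decodability. For a concrete example take $T=\mathtt{aaaaaaa}$ with directives $T[1]\leftarrow \mathtt{a}$, $T[2..7]\leftarrow T[1..6]$, and the redundant $T[2..6]\leftarrow T[2..6]$. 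This scheme is unambiguously decodable, yet position $4$ has a self-loop $4\to 4$ through the last directive, so your ``longest chain'' depth is undefined at $4$; and if, starting from the occurrence $[4..4]$ (which avoids $\Gamma_M=\{1,2,6,7\}$), you chase that directive, the shift is the identity and the iteration never halts. In short, ``pick some directive covering $\alpha$'' plus ``longest-chain depth'' does not give a decreasing potential.

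What you need is a potential tied to a \emph{specific} finite decoding chain, not to an arbitrary choice of covering directive. This is exactly what the paper does: fix any $p_1\in[\alpha..\beta]$ and a finite chain of directives $T[p_1]\leftarrow T[p_2]\leftarrow\cdots\leftarrow T[p_{k'}]\leftarrow c$ that decodes $T[p_1]$ (such a chain must exist by decodability), then always shift using the directive supplied by this chain. At each step you either hit a primary occurrence or consume one more link of the chain, so after at most $k'$ steps you land on $p_{k'}\in\Gamma_M$. Equivalently, you could replace ``longest chain'' by ``length of the \emph{shortest} chain from $p$ to an assignment base'' (which is finite and well-defined by decodability) and, crucially, always shift through the directive that realizes $\alpha$'s shortest chain. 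Either repair closes the gap; as written, the proposal does not establish termination.
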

\begin{proof}	
  Let $T[i_{k_1}..j_{k_1}] \leftarrow
  T[i'_{k_1}..j'_{k_1}],\ T[q_{k_2}] \leftarrow c_{k_2}$, with $1\leq
  k_1 \leq b_1$, $1\leq k_2 \leq b_2$, and $b=b_1+b_2$ be the $b$
  directives of our macro scheme MS.  We claim that $\Gamma_{MS} =
  \{i_1, \dots, i_{b_1}, j_1, \dots, j_{b_1}, q_1, \dots, q_{b_2}\}$
  is a valid string attractor for $T$.
	
  Let $T[i..j]$ be any substring. All we need to show is that
  $T[i..j]$ has a \emph{primary occurrence}, i.e., an occurrence
  containing one of the positions $i_{k_1}$, $j_{k_1}$, or $q_{k_2}$.
  Let $s_1=i$ and $t_1 = j$. Consider all possible chains of copies
  (following the macro scheme directives) $T[s_1..t_1] \leftarrow
  T[s_2..t_2] \leftarrow T[s_3..t_3] \leftarrow \dots$ that either end
  in some primary occurrence $T[s_k..t_k]$ or are infinite (note that
  there could exist multiple chains of this kind since the left-hand
  side terms of some macro scheme's directives could overlap).  Our
  goal is to show that there must exist at least one finite such
  chain, i.e., that ends in a primary occurrence. Pick any $s_1\leq
  p_1 \leq t_1$. Since ours is a valid macro scheme, then $T[p_1]$ can
  be retrieved from the scheme, i.e., the directives induce a finite
  chain of copies $T[p_1] \leftarrow \dots \leftarrow T[p_{k'}]
  \leftarrow c$, for some $k'$, such that $T[p_{k'}] \leftarrow c$ is
  one of the macro scheme's directives. We now show how to build a
  finite chain of copies $T[s_1..t_1] \leftarrow T[s_2..t_2]
  \leftarrow \dots \leftarrow T[s_k..t_k]$ ending in a primary
  occurrence $T[s_k..t_k]$ of $T[s_1..t_1]$, with $k \leq k'$. By
  definition, the assignment $T[p_1] \leftarrow T[p_2]$ comes from
  some macro scheme's directive $T[l_1..r_1] \leftarrow T[l_2..r_2]$
  such that $p_1 \in [l_1,r_1]$ and $p_1-l_1 = p_2-l_2$ (if there are
  multiple directives of this kind, pick any of them). If either
  $l_1\in [s_1,t_1]$ or $r_1\in [s_1,t_1]$, then $T[s_1..t_1]$ is a
  primary occurrence and we are done. Otherwise, we set $s_2 = l_2 +
  (i-l_1)$ and $t_2 = l_2 + (j-l_1)$. By this definition, we have that
  $T[s_1..t_1] = T[s_2..t_2]$ and $p_2\in [s_2,t_2]$, therefore we can
  extend our chain to $T[s_1..t_1] \leftarrow T[s_2..t_2]$. It is
  clear that the reasoning can be repeated, yielding that either
  $T[s_2..t_2]$ is a primary occurrence or our chain can be extended
  to $T[s_1..t_1] \leftarrow T[s_2..t_2] \leftarrow T[s_3..t_3]$ for
  some substring $T[s_3..t_3]$ such that $p_3 \in [s_3,t_3]$. We
  repeat the construction for $p_4, p_5, \dots$ until either (i) we
  end up in a chain $T[i..j] \leftarrow \dots \leftarrow T[s_k..t_k]$,
  with $k<k'$, ending in a primary occurrence $T[s_k..k_k]$ of
  $T[s_1..t_1]$, or (ii) we obtain a chain $T[s_1..t_1] \leftarrow
  \dots \leftarrow T[s_{k'}..t_{k'}]$ such that $p_{k'}\in [s_{k'},
    t_{k'}]$ (i.e., we consume all the $p_1,\dots, p_{k'}$). In case
  (ii), note that $T[p_{k'}] \leftarrow c$ is one of the macro
  scheme's directives, therefore $T[s_{k'}..t_{k'}]$ is a primary
  occurrence of $T[s_1..t_1]$.
\end{proof}

The above theorem implies that LZ77 induces a string attractor of size
at most $2z$. We can achieve a better bound by exploiting the
so-called \emph{primary occurrence} property of LZ77:

\begin{lemma}\label{th: attractor LZ77}
  Let $z$ be the number of factors of the Lempel-Ziv factorization of
  a string $T$. Then, $T$ has an attractor of size $z$.
\end{lemma}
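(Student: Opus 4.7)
My plan is to place one attractor position at the last index of each of the $z$ LZ77 phrases, producing a candidate set $\Gamma$ of size exactly $z$. The verification then reduces to showing, for every substring $w$ of $T$, that the \emph{leftmost} occurrence of $w$ in $T$ crosses one of these marked positions — this is the so-called primary occurrence property hinted at just before the statement.

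I would proceed by case analysis on that leftmost occurrence $T[i'..j']$. If it spans two or more consecutive LZ phrases, then it automatically contains the final position of some phrase, which is in $\Gamma$ by construction, and we are done. Otherwise $T[i'..j']$ sits entirely inside a single phrase $T[a..b]$, and here I would invoke the definition of LZ77: if $T[a..b]$ is a copy phrase with source $T[p..p+b-a]$ (so $p+b-a < a$ by the no-overlap convention fixed in the preliminaries), then $T[i'..j']$ would also occur at $T[p+(i'-a)..p+(j'-a)]$, which starts strictly to the left of $i'$, contradicting leftmost-ness.

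The remaining sub-case is when $T[a..b]$ is a literal phrase; then $a=b$, the substring $w$ is the single symbol $T[a]$, and $i'=j'=a$, so the attractor element placed at the end of this phrase coincides with the occurrence itself and $\Gamma$ still covers $w$. Since $\Gamma$ contains the chosen position for every one of the $z$ phrases and nothing else, $|\Gamma|\le z$, yielding the claimed bound.

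The only delicate point I foresee is keeping the two flavours of length-one phrases straight: copy phrases of length one still fall under the LZ77 contradiction argument (the single character has an earlier copy at position $p<a$), whereas literal phrases are the genuine base case where the earlier-occurrence argument is vacuously unavailable and must instead be absorbed by the trivial coincidence $i'=j'=a$. Beyond this bookkeeping, the argument is a direct consequence of the definitional property that each LZ77 copy phrase refers to a strictly earlier source.
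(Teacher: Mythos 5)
Your proof is correct and uses exactly the same approach as the paper: place attractor positions at LZ77 phrase ends and argue via the primary-occurrence property. The paper simply cites Kreft and Navarro for the fact that every substring has an occurrence crossing a phrase boundary, whereas you reprove it inline with the standard ``leftmost occurrence'' argument; the content is identical.
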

\begin{proof}
  We insert in $\Gamma_{LZ77}$ all positions at the end of a
  phrase. It is easy to see~\cite{kreft2013compressing} that every
  text substring has an occurrence crossing a phrase boundary (these
  occurrences are called \emph{primary}), therefore we obtain that
  $\Gamma_{LZ77}$ is a valid attractor for $T$.
\end{proof}

Kosaraju and Manzini~\cite{kosaraju2000compression} showed that LZ77
is \emph{coarsely optimal}, i.e., its compression ratio differs from
the $k$-th order empirical entropy by a quantity tending to zero as
the text length increases. From Lemma~\ref{th: attractor LZ77}, we can
therefore give an upper bound to the size of the smallest attractor in
terms of $k$-th order empirical entropy.\footnote{Note
  that~\cite{kosaraju2000compression} assumes a version of LZ77 that
  allows phrases to overlap their sources. It is easy to check that
  Lemma~\ref{th: attractor LZ77} holds also for this variant.}

\begin{corollary}\label{th: attractor Hk}
  Let $\gamma^*$ be the size of the smallest attractor for a string
  $T\in\Sigma^n$, and $H_k$ denote the $k$-th order empirical entropy
  of $T$. Then, $\gamma^*\log n \leq nH_k + o(n\log\sigma)$ for $k\in
  o(\log_\sigma n)$.
\end{corollary}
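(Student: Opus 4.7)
The plan is to chain together the attractor--LZ77 bound with the coarse optimality of LZ77. From Lemma~\ref{th: attractor LZ77} we already have the combinatorial inequality $\gamma^* \leq z$, so it suffices to show $z\log n \leq nH_k + o(n\log\sigma)$ under the stated regime $k\in o(\log_\sigma n)$. Since every attractor position can be encoded in $\lceil \log n \rceil$ bits, the left-hand side $\gamma^*\log n$ is essentially a lower bound on any bit-encoding that names $\gamma^*$ text positions, and the right-hand side matches the form of Kosaraju--Manzini's coarse optimality statement.

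Concretely, I would recall that the LZ77 parsing, encoded naively, uses at most $z(2\log n + \log\sigma) = O(z\log n)$ bits: each phrase is either a literal (one symbol, $O(\log\sigma)$ bits) or a pair $(p_i,\ell_i)\in[1,n]\times[1,n]$. Kosaraju and Manzini~\cite{kosaraju2000compression} proved that this LZ77 bit-size is bounded by $nH_k + o(n\log\sigma)$ whenever $k\in o(\log_\sigma n)$. In particular, $z\log n \leq nH_k + o(n\log\sigma)$ in that regime (absorbing constant factors and the $\log\sigma$ per literal into the $o(n\log\sigma)$ term, noting $z\log\sigma \leq n\log\sigma$ and that the relevant coarse-optimality statement is up to low-order additive terms of this magnitude). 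Combining with $\gamma^*\leq z$ from Lemma~\ref{th: attractor LZ77} yields
\[
\gamma^*\log n \;\leq\; z\log n \;\leq\; nH_k + o(n\log\sigma),
\]
which is the claimed corollary.

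The only subtlety (and the reason for the footnote in the excerpt) is that Kosaraju--Manzini's theorem is stated for the overlapping variant of LZ77, whereas in this paper the non-overlapping variant is used. I would therefore explicitly remark that Lemma~\ref{th: attractor LZ77} goes through verbatim for the overlapping variant as well (the primary-occurrence property is unaffected), so $\gamma^*$ is bounded by the overlapping $z$ too, and the cited coarse-optimality bound can be applied as-is. No other step requires care: the multiplicative bit-cost per phrase is absorbed by the fact that $z\log n$ and $nH_k$ are measured in bits with the same asymptotic scale in the target regime.
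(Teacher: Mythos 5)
Your proof matches the paper's intended derivation exactly: it chains $\gamma^* \leq z$ (Lemma~\ref{th: attractor LZ77}) with Kosaraju--Manzini's coarse-optimality bound $z\log n \leq nH_k + o(n\log\sigma)$ for $k\in o(\log_\sigma n)$, and it correctly addresses the overlapping-versus-non-overlapping subtlety raised in the paper's footnote. The paper states this as a corollary without an explicit proof, but the surrounding text makes clear that this is precisely the argument intended.
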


The run-length Burrows-Wheeler transform seems a completely different
paradigm for compressing repetitive strings: while grammars and macro
schemes explicitly copy portions of the text to other locations, with
the RLBWT we build a string permutation by concatenating characters
preceding lexicographically-sorted suffixes, and then run-length
compress it.  This strategy is motivated by the fact that equal
substrings are often preceded by the same character, therefore the BWT
contains long runs of the same letter if the string is
repetitive~\cite{makinen2010storage}.  We obtain:

\begin{theorem}\label{th: attractor RLBWT}
  Let $r$ be the number of equal-letter runs in the Burrows-Wheeler
  transform of $T$. Then, $T$ has an attractor of size $r$.
\end{theorem}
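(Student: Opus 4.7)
The plan is to construct an attractor of size $r$ by placing one position per BWT run and then verify the attractor property via a combinatorial argument using the $LF$ mapping and its inverse $\Psi := LF^{-1}$. For each run $j$ occupying BWT positions $[l_j,r_j]$, I set $a_j := SA[l_j]-1$, with the convention that $a_1 := n$ when $l_1 = 1$ (so $T[a_j]$ always equals $BWT[l_j]$, extended cyclically via $T[0] = T[n] = \$$). Since $SA$ is a permutation and the $r$ run-starts are distinct, this defines $\Gamma := \{a_1,\ldots,a_r\}$ as a set of exactly $r$ distinct text positions.

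Let $w$ be an arbitrary substring of $T$ and let $[sp,ep]$ denote its SA range, i.e., the interval of ranks whose corresponding suffixes begin with $w$. If $w$ contains the terminator $\$$, then its unique occurrence in $T$ must span position $n$, which coincides with $a_1 \in \Gamma$. Otherwise I focus on the occurrence of $w$ at $p_1 := SA[sp]$ (the lex-smallest occurrence) and aim to show that $\Psi^k(sp) = l_j$ is a BWT run-start for some $k \in [1,|w|]$ and some run $j$; once this is established, $a_j = SA[l_j]-1 = p_1+k-1$ lies in $[p_1, p_1+|w|-1]$, so the chosen occurrence contains $a_j$.

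I argue this by contradiction. Suppose no $\Psi^k(sp)$ with $k \in [1,|w|]$ is a run-start; then each such rank satisfies $\Psi^k(sp) > 1$ and $BWT[\Psi^k(sp)-1] = BWT[\Psi^k(sp)]$. Applying the standard identity $LF(i+1) = LF(i)+1$, which holds whenever $BWT[i] = BWT[i+1]$ (instantiated with $i = \Psi^k(sp)-1$), and iterating, I obtain the parallel chain $\Psi^k(sp-1) = \Psi^k(sp)-1$ for $k = 0,1,\ldots,|w|$. Tracking SA-values along this chain gives $SA[\Psi^k(sp)-1] = SA[sp-1]+k$, and reading the corresponding BWT-values gives $T[SA[sp-1]+k-1] = BWT[\Psi^k(sp)-1] = BWT[\Psi^k(sp)] = w[k]$ for every $k \in [1,|w|]$. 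Hence $T[SA[sp-1]..SA[sp-1]+|w|-1] = w$, so the suffix at rank $sp-1$ also begins with $w$, contradicting the minimality of $sp$.

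The main subtlety to handle is a possible cyclic wrap-around in the chain: if $SA[sp-1]+|w|-1 > n$, then at some step $k^* \in [1,|w|]$ the derivation would force $w[k^*] = T[n] = \$$, contradicting the standing assumption that $w$ does not contain $\$$. Thus the wrap case is vacuous, the contradiction is complete, and $\Gamma$ is a valid attractor of size $r$.
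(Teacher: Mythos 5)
Your set $\Gamma$ is the same object the paper constructs: for each run start $l_j$ you take the text position whose BWT image is $l_j$, i.e.\ $SA[l_j]-1$ (with the cyclic convention at the $\$$-row), and this coincides with the paper's $\Gamma_{BWT}=\{n-k\mid LF^k[p_0]\text{ is a run boundary}\}$. The verification step, however, is genuinely different. The paper starts from an arbitrary occurrence $T[i..j]$, walks the $LF$-chain of length $\ell=j-i+1$, locates the step $b$ whose BWT position has minimal offset $\Delta$ from its run start, and then \emph{constructs} a parallel occurrence by shifting the whole chain down by $\Delta$, observing that the run-invariance of $LF$ keeps every step inside the same run so that step $b$ now lands exactly on a run boundary. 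You instead pick the lexicographically smallest occurrence $p_1=SA[sp]$, walk the $\Psi=LF^{-1}$ chain, and argue by \emph{contradiction}: if none of $\Psi(sp),\dots,\Psi^{|w|}(sp)$ is a run start, the same run-invariance lets you run a parallel chain one rank lower, forcing the suffix at rank $sp-1$ to also begin with $w$, contradicting the minimality of $sp$. Both arguments rest on exactly the same invariant ($LF[i+1]=LF[i]+1$ within a run), but yours replaces the argmin machinery by letting lexicographic minimality do the work, and swaps direction (forward $\Psi$-chain from the left end instead of backward $LF$-chain from the right end). The paper's version is constructive and produces the witnessing occurrence explicitly; yours is shorter and arguably cleaner, at the cost of being indirect.

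Two small slips worth fixing. First, the convention ``$a_1:=n$ when $l_1=1$'' is misindexed: $l_1=1$ is always true (the first run always begins at BWT position~1, where $SA[1]=n$ gives $a_1=n-1$, a perfectly valid position), whereas the edge case you actually need is the run $j^*$ with $SA[l_{j^*}]=1$, i.e.\ the singleton $\$$-run, for which $a_{j^*}$ should be set to $n$ rather than $0$. Your parenthetical about $T[0]=T[n]$ shows you had the right idea, but the written condition picks out the wrong run. Second, it is worth stating explicitly that $sp\geq 2$ whenever $w$ avoids $\$$ (since rank~1 is the singleton suffix $\$$), so that $sp-1$ is always a legal rank; you use this silently. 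Neither affects the soundness of the argument.
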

\begin{proof}
  Let $n=|T|$. Denote the BWT of $T$ by $L$ and consider the process
  of inverting the BWT to obtain $T$. The inversion algorithm is based
  on the observation that $T[n - k] = L[LF^{k}[p_0]]$ for
  $k\in[0,n-1]$, where $p_0$ is the position of $T[n]$ in $L$.  From
  the formula for $LF$ it is easy to see that if two positions $i, j$
  belong to the same equal-letter run in $L$ then $LF[j]=LF[i]+(j-i)$.
  Let $\Gamma_{BWT}=\{n-k \mid LF^{k}[p_0]=1\text{ or
  }L[LF^{k}[p_0]-1]\neq L[LF^{k}[p_0]]\}$, i.e., $\Gamma_{BWT}$ is the
  set of positions $i$ in $T$ such that if the symbol in $L$
  corresponding to $T[i]$ is $L[j]$ then $j$ is the beginning of run
  in $L$ (alternatively, we can define $\Gamma_{BWT}$ as the set of
  positions at the end of BWT runs).
	
  To show that $\Gamma_{BWT}$ is an attractor of $T$, consider any
  substring $T[i..j]$ of $T$. We show that there exists an occurrence
  of $T[i..j]$ in $T$ that contains at least one position from
  $\Gamma_{BWT}$.  Let $p=LF^{n-j}[p_0]$, i.e., $L[p]$ is the symbol
  in $L$ corresponding to $T[j]$.  Denote $\ell=j-i+1$ and let
  $[i_0,j_0]$, $[i_1,j_1]$, \ldots, $[i_{\ell-1},j_{\ell-1}]$ be the
  sequence of runs visited when the BWT inversion algorithm computes,
  from right to left, $T[i..j]$, i.e., $L[i_t..j_t]$ is the BWT-run
  containing $L[LF^{t}[p]]$, $t\in\{0,\ldots,\ell-1\}$.  Let $b={\rm
    argmin}\{LF^{t}[p]-i_t\mid t\in[0,\ell-1]\}$.  Further, let
  $\Delta=LF^{b}[p]-i_b$ and let $p'=p-\Delta$.  By definition of $b$
  and from the above property of $LF$ for two positions inside the
  same run we have that $L[LF^{t}[p']]=L[LF^{t}[p]]$ for
  $t\in[0,\ell-1]$.  This implies that if we let $j'$ be such that
  $p'=LF^{n-j'}[p_0]$ (i.e., $j'$ is such that $T[j']$ corresponds to
  $L[p']$) then $T[i..j]=T[i'..j']$ for $i':=j'-\ell+1$.  However,
  since by definition of $b$, $L[LF^{b}[p']]$ (corresponding to
  $T[j'-b]$) is at the beginning of run in $L$, $T[i'..j']$ contains a
  position from $\Gamma_{BWT}$.
\end{proof}

Finally, an analogous theorem holds for automata recognizing the
string's suffixes:

\begin{theorem}\label{th: attractor automaton}
  Let $e$ be the number of edges of a compact automaton $\mathcal A$
  recognizing all (and only the) suffixes of a string $T$. Then, $T$
  has an attractor of size $e$.
\end{theorem}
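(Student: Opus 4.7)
My plan is to place one attractor position per edge of $\mathcal A$, marking, inside a carefully chosen occurrence, the boundary between a ``canonical prefix reaching the edge's source'' and the edge's label. Concretely, for every edge $\epsilon = (u, v)$ with label $\alpha_\epsilon$, let $\lambda_u$ denote the longest substring of $T$ whose reading from the source $s$ ends exactly at $u$ (with $\lambda_u = \varepsilon$ when $u = s$). Since $\lambda_u \alpha_\epsilon$ reaches $v$ along $\epsilon$, it is a substring of $T$; pick an arbitrary occurrence $T[i_\epsilon.. j_\epsilon]$ of $\lambda_u \alpha_\epsilon$ and set $p_\epsilon := i_\epsilon + |\lambda_u|$, the first position of the embedded copy of $\alpha_\epsilon$. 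The candidate attractor is $\Gamma_{\mathcal A} := \{p_\epsilon : \epsilon \text{ is an edge of } \mathcal A\}$, of cardinality at most $e$.

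To verify that $\Gamma_{\mathcal A}$ is an attractor, I would fix a substring $w = T[p..q]$ and follow its reading from $s$ inside $\mathcal A$. The walk ends either strictly inside some edge or exactly at a node; let $\epsilon = (u, v)$ be the edge partially traversed in the first case, and an arbitrary incoming edge of the terminal node in the second (such an edge exists because $\mathcal A$ is a DAG and $w$ is non-empty, so the walk cannot end at $s$). In both cases $w$ factors as $w = \gamma \alpha'$, where $\gamma$ reaches $u$ and $\alpha'$ is a non-empty prefix of $\alpha_\epsilon$ (equal to $\alpha_\epsilon$ when the walk ends at $v$).

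The key step is then to invoke the suffix-chain property inherited from deterministic automata accepting exactly the suffixes of $T$: any two strings that reach a common state share the same set of continuations leading to a suffix, so they have the same set of end-positions in $T$ and therefore one is a suffix of the other. Applied at $u$, this forces $\gamma$ to be a suffix of $\lambda_u$, whence $w = \gamma \alpha'$ is a suffix of $\lambda_u \alpha'$ and thus a factor of $\lambda_u \alpha_\epsilon$ occurring at positions $T[i_\epsilon + |\lambda_u| - |\gamma|.. i_\epsilon + |\lambda_u| + |\alpha'| - 1]$ inside the chosen occurrence. Since $|\gamma| \ge 0$ and $|\alpha'| \ge 1$, this interval contains $p_\epsilon = i_\epsilon + |\lambda_u|$, as required.

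The main subtlety I anticipate is the suffix-chain argument itself, which I expect to handle cleanly via Myhill--Nerode: states of a deterministic automaton for $L(\mathcal A) = \{\text{suffixes of } T\}$ correspond to classes of substrings with identical end-position sets, and any two such substrings are comparable under the suffix relation. A secondary delicate point is the choice to mark $p_\epsilon$ at the \emph{start} of $\alpha_\epsilon$ inside the occurrence, rather than at its end: this choice is essential in the mid-edge case, where $\alpha'$ is strictly shorter than $\alpha_\epsilon$ and an attractor placed at $j_\epsilon$ would lie outside the occurrence of $w$.
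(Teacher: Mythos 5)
Your construction and verification are essentially the same as the paper's: both mark, per edge $(u,v)$, the text position of the first character of that edge's label inside a chosen occurrence, and then argue that any substring $w$ has an occurrence aligned with the mark for the edge on which $w$'s walk in $\mathcal A$ terminates. You spell out two points that the paper glosses over. First, since $\mathcal A$ is a DAG, ``the string read from the root to $u$'' is not unique; you resolve this by fixing the longest such string $\lambda_u$ and invoking the Myhill--Nerode/suffix-chain property to guarantee that the prefix $\gamma$ of $w$ reaching $u$ is a suffix of $\lambda_u$, so $w$ really does embed in $\lambda_u\alpha_\epsilon$ with $p_\epsilon$ inside. Second, by marking inside an occurrence of the \emph{full} $\lambda_u\alpha_\epsilon$ rather than of $\lambda_u$ plus a single character, you ensure the chosen occurrence has room for the entire extension along $\epsilon$, sidestepping the case of an occurrence flush against the end of $T$. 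One local correction is needed: when the walk of $w$ ends exactly at a node $v$, take $\epsilon$ to be the \emph{last edge traversed by $w$'s walk}, not an arbitrary incoming edge of $v$. The factorization $w=\gamma\alpha_\epsilon$ with $\gamma$ reaching $u$ holds only for that edge; for a different incoming edge $\epsilon'$ of $v$ with, say, $|\alpha_{\epsilon'}|>|w|$, $w$ cannot even end with $\alpha_{\epsilon'}$. With this one-word fix the argument is complete and, if anything, tighter than the paper's.
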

\begin{proof}
  We call \emph{root} the starting state of $\mathcal A$.  Start with
  empty attractor $\Gamma_{\mathcal A}$. For every edge $(u,v)$ of
  $\mathcal A$, do the following. Let $T[i..j]$ be any occurrence of
  the substring read from the root of $\mathcal A$ to the first
  character in the label of $(u,v)$. We insert $j$ in
  $\Gamma_{\mathcal A}$.
	
  To see that $\Gamma_{\mathcal A}$ is a valid string attractor of
  size $e$, consider any substring $T[i..j]$. By definition of
  $\mathcal A$, $T[i..j]$ defines a path from the root to some node
  $u$, plus a prefix of the label (possibly, all the characters of the
  label) of an edge $(u,v)$ originating from $u$. Let $T[i..k]$,
  $k\leq j$, be the string read from the root to $u$, plus the first
  character in the label of $(u,v)$. Then, by definition of
  $\Gamma_{\mathcal A}$ there is an occurrence $T[i'..k'] = T[i..k]$
  such that $k'\in \Gamma_{\mathcal A}$. Since the remaining (possibly
  empty) suffix $T[k+1..j]$ of $T[i..j]$ ends in the middle of an
  edge, every occurrence of $T[i..k]$ is followed by $T[k+1..j]$,
  i.e., $T[i'..i'+(j-i)]$ is an occurrence of $T[i..j]$ crossing the
  attractor's element $k'$.
\end{proof}

\subsection{Reductions to Dictionary Compressors}
\label{sec:attractors->compressors}

In this section we show reductions from string attractors to
dictionary compressors.  Combined with the results of the previous
section, this will imply that dictionary compressors can be
interpreted as approximation algorithms for the smallest string
attractor.  The next property follows easily from Definition \ref{def:
  string attractor} and will be used in the proofs of the following
theorems.

\begin{lemma}\label{lemma: superset}
  Any superset of a string attractor is also a string attractor.
\end{lemma}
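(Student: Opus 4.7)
The plan is to observe that this follows immediately from the definition of a string attractor (\Cref{def: string attractor}), with essentially no content beyond unpacking the definitions. Concretely, suppose $\Gamma \subseteq \Gamma' \subseteq [1,n]$ and $\Gamma$ is a string attractor for $T\in\Sigma^n$. I want to show $\Gamma'$ is a string attractor for $T$.

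First I would fix an arbitrary substring $T[i..j]$. Since $\Gamma$ is an attractor, by definition there exists an occurrence $T[i'..j'] = T[i..j]$ and an element $j_k \in \Gamma$ with $j_k \in [i',j']$. Second, because $\Gamma \subseteq \Gamma'$, the same position $j_k$ lies in $\Gamma'$. Thus the same occurrence $T[i'..j']$ witnesses the attractor condition for $\Gamma'$ on the substring $T[i..j]$. Since $T[i..j]$ was arbitrary, every substring of $T$ has an occurrence crossing some element of $\Gamma'$, so $\Gamma'$ is a string attractor.

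There is no real obstacle here: the lemma is a monotonicity statement and the witness used for $\Gamma$ transfers verbatim to $\Gamma'$. The only thing worth being careful about is not inadvertently requiring $\Gamma'\subseteq [1,n]$ to contain only "new" positions or imposing any minimality; the definition of string attractor only demands a covering condition, which is preserved under taking supersets. The proof will therefore be a short paragraph invoking \Cref{def: string attractor} and the inclusion $\Gamma \subseteq \Gamma'$.
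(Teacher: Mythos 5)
Your proof is correct and matches the paper's intent exactly: the paper states the lemma without a written proof, noting only that it "follows easily from Definition~\ref{def: string attractor}," and your argument is precisely that unpacking — a witness occurrence and attractor position for $\Gamma$ remain valid for any superset $\Gamma'$.
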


We now show that we can derive a bidirectional parse from a string
attractor.

\begin{theorem}\label{th: attractor -> macro}
  Given a string $T\in\Sigma^n$ and a string attractor $\Gamma$ of
  size $\gamma$ for $T$, we can build a bidirectional parse (and
  therefore a macro scheme) for $T$ of size $\bigO(\gamma \log
  (n/\gamma))$.
\end{theorem}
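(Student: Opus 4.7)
My plan is to build the bidirectional parse via a top-down, block-tree-style recursive partition of $T$, using $\Gamma$ as the set of ``anchor'' positions. I first partition $T$ into $\lceil n/L\rceil = \bigO(\gamma)$ initial blocks of length $L=\lceil n/\gamma\rceil$. Then I recursively subdivide each block: if the block has length~$1$ or contains no attractor position, I stop and declare it a leaf; otherwise I split it into two equal halves and recurse. A length-$1$ leaf located on an attractor position becomes a character directive, whereas a longer leaf---which, by construction, is entirely outside $\Gamma$---becomes a copy directive whose source is chosen as any occurrence in $T$ of the leaf's string that crosses an attractor position. The existence of such a source is guaranteed by Definition~\ref{def: string attractor}.

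For the size analysis, the internal nodes of the recursion forest are exactly the blocks that contain at least one attractor position. At any fixed level these blocks are pairwise disjoint intervals each containing a distinct witness from $\Gamma$, so there are at most $\gamma$ of them per level. The recursion depth is $\log_2 L = \bigO(\log(n/\gamma))$, so the forest has $\bigO(\gamma\log(n/\gamma))$ internal nodes. Since it is a binary forest with $\bigO(\gamma)$ roots, the number of leaves is also $\bigO(\gamma\log(n/\gamma))$, bounding the number of factors, hence the size of the parse.

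The main obstacle is verifying that the resulting parse is a valid macro scheme, i.e., that $T$ can be unambiguously reconstructed; equivalently, the induced reference graph on positions (where each position of a copy factor points to the corresponding position of its source) must be acyclic. I plan to exploit the flexibility in selecting sources: among the occurrences guaranteed by the attractor property, I would pick, for each copy factor, one along which a suitable well-founded measure strictly decreases. A natural candidate is the distance from a position to the nearest element of $\Gamma$; with such a choice, decoding reduces to a finite recursion that bottoms out at the character directives placed at the attractor positions. Showing that a ``good'' source always exists---i.e., that one can always find an occurrence crossing an attractor whose internal attractor lands at a sufficiently central offset relative to the two attractors flanking the copy factor---is the technical heart of the proof, and is where I expect the bulk of the work to lie.
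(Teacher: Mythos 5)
Your size analysis is correct (binary forest with $\bigO(\gamma)$ roots, $\leq \gamma$ internal nodes per level, depth $\bigO(\log(n/\gamma))$), and you correctly flag that decodability is the crux. But the fix you sketch does not close the gap, and I do not think it can be made to work for your partition. Definition~\ref{def: string attractor} only promises that \emph{some} occurrence of a block's string crosses \emph{some} $j\in\Gamma$; it gives you no control over the offset at which that $j$ sits inside the occurrence, and in particular the crossing may be at the very first (or last) character of the occurrence. So for a size-$2^e$ leaf $B$, the best you can say about a chosen source $T[i'..j']$ with witness $a\in[i',j']$ is that every $p'\in[i',j']$ satisfies $|p'-a|\leq 2^e-1$. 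The corresponding position $p\in B$, however, may itself be adjacent to an attractor (e.g., the first position of $B$ when an attractor sits just left of $B$), so ``distance to nearest element of $\Gamma$'' can \emph{increase} from $1$ to $2^e-1$ along a single copy step. The natural alternative, ``size of the containing leaf'', fails for a related reason: your dyadic cuts are aligned to the fixed $\lceil n/\gamma\rceil$-grid, not to the attractor positions, so an attractor $a$ can sit at the right boundary of a dyadic interval while the leaf immediately to its right has size $\geq 2^e$ (up to $\lceil n/\gamma\rceil$). A size-$2^e$ leaf may therefore source into a same-sized or larger leaf, and nothing in your argument rules out a cycle in the reference graph.

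The paper's construction is, in effect, yours re-anchored so that this cannot happen. After padding $\Gamma$ with $\gamma$ equally-spaced positions (Lemma~\ref{lemma: superset}), it stores each attractor position explicitly and lays down phrases of lengths $1,2,4,\ldots$ \emph{outward from each attractor position}, truncating at the midpoint between consecutive attractors. The arithmetic identity $1+(1+2+\cdots+2^{e-1})=2^e$ then guarantees that a source position within $2^e-1$ of its witnessing attractor lands either on the explicit character or inside a phrase of length $\leq 2^{e-1}$; the phrase-length measure strictly decreases and the decoding recursion bottoms out, no matter how bad the crossing offset is. To repair your proof you essentially have to reproduce this anchoring (cut relative to $\Gamma$, not relative to a fixed grid): the hoped-for ``occurrence with a sufficiently central crossing point'' is simply not implied by the attractor definition and in general need not exist.
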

\begin{proof}
  We add $\gamma$ equally-spaced attractor's elements following Lemma
  \ref{lemma: superset}.  We define phrases of the parse around
  attractor's elements in a ``concentric exponential fashion'', as
  follows.  Characters on attractor's positions are explicitly stored.
  Let $i_1 < i_2$ be two consecutive attractor's elements. Let $m =
  \lfloor(i_1 + i_2)/2\rfloor$ be the middle position between them. We
  create a phrase of length 1 in position $i_1+1$, followed by a
  phrase of length 2, followed by a phrase of length 4, and so on. The
  last phrase is truncated at position $m$. We do the same (but
  right-to-left) for position $i_2$ except the last phrase is
  truncated at position $m+1$.  For the phrases' sources, we use any
  of their occurrences crossing an attractor's element (possible by
  definition of $\Gamma$).
	
  Suppose we are to extract $T[i]$, and $i$ is inside a phrase of
  length $\leq 2^e$, for some $e$. Let $i'$ be the position from where
  $T[i]$ is copied according to our bidirectional parse. By the way we
  defined the scheme, it is not hard to see that $i'$ is either an
  explicitly stored character or lies inside a phrase of
  length\footnote{To see this, note that $2^e = 1 + 2^0 + 2^1 + 2^2 +
    \cdots + 2^{e-1}$: these are the lengths of phrases following (and
    preceding) attractor's elements (included). In the worst case,
    position $i'$ falls inside the longest such phrase (of length
    $2^{e-1}$).} $\leq 2^{e-1}$. Repeating the reasoning, we will
  ultimately ``fall'' on an explicitly stored character.  Since
  attractor's elements are at a distance of at most $n/\gamma$ from
  each other, both the parse height and the number of phrases we
  introduce per attractor's element are $\bigO(\log(n/\gamma))$.
\end{proof}

The particular recursive structure of the macro scheme of Theorem
\ref{th: attractor -> macro} can be exploited to induce a collage
system of the same size. We state this result in the following
theorem.

\begin{theorem}\label{th: attractor -> collage}
  Given a string $T\in\Sigma^n$ and a string attractor $\Gamma$ of
  size $\gamma$ for $T$, we can build a collage system for $T$ of size
  $\bigO(\gamma \log (n/\gamma))$.
\end{theorem}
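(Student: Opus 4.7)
The plan is to directly convert the recursive phrase structure of the bidirectional parse of Theorem~\ref{th: attractor -> macro} into collage-system rules, making essential use of the rule $X \rightarrow K[l..r]$ to express each phrase as a substring of a pre-built ``ball'' around an attractor element.

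First, I would apply Lemma~\ref{lemma: superset} (exactly as in Theorem~\ref{th: attractor -> macro}) to augment $\Gamma$ with equally-spaced positions so that consecutive attractor elements lie within distance $n/\gamma$, keeping $|\Gamma| \in \bigO(\gamma)$. For each $j_k \in \Gamma$ and each level $e = 0, 1, \ldots, E$ with $E = \Theta(\log(n/\gamma))$, I introduce a ball nonterminal $B_{k,e}$ whose expansion is the substring of $T$ of length $2^{e+1}-1$ centered at $j_k$. The base case is the terminal rule $B_{k,0} \rightarrow T[j_k]$, and for $e \geq 1$ the ball is built by attaching length-$2^{e-1}$ left and right phrase nonterminals $P^L_{k,e-1}$ and $P^R_{k,e-1}$ to $B_{k,e-1}$, which requires one helper nonterminal $C_{k,e}$ because right-hand sides in collage systems are binary:
\[
C_{k,e} \rightarrow P^L_{k,e-1}\, B_{k,e-1}, \qquad B_{k,e} \rightarrow C_{k,e}\, P^R_{k,e-1}.
\]

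Each phrase $P^R_{k,f}$ (symmetrically $P^L_{k,f}$) is a length-$2^f$ substring of $T$; by the attractor property, some occurrence of it crosses an attractor element $j_m$, and any such occurrence is entirely contained in the ball $B_{m,f}$, whose length $2^{f+1}-1$ is exactly large enough to fit any length-$2^f$ substring centered on $j_m$. Hence I add a single substring rule $P^R_{k,f} \rightarrow B_{m,f}[l..r]$ picking out this occurrence, and analogously for $P^L_{k,f}$. Finally, I assemble the start nonterminal $S$ by concatenating in text order the portions of the top-level balls $B_{k,E}$ lying between consecutive midpoints, using $\bigO(\gamma)$ additional binary concatenation rules plus, when needed, $\bigO(\gamma)$ trimming substring rules of the form $X \rightarrow B_{k,E}[l..r]$.

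The rule count is a constant per (attractor, level) pair plus $\bigO(\gamma)$ for assembling $S$, totalling $\bigO(\gamma \log(n/\gamma))$. Acyclicity is immediate, since every nonterminal at level $e$ refers only to nonterminals at strictly smaller level (the ball rule reaches $B_{k,e-1}$ and the two phrase nonterminals of level $e-1$, and each such phrase rule reaches $B_{m,e-1}$), with $S$ sitting above all of them. The main obstacle I anticipate is handling the boundary cases cleanly --- balls that spill past the ends of $T$, and the trimming needed when a top-level ball overlaps with a neighbouring attractor's ball --- but these all inherit the required occurrence guarantees directly from the attractor property, so no new combinatorial work beyond what Theorem~\ref{th: attractor -> macro} already supplies is needed.
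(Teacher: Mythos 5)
Your construction is correct, but it is a genuinely different route from the paper's. The paper also starts from the bidirectional parse of Theorem~\ref{th: attractor -> macro}, but then proceeds by processing the phrases \emph{in order of increasing length}, maintaining the invariant that every maximal contiguous run of already-processed phrases is collapsed into a single nonterminal; for each new phrase it adds a single rule $W\rightarrow K[i''..j'']$ where $K$ is the (already built) nonterminal covering the source, then spends $\bigO(1)$ binary rules to merge $W$ into the neighbouring run nonterminals. Your proof instead introduces an explicit two-parameter family of ``ball'' nonterminals $B_{k,e}$ centered at attractor elements and grown level by level, with each flanking phrase $P^{L/R}_{k,e}$ realised as a single substring rule pointing into whichever ball $B_{m,e}$ contains the attractor-crossing occurrence of that length-$2^e$ string; the start symbol is assembled at the end by trimming and concatenating top-level balls. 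Your decomposition buys a cleaner, more static structure: every nonterminal is indexed by $(k,e)$ rather than by a dynamically maintained ``maximal processed region,'' and acyclicity is immediate from a levelling argument (with the small caveat that $P^{L/R}_{k,e}$ refers to a $B_{m,e}$ at the \emph{same} $e$, so the argument needs a finer subordering --- $B$ at sublevel $3e$, $C$ at $3e{+}1$, $P^{L/R}$ at $3e{+}2$, say --- but that is routine). The paper's phrasing-by-increasing-length argument, while slightly more delicate, is designed so that Theorem~\ref{th: attractor -> SLP} (which cannot use $K[l..r]$ rules and must simulate them) can reuse exactly the same processing order and merging machinery; your ball construction leans essentially on the substring rule and would not port as directly to the SLP case. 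The boundary issues you flag --- balls spilling past $T$'s ends and trimming at midpoints --- are indeed routine (pad $T$ or truncate balls to $[1,n]$) and do not affect the $\bigO(\gamma\log(n/\gamma))$ count, so the proposal stands as a valid alternative proof.
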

\begin{proof}
  We first build the bidirectional parse of Theorem \ref{th: attractor
    -> macro}, with $\bigO(\gamma \log(n/\gamma))$ phrases of length
  at most $n/\gamma$ each. We maintain the following invariant: every
  maximal substring $T[i..j]$ covered by processed phrases is
  collapsed into a single nonterminal $Y$.

  We will process phrases in order of increasing length.  The idea is
  to map a phrase on its source and copy the collage system of the
  source introducing only a constant number of new nonterminals. By
  the bidirectional parse's definition, the source of any phrase
  $T[i..j]$ overlaps only phrases shorter than $j-i+1$
  characters. Since we process phrases in order of increasing length,
  phrases overlapping the source have already been processed and
  therefore $T[i..j]$ is a substring of the expansion of some existing
  nonterminal $K$.

  We start by parsing each maximal substring $T[i..j]$ containing only
  phrases of length 1 into arbitrary blocks of length 2 or 3. We
  create a constant number of new nonterminals per block (one for
  blocks of length two, and two for blocks of length three). Note
  that, by the way the parse is defined, this is always possible
  (since $j-i+1 \geq 2$ always holds).  We repeat this process
  recursively --- grouping nonterminals at level $k\geq0$ to form new
  nonterminals at level $k+1$ --- until $T[i..j]$ is collapsed into a
  single nonterminal.  Our invariant now holds for the base case,
  i.e., for phrases of length $t=1$: each maximal substring containing
  only phrases of length $\leq t$ is collapsed into a single
  nonterminal.

  We now proceed with phrases of length $\geq 2$, in order of
  increasing length. Let $T[i..j]$ be a phrase to be processed, with
  source at $T[i'..j']$. By definition of the parse, $T[i'..j']$
  overlaps only phrases of length at most $j-i$ and, by inductive
  hypothesis, these phrases have already been processed. It follows
  that $T[i'..j']$ is equal to a substring $K[i''..j'']$ of the
  expansion of some existing nonterminal $K$. At this point, it is
  sufficient to add a new rule $W \rightarrow K[i''..j'']$ generating
  our phrase $T[i..j]$. Since we process phrases in order of
  increasing length, $W$ is either followed ($WX_1$), preceded
  ($X_1W$), or in the middle ($X_1WX_2$) of one or two nonterminals
  $X_1, X_2$ expanding to a maximal substring containing adjacent
  processed phrases. We introduce at most two new rules of the form $Y
  \rightarrow AB$ to merge these nonterminals into a single
  nonterminal, so that our invariant is still valid. Since we
  introduce a constant number of new nonterminals per phrase, the
  resulting collage system has $\bigO(\gamma \log(n/\gamma))$ rules.
\end{proof}

A similar proof can be used to derive a (larger) straight-line
program.

\begin{theorem}\label{th: attractor -> SLP}
  Given a string $T\in\Sigma^n$ and a string attractor $\Gamma$ of
  size $\gamma$ for $T$, we can build an SLP for $T$ of size
  $\bigO(\gamma \log^2 (n/\gamma))$.
\end{theorem}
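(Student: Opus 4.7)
The plan is to adapt the construction of Theorem~\ref{th: attractor -> collage}, replacing each substring-copy rule $W \rightarrow K[l..r]$---which an SLP cannot express---with a chain of $\bigO(\log(n/\gamma))$ binary productions that extract the same substring from the parse tree of $K$. Since the collage system has $\bigO(\gamma\log(n/\gamma))$ rules and each contributes an additional $\log(n/\gamma)$ factor, the total size becomes $\bigO(\gamma\log^2(n/\gamma))$.

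In more detail: first, I would produce the bidirectional parse of Theorem~\ref{th: attractor -> macro} and then process its phrases in order of increasing length, exactly as in Theorem~\ref{th: attractor -> collage}, with one modification. The bottom-up grouping of adjacent already-built nonterminals (into blocks of size 2 or 3) is performed so that the resulting nonterminal is \emph{balanced}; since any maximal block of processed phrases spans at most $n/\gamma$ characters of $T$, this keeps the height of every ``block'' nonterminal $K$ at $\bigO(\log(n/\gamma))$.

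When processing a phrase $T[i..j]$ whose source $T[i'..j']$ lies within the expansion of such a block nonterminal $K$, instead of adding the substring rule $W \rightarrow K[l..r]$, I would invoke the standard substring-extraction routine on the (balanced) parse tree of $K$: identify the $\bigO(h_K) = \bigO(\log(n/\gamma))$ maximal subtrees whose in-order concatenation equals $K[l..r]$, and combine them into a single new nonterminal $W$ by an equal number of binary rules, arranged themselves as a balanced binary tree so that $W$'s height remains $\bigO(\log(n/\gamma))$. Finally, $W$ is merged with its at-most-two neighboring block nonterminals by $\bigO(1)$ further binary rules, preserving the invariant that every maximal run of processed phrases is a single balanced nonterminal.

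The main obstacle is guaranteeing that the heights of all constructed nonterminals stay $\bigO(\log(n/\gamma))$ throughout the process, since otherwise later substring extractions could become more expensive and the bookkeeping would blow up. This requires care both in the initial bottom-up grouping of length-$1$ phrases and in arranging the $\bigO(\log(n/\gamma))$-rule extraction chain as a balanced tree before attaching it to its neighbors; once balance is preserved, multiplying $\bigO(\gamma\log(n/\gamma))$ phrases by $\bigO(\log(n/\gamma))$ new rules per phrase yields the claimed $\bigO(\gamma\log^2(n/\gamma))$ bound.
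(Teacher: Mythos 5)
Your substring-extraction idea is reasonable as a replacement for the rule $W \rightarrow K[l..r]$, but the proposal rests on a false claim and consequently misses the hardest part of the proof. You assert that ``any maximal block of processed phrases spans at most $n/\gamma$ characters of $T$,'' which is what keeps every block nonterminal $K$ at height $\bigO(\log(n/\gamma))$ in your argument. This is not true: phrases are processed in order of increasing length, and once the longest phrase between two consecutive attractor elements is processed, the two surrounding maximal runs of processed phrases merge into a single run spanning both elements. Iterating, maximal runs can (and in general will) grow to length $\Theta(n)$ well before the construction finishes. Thus a balanced block nonterminal $K$ may have height $\Theta(\log n)$, not $\bigO(\log(n/\gamma))$.

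This breaks the final step of your argument: merging the new phrase nonterminal $W$ (height $\bigO(\log(n/\gamma))$) with its neighboring block nonterminal $X$ (height possibly $\Theta(\log n)$) cannot be done with ``$\bigO(1)$ further binary rules'' while preserving balance. Any balanced join costs $\Theta(|h_X - h_W|)$ new rules, which can be $\Theta(\log\gamma)$, so the total becomes $\bigO(\gamma\log(n/\gamma)\log n)$ --- exactly the intermediate bound that the paper obtains before its final fix, not the claimed $\bigO(\gamma\log^2(n/\gamma))$. (Your substring-extraction count itself is salvageable: even with $h_K = \Theta(\log n)$, a range of length $\leq n/\gamma$ in a balanced tree decomposes into $\bigO(\log(n/\gamma))$ canonical subtrees, because the canonical nodes all lie within $\bigO(\log(n/\gamma))$ levels of the leaves --- so the issue is not the extraction step but the merge step.) The missing ingredient is the paper's ``special blocks'' device: it introduces $\gamma$ anchor blocks of size $2n/\gamma$ at multiples of $n/\gamma$, weakens the invariant so that balance need only be maintained inside each such block, and thereby guarantees that every merge involves pieces of length $\bigO(n/\gamma)$, costing only $\bigO(\log(n/\gamma))$ new rules. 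A final round of $\bigO(\gamma\log(n/\gamma))$ pairwise merges then produces the single starting nonterminal. Without this (or an equivalent amortization), your proposal only establishes $\bigO(\gamma\log(n/\gamma)\log n)$.

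For what it's worth, your extraction mechanism also differs from the paper's in a benign way: the paper does not use a canonical-subtree decomposition of $K$, but instead directly copies the parse of the source level by level, adding $\bigO(1)$ border nonterminals per level, which gives the same $\bigO(\log(n/\gamma))$ new rules per phrase. Both are fine for this part; the gap is entirely in the merge accounting.
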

\begin{proof}
  This proof follows that for Theorem \ref{th: attractor -> collage}
  (but is slightly more complicated since we cannot use rules of the
  form $W\rightarrow K[l..r]$).  We will first show a simpler
  construction achieving an SLP of size $\bigO(\gamma \log(n/\gamma)
  \log(n))$ and then show how to refine it to achieve size
  $\bigO(\gamma \log^2(n/\gamma))$.

  For the purpose of the proof we modify the definition of SLP to
  allow also for the rules of the form $A \rightarrow XYZ$, $A
  \rightarrow ab$, and $A \rightarrow abc$ where $\{X,Y,Z\}$ are
  nonterminals and $\{a,b,c\}$ are terminals. It is easy to see that,
  if needed, the final SLP can be turned into a standard SLP without
  asymptotically affecting its size and height.

  For any nonterminal, by \emph{level} we mean the height of its
  parse-tree. This is motivated by the fact that at any point during
  the construction, any nonterminal at level $k$ will have all its
  children at level exactly $k - 1$. Furthermore, once a nonterminal
  is created, it is never changed or deleted. Levels of nonterminals,
  in particular, will thus not change during construction.

  We start by building the bidirectional parse of Theorem \ref{th:
    attractor -> macro}, with $\bigO(\gamma \log(n/\gamma))$ phrases
  of length at most $n/\gamma$ each.  We will process phrases in order
  of increasing length.  The main idea is to map a phrase on its
  source and copy the source's parse into nonterminals, introducing
  new nonterminals at the borders if needed. By the bidirectional
  parse's definition, the source of any phrase $T[i..j]$ overlaps only
  phrases shorter than $j-i+1$ characters. Since we process phrases in
  order of increasing length, phrases overlapping the source have
  already been processed and therefore their parse into nonterminals
  is well-defined. We will maintain the following invariant: once we
  finish the processing of a phrase with the source $T[i'..j']$, the
  phrase will be represented by a single nonterminal $Y$ (expanding to
  $T[i'..j']$).

  In the first version of our construction we will also maintain the
  following invariant: every maximal substring $T[i..j]$ covered by
  processed phrases is collapsed into a single nonterminal $X$. Hence,
  whenever the processing of some phrase is finished and its source is
  an expansion of some nonterminal $Y$ we have to merge it with at
  most two adjacent nonterminals representing contiguous processed
  phrases to keep our invariant true. It is clear that, once all
  phrases have been processed, the entire string is collapsed into a
  single nonterminal $S$. We now show how to process a phrase and
  analyze the number of introduced nonterminals.

  We start by parsing each maximal substring $T[i..j]$ containing only
  phrases of length 1 into arbitrary blocks of length 2 or 3. We
  create a new nonterminal for every block. We then repeat this
  process recursively --- grouping nonterminals at level $k\geq0$ to
  form new nonterminals at level $k+1$ --- until $T[i..j]$ is
  collapsed into a single nonterminal.  Our invariant now holds for
  the base case $t=1$: each maximal substring containing only phrases
  of length $\leq t$ is collapsed into a single nonterminal.

  We now proceed with phrases of length $\geq 2$, in order of
  increasing length. Let $T[i..j]$ be a phrase to be processed, with
  source at $T[i'..j']$. By definition of the parse, $T[i'..j']$
  overlaps only phrases of length at most $j-i$ and, by inductive
  hypothesis, these phrases have already been processed. We group
  characters of $T[i..j]$ in blocks of length 2 or 3 copying the parse
  of $T[i'..j']$ at level 0. Note that this might not be possible for
  the borders of length 1 or 2 of $T[i..j]$: this is the case if the
  block containing $T[i']$ starts before position $i'$ (symmetric for
  $T[j']$). In this case, we create $\bigO(1)$ new nonterminals as
  follows. If $T[i'-1,i',i'+1]$ form a block, then we group $T[i,i+1]$
  in a block of length 2 and collapse it into a new nonterminal at
  level 1. If, on the other hand, $T[i'-1,i']$ form a block, we
  consider two sub-cases. If $T[i'+1,i'+2]$ form a block, then we
  create the block to $T[i,i+1,i+2]$ and collapse it into a new
  nonterminal at level 1. If $T[i'+1,i'+2,i'+3]$ form a block, then we
  create the two blocks $T[i,i+1]$ and $T[i+2,i+3]$ and collapse them
  into 2 new nonterminals at level 1. Finally, the case when
  $T[i'-2,i'-1,i']$ form a block is handled identically to the
  previous case. We repeat this process for the nonterminals at level
  $k\geq 1$ that were copied from $T[i'..j']$, grouping them in blocks
  of length 2 or 3 according to the source and creating $\bigO(1)$ new
  nonterminals at level $k+1$ to cover the borders. After
  $\bigO(\log(n/\gamma))$ levels, $T[i..j]$ is collapsed into a single
  nonterminal. Since we create $\bigO(1)$ new nonterminals per level,
  overall we introduce $\bigO(\log(n/\gamma))$ new nonterminals.

  At this point, let $Y$ be the nonterminal just created that expands
  to $T[i..j]$. Since we process phrases in order of increasing
  length, $Y$ is either followed ($YX$), preceded ($XY$), or in the
  middle ($X_1YX_2$) of one or two nonterminals expanding to a maximal
  substring containing contiguous processed phrases. We now show how
  to collapse these two or three nonterminals in order to maintain our
  invariant, at the same time satisfying the property that
  nonterminals at level $k$ expand to two or three nonterminals at
  level $k-1$. We show the procedure in the case where $Y$ is preceded
  by a nonterminal $X$, i.e., we want to collapse $XY$ into a single
  nonterminal. The other two cases can then easily be derived using
  the same technique. Let $k_X$ and $k_Y$ be the levels of $X$ and
  $Y$. If $k_X=k_Y$, then we just create a new nonterminal
  $W\rightarrow XY$ and we are done. Assume first that $k_X \leq
  k_Y$. Let $Y_1\dots Y_t$, with $t\geq2$, be the sequence of
  nonterminals that are the expansion of $Y$ at level $k_X$. Our goal
  is to collapse the sequence $XY_1\dots Y_t$ into a single
  nonterminal. We will show that this is possible while introducing at
  most $\bigO(\log(n/\gamma))$ new nonterminals. The parsing of
  $Y_1\dots Y_t$ into blocks is already defined (by the expansion of
  $Y$), so we only need to copy it while adjusting the left border in
  order to include $X$. We distinguish two cases. If $Y_1$ and $Y_2$
  are grouped into a single block, then we replace this block with the
  new block $XY_1Y_2$ and collapse it in a new nonterminal at level
  $k_X+1$. If, on the other hand, $Y_1$, $Y_2$, and $Y_3$ are grouped
  into a single block then we replace it with the two blocks $XY_1$
  and $Y_2Y_3$ and collapse them in two new nonterminals at level
  $k_X+1$. We repeat the same procedure at levels $k_X+1, k_X+2,
  \dots, k_Y$, until everything is collapsed in a single
  nonterminal. At each level we introduce one or two new nonterminals,
  therefore overall we introduce at most $2(k_Y-k_X)+1 \in
  \bigO(\log(n/\gamma))$ new nonterminals. The case $k_X > k_Y$ is
  solved analogously except there is no upper bound of $n/\gamma$ on
  the length of the expansion of $X$ and hence in the worst case the
  procedure introduces $2(k_X-k_Y)+1 \in \bigO(\log n)$ new
  nonterminals. Overall, however, this procedure generates the SLP for
  $T$ of size $\bigO(\gamma \log(n/\gamma) \log(n))$.

  To address the problem above we introduce $\gamma$ special blocks of
  size $2n/\gamma$ starting at text positions that are multiples of
  $n/\gamma$, and we change the invariant ensuring that any contiguous
  sequence of already processed phrases is an expansion of some
  nonterminal, and instead require that at any point during the
  computation the invariant holds within all special blocks; more
  precisely, if for any special block we consider the smallest
  contiguous sequence $P_1 \cdots P_t$ of phrases that overlaps both
  its endpoints (the endpoints of the block, that is), then the old
  invariant applied to any contiguous subsequence of $P_1 \cdots P_t$
  of already processed phrases has to hold. This is enough to
  guarantee that during the algorithm the source of every phrase is
  always a substring of an expansion of some nonterminal, and whenever
  we merge two nonterminals $XY$ they always both each expand to a
  substring of length $\bigO(n/\gamma)$ which guarantees that the
  merging introduces $\bigO(\log(n/\gamma))$ new nonterminals.
  Furthermore, it is easy to see that once a phrase has been
  processed, in order to maintain the new invariant, we now need to
  perform at most 6 mergings of nonterminals (as opposed to at most 2
  from before the modification), since each phrase can overlap at most
  three special blocks. Finally, at the end of the construction we
  need to make sure the whole string $T$ is an expansion of some
  nonterminal. To achieve this we do $\log(\gamma \log(n/\gamma))$
  rounds of a pairwise merging of nonterminals corresponding to
  adjacent phrases (in the first round) or groups of phrases (in
  latter rounds). This adds $\bigO(\gamma \log(n/\gamma))$
  nonterminals.  The level of the nonterminal expanding to $T$ (i.e.,
  the height of the resulting SLP) is
  $\bigO(\log(\gamma\log(n/\gamma)) + \log(n/\gamma)) = \bigO(\log
  n)$.
\end{proof}

Using the above theorems, we can derive the approximation rates of
some compressors for repetitive strings with respect to the smallest
string attractor.

\begin{corollary}\label{cor: approximation rates}
  The following bounds hold between the size $g^*$ of the smallest
  SLP, the size $g_{rl}^*$ of the smallest run-length SLP, the size
  $z$ of the Lempel-Ziv parse, the size $b^*$ of the smallest macro
  scheme, the size $c^*$ of the smallest collage system, and the size
  $\gamma^*$ of the smallest string attractor:
  \begin{enumerate}
    \item $b^*, c^* \in \bigO(\gamma^* \log (n/\gamma^*))$,
    \item $g^*, g^*_{rl}, z \in \bigO(\gamma^* \log^2 (n/\gamma^*))$.
  \end{enumerate}
\end{corollary}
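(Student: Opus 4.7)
The proof will be essentially a bookkeeping exercise chaining the three construction theorems of the subsection with a couple of standard inclusions between compression measures. The plan is as follows.

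First, I would start from a smallest attractor $\Gamma^*$ of size $\gamma^*$. Applying Theorem~\ref{th: attractor -> macro} to $\Gamma^*$ produces a bidirectional parse (hence a macro scheme) of size $\bigO(\gamma^* \log(n/\gamma^*))$; since $b^*$ is the size of the smallest macro scheme, this immediately yields $b^* \in \bigO(\gamma^* \log(n/\gamma^*))$. The identical argument with Theorem~\ref{th: attractor -> collage} gives $c^* \in \bigO(\gamma^* \log(n/\gamma^*))$, settling item~(1).

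For item~(2), applying Theorem~\ref{th: attractor -> SLP} to $\Gamma^*$ directly produces an SLP of size $\bigO(\gamma^* \log^2(n/\gamma^*))$, so $g^* \in \bigO(\gamma^* \log^2(n/\gamma^*))$. The bound for $g_{rl}^*$ follows from the observation that every SLP is already a run-length SLP (rules of the form $X \rightarrow A^\ell$ are simply optional), so $g_{rl}^* \leq g^*$. For $z$ I would invoke the well-known inequality $z \leq g^*$, which follows from the standard fact that from any SLP of size $g$ one can extract a unidirectional parse with at most $g$ phrases (each nonterminal contributes at most one phrase when the grammar is processed left-to-right), and LZ77 is optimal among unidirectional parses; hence $z \leq g^* \in \bigO(\gamma^* \log^2(n/\gamma^*))$.

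The only non-immediate step is the inequality $z \leq g^*$, but this is a classical result in the literature and does not require any new argument. All other bounds are direct applications of the theorems proved just above, combined with the minimality of $b^*$, $c^*$, and $g^*$ over their respective families of representations. Thus there is no real obstacle: the proof is a short chain of inequalities, and I would present it as such, citing the appropriate theorems for each of the five bounds.
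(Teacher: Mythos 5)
Your proposal is correct and is essentially identical to the paper's own proof: both apply Theorems~\ref{th: attractor -> macro}, \ref{th: attractor -> collage}, and \ref{th: attractor -> SLP} to a smallest attractor and then invoke $z \leq g^*$ from \cite{rytter2003application}. You spell out the trivial inclusion $g_{rl}^* \leq g^*$, which the paper leaves implicit, but otherwise the arguments coincide.
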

\begin{proof}
  For the first bounds, build the bidirectional parse of Theorem
  \ref{th: attractor -> macro} and the collage system of Theorem
  \ref{th: attractor -> collage} using a string attractor of minimum
  size $\gamma^*$. For the second bound, use the same attractor to
  build the SLP of Theorem \ref{th: attractor -> SLP} and exploit the
  well-known relation $z\leq g^*$~\cite{rytter2003application}.
\end{proof}

Our reductions and the above corollary imply our first approximation
algorithms for the smallest string attractor. Note that only one of
our approximations is computable in polynomial time (unless P=NP): the
attractor induced by the LZ77 parsing. In the next section we show how
to obtain asymptotically better approximations in polynomial time.

All our reductions combined imply the following relations between
repetitiveness measures:

\begin{corollary}\label{cor: bounds1}
  The following bounds hold between the size $g^*$ of the smallest
  SLP, the size $z$ of the Lempel-Ziv parse, the size $c^*$ of the
  smallest collage system, the size $b^*$ of the smallest macro
  scheme, and the number $r$ of equal-letter runs in the BWT:
  \begin{enumerate}
    \item $z, g^* \in
      \bigO(b^*\log^2\frac{n}{b^*})\ \cap\ \bigO(r\log^2\frac{n}{r})\ \cap\ \bigO(c^*\log^2\frac{n}{c^*})$,
    \item $c^* \in
      \bigO(b^*\log\frac{n}{b^*})\ \cap\ \bigO(r\log\frac{n}{r})$,
    \item $b^* \in \bigO(c^*\log\frac{n}{c^*})$.
  \end{enumerate}
\end{corollary}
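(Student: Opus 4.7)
The plan is to derive all three bounds by chaining the two classes of reductions proved in Sections~\ref{sec:compressors->SA} and~\ref{sec:attractors->compressors}: a reduction from a compressor to a string attractor yields $\gamma^* \le (\text{compressor size})$, and a reduction from attractors back to compressors then bounds a (possibly different) compressor's size by $\bigO(\gamma^* \mathrm{polylog}(n/\gamma^*))$. Composing these two directions produces bounds between the compressors themselves, with the string attractor acting as a pivot. Only monotonicity of $x \mapsto x\log(n/x)$ and the standard inequality $z \le g^*$ remain to be invoked.

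More concretely, I would proceed in the order (3), (2), (1). For (3), I take a minimum collage system of size $c^*$, apply Theorem~\ref{th:attr_c} to obtain $\gamma^* \le c^*$, and then apply Corollary~\ref{cor: approximation rates}(1) (the attractor $\to$ bidirectional parse reduction of Theorem~\ref{th: attractor -> macro}) to conclude $b^* \in \bigO(\gamma^* \log(n/\gamma^*)) \subseteq \bigO(c^* \log(n/c^*))$. For (2), I do the analogous composition twice: Theorem~\ref{th: attractor MS} gives $\gamma^* \le 2 b^*$ and Theorem~\ref{th: attractor RLBWT} gives $\gamma^* \le r$, and each is then fed into Corollary~\ref{cor: approximation rates}(1) applied to the attractor $\to$ collage reduction of Theorem~\ref{th: attractor -> collage}, yielding $c^* \in \bigO(b^*\log(n/b^*))$ and $c^* \in \bigO(r\log(n/r))$ respectively.

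For (1), the same three upper bounds $\gamma^* \le 2b^*$, $\gamma^* \le r$, $\gamma^* \le c^*$ are combined with Corollary~\ref{cor: approximation rates}(2), which uses Theorem~\ref{th: attractor -> SLP} to build an SLP of size $\bigO(\gamma^* \log^2(n/\gamma^*))$, giving the three stated bounds for $g^*$. To transfer each bound from $g^*$ to $z$, I invoke the classical inequality $z \le g^*$~\cite{rytter2003application} (already cited in the proof of Corollary~\ref{cor: approximation rates}).

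The only technical wrinkle—and the one place in the argument that needs a moment of care—is that substituting $\gamma^* \le \mu$ (where $\mu \in \{b^*, c^*, r\}$) inside $\mu \log(n/\mu)$ or $\mu \log^2(n/\mu)$ requires the map $x \mapsto x \log^i(n/x)$ to be monotone in $x$. This function is concave and increasing on $(0, n/e]$, so the step is immediate whenever the repetitiveness measures lie below this threshold, which is the regime of interest for dictionary compression; the finitely many boundary cases $\mu > n/e$ are absorbed into the $\bigO$-constant since then $x\log^i(n/x) = \bigO(n)$ trivially bounds everything. Apart from this monotonicity check, the corollary is a direct composition of already-established reductions.
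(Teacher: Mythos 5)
Your proof is correct and follows the same overall compose-the-reductions strategy as the paper. The one genuine difference is that you pivot through the minimum attractor size $\gamma^*$ by first invoking Corollary~\ref{cor: approximation rates} and then substituting $\gamma^* \le \mu$ for $\mu \in \{b^*, r, c^*\}$, which is what forces the monotonicity discussion at the end. The paper's (terser) proof skips this intermediate step: it feeds the attractor of size $\bigO(\mu)$ induced by each compressor \emph{directly} into Theorems~\ref{th: attractor -> macro}, \ref{th: attractor -> collage}, and \ref{th: attractor -> SLP}, then uses the optimality of $b^*$, $c^*$, $g^*$ (and $z \le g^*$). In that route one can simply pad the induced attractor to size exactly $\Theta(\mu)$ using Lemma~\ref{lemma: superset} before applying the attractor$\to$compressor theorem, reading off $\bigO(\mu\log^i(n/\mu))$ immediately and sidestepping any appeal to monotonicity of $x \mapsto x\log^i(n/x)$ or boundary-case absorption. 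Your handling of the monotonicity is nonetheless adequate under the standard convention that $\log(\cdot)$ is floored at a constant (note that for $i=2$ the increasing range is $(0, n/e^2]$, not $(0, n/e]$, but this changes nothing), so the proposal stands; the padding trick is just the cleaner way to make the substitution step airtight.
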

\begin{proof}
  For bounds 1, build the SLP of Theorem \ref{th: attractor -> SLP} on
  string attractors of size $b^*$, $r$, and $c^*$ induced by the
  smallest macro scheme (Theorem \ref{th: attractor MS}), RLBWT
  (Theorem \ref{th: attractor RLBWT}), and smallest collage system
  (Theorem \ref{th:attr_c}). The results follow from the definition of
  the smallest SLP and the bound $z \leq
  g^*$~\cite{rytter2003application}. Similarly, bounds 2 and 3 are
  obtained with the reductions of Theorems \ref{th: attractor RLBWT},
  \ref{th: attractor -> macro}, and \ref{th: attractor -> collage}.
\end{proof}

Some of these (or even tighter) bounds have been very recently
obtained by Gagie et al. in~\cite{GNPlatin18} and in the extended
version~\cite{gagie2017optimalArxiv} of~\cite{gagie2017optimal} using
different techniques based on locally-consistent parsing. Our
reductions, one the other hand, are slightly simpler and naturally
include a broader class of dictionary compressors, e.g., all relations
concerning $c^*$ have not been previously known.

\section{Computational Complexity}

By $\textsc{Attractor}=\{\langle T,p\rangle : \text{String }T\text{
  has an attractor of size}\leq p\}$ we denote the language
corresponding to the decision version of the smallest attractor
problem.  To prove the NP-completeness of \textsc{Attractor} we first
generalize the notion of string attractor.

\begin{definition}
  We say that a set $\Gamma \subseteq [1..n]$ is a
  \emph{$k$-attractor} of a string $T\in\Sigma^n$ if every substring
  $T[i..j]$ such that $i \leq j < i+k$ has an occurrence
  $T[i'..j']=T[i..j]$ with $j''\in[i'..j']$ for some $j''\in
  \Gamma$.\footnote{We permit non-constant $k=f(n)$ where $n=|T|$ as
    long as $\lim_{n\to\infty} f(n)=\infty$ and $f(n)$ is
    non-decreasing.}
\end{definition}

By \textsc{Minimum-$k$-Attractor} we denote the optimization problem
of finding the smallest $k$-attractor of a given input string.  By
$$k\textsc{-Attractor} = \{\langle T,p\rangle :\ T\text{ has a
}k\text{-attractor of size}\leq p\}$$ we denote the corresponding
decision problem. Observe that \textsc{Attractor} is a special case of
\textsc{$k$-Attractor} where $k=n$. The NP-completeness of
\textsc{$k$-Attractor} for any $k\geq 3$ (this includes any constant
$k\geq 3$ as well as any non-constant $k$) is obtained by a reduction
from the \textsc{$k$-SetCover} problem that is NP-complete~\cite{df97}
for any constant $k \geq 3$: given integer $p$ and a collection
$C=\{C_1, C_2, \ldots, C_m\}$ of $m$ subsets of a universe set
$\mathcal{U}=\{1, 2, \ldots, u\}$ such that
$\bigcup_{i=1}^{m}C_i=\mathcal{U}$, and for any $i\in\{1,\ldots,m\}$,
$|C_i| \leq k$, return ``yes'' iff there exists a subcollection
$C'\subseteq C$ such that $\bigcup C' = \mathcal{U}$ and $|C'|\leq p$.

We obtain our reduction as follows. For any constant $k \geq 3$, given
an instance $\langle\mathcal{U},C\rangle$ of \textsc{$k$-SetCover} we
build a string $T_C$ of length $\bigO(uk^2+tk+t')$ where
$t=\sum_{i=1}^m n_i$ and $t'=\sum_{i=1}^{m}n_i^2$ with the following
property: $\langle \mathcal{U},C \rangle$ has a cover of size $\leq p$
if and only if $T_C$ has a $k$-attractor of size $\leq
4u(k-1)+p+6t-3m$. This establishes the NP-completeness of
$k$-\textsc{Attractor} for any constant $k\geq 3$.  We then show that
for $T_C$ the size of the smallest $k$-attractor is equal to that of
the smallest $k'$-attractor for every $k\leq k'\leq |T_C|$, which
allows us to prove the NP-completeness for non-constant $k$.

\begin{theorem}
  \label{thm:k-attractor-npc}
  For $k \geq 3$, \textsc{$k$-Attractor} is NP-complete.
\end{theorem}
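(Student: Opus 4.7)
The plan is to prove membership in NP first, which is routine: given a candidate set $\Gamma$ of size at most $p$, we enumerate all distinct substrings of length at most $k$ (there are $\bigO(nk)$ of them) and for each check, using standard suffix-array machinery, whether it has an occurrence crossing some element of $\Gamma$. This runs in polynomial time, so $k\textsc{-Attractor} \in \text{NP}$ for every $k$ (the paper restricts non-constant $k$ to be non-decreasing and diverging, so $k \leq n$ always).

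For NP-hardness at any constant $k \geq 3$, I would reduce from $k\textsc{-SetCover}$. Given an instance $\langle \mathcal{U}, C\rangle$ with $u$ elements and $m$ sets, construct $T_C$ over an alphabet containing $\mathcal{U}$ plus a large supply of fresh separator symbols, organized into two kinds of gadgets. \textbf{Element gadgets:} for each $j \in \mathcal{U}$, a region of length $\Theta(k)$ whose unique substrings force $4(k-1)$ attractor positions, giving the $4u(k-1)$ contribution. \textbf{Set gadgets:} for each $C_i$, a region of length $\Theta(kn_i + n_i^2)$ that (a) contains fresh separators forcing exactly $6n_i - 3$ positions into any attractor (summed over $i$, this contributes $6t - 3m$), and (b) places, concentrically around a single \emph{pivot} position $\pi_i$, a length-$k$ ``marker'' substring $w_j$ for each $j \in C_i$, so that $\pi_i$ lies simultaneously in an occurrence of every $w_j$ with $j \in C_i$. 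The markers $w_j$ are crafted so that the only occurrences of $w_j$ in $T_C$ lie inside set gadgets of sets containing $j$. The total length is $\bigO(uk^2 + tk + t')$ as advertised.

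The two directions of correctness are then straightforward from the gadget design. \textbf{Forward:} from a cover $C' \subseteq C$ of size $\leq p$, take all forced positions (of size $4u(k-1) + 6t - 3m$) plus the pivot $\pi_i$ of each $C_i \in C'$; forced positions cover all substrings except the element markers, and the chosen pivots cover every $w_j$ because every $j$ lies in some selected $C_i$. \textbf{Backward:} any $k$-attractor $\Gamma$ must include all forced positions (their associated substrings have a unique occurrence), leaving $\leq p$ discretionary positions; since the only occurrences of $w_j$ lie in gadgets of sets containing $j$, the discretionary positions determine a subcollection $C' \subseteq C$ (namely, those sets whose gadget is hit by $\Gamma$) that covers $\mathcal{U}$. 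A short exchange argument shows that replacing any discretionary position in a set gadget by its pivot $\pi_i$ does not increase $|\Gamma|$ and still covers the same markers, so $|C'| \leq p$.

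For non-constant $k$, the last step is to argue that on $T_C$ the smallest $k$-attractor and the smallest $k'$-attractor coincide for every $k \leq k' \leq |T_C|$. The inequality $|\Gamma^*_{k'}| \geq |\Gamma^*_k|$ is immediate. For the reverse, I would show that every substring of $T_C$ of length $> k$ contains a unique-occurrence separator from the forced part of the construction; hence every $k$-attractor automatically covers all longer substrings and is also a $k'$-attractor. This reduces the non-constant-$k$ case to the constant-$k$ reduction already established, completing the theorem. The main obstacle is the set-gadget design in paragraph two: simultaneously ensuring the concentric pivot covers all $n_i$ markers through genuine occurrences, that exactly $6n_i - 3$ positions are forced by the gadget's own unique substrings, and that marker occurrences are confined to gadgets of sets containing the element — the bookkeeping of separators, padding, and symmetry around $\pi_i$ is the delicate part, and the constants $4(k-1)$ and $6n_i - 3$ are dictated by precisely this bookkeeping.
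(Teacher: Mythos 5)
Your membership-in-NP argument and the overall strategy (reduce from $k$-SetCover, matching the target $4u(k-1)+p+6t-3m$) agree with the paper. The gap is in the set-gadget design, and it is not a bookkeeping detail but a structural obstruction.

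You ask for a single pivot position $\pi_i$ per set gadget such that, for every $j \in C_i$, an occurrence of the length-$k$ marker $w_j$ crosses $\pi_i$, and simultaneously you want each $w_j$ to be ``confined'' so its only occurrences in $T_C$ are inside gadgets of sets containing $j$. The natural (and in essence the only) way to get confinement is to build $w_j$ from $j$-specific symbols. But then two markers $w_j$, $w_{j'}$ with $j\ne j'$ have disjoint alphabets, so their occurrences in $T_C$ occupy disjoint position sets, and hence cannot both contain $\pi_i$. More generally, any two length-$k$ windows both containing $\pi_i$ overlap in at least one position (and if their starts differ by 1, in $k-1$ positions), and on the overlap the two markers would have to agree symbol-by-symbol, which element-specific markers cannot do. Trying to fix this by letting the markers share a separator at $\pi_i$ either limits you to $n_i\le 2$ markers (the two one-sided windows) or forces the markers to become set-specific, which destroys the correspondence with set cover (there would no longer be a single test string per element $t\in\mathcal U$ that is covered iff $t$ is). So the pivot mechanism cannot deliver what both directions of your correctness argument assume.

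The paper sidesteps this entirely. In its $S_i$, the $n_i$ element markers $x_{c_{i,j}}^{(1)}\cdots x_{c_{i,j}}^{(k)}$ sit in disjoint windows; the two canonical attractor restrictions to $S_i$ are a ``minimum'' set $\Gamma_{S,i}$ of $2n_i{+}1$ positions that misses every marker, and a ``nearly-universal'' set $\Gamma_{S,i}'$ of $2n_i{+}2$ positions that hits all $n_i$ markers by landing one position on the first symbol of each. The $+1$ overhead that encodes ``$C_i$ is selected'' is realized by a collective shift of $n_i$ positions plus one extra, not by a single extra pivot. Once you give up the pivot, the exchange argument in your backward direction also disappears; the paper instead counts how many $S_i$ receive more than $2n_i{+}1$ positions (at most $p$ of them) and shows those indices form a cover, using the uniqueness of $\Gamma_{S,i}$ among all $2n_i{+}1$-position solutions. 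Finally, for the non-constant-$k$ step, your claim that every length-$(k{+}1)$ substring of $T_C$ contains a uniquely occurring separator is not true of the paper's $T_C$ (e.g.\ $\#^{k-1}\$_{i,1}\$_{i,2}$ has several occurrences and none of its symbols is unique); the paper instead verifies explicitly that the particular optimal attractor $\Gamma_{C'}$ it constructs covers a short list of long-substring families, which is a weaker but sufficient statement. As written, your proposal does not yield a valid reduction; you would need to replace the pivot gadget with a per-marker covering mechanism along the lines of $\Gamma_{S,i}$ versus $\Gamma_{S,i}'$.
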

\begin{proof}
  Assume first that $k \geq 3$ is constant.  We show a polynomial time
  reduction from \textsc{$k$-SetCover} to
  \textsc{$k$-Attractor}.\footnote{The proof only requires that $k\geq
    3$ but we point out that the reduction is valid also for $k=2$.}
  Denote the sizes of individual sets in the collection $C$ by
  $n_i=|C_i|>0$ and let
  $C_i=\{c_{i,1},c_{i,2},\ldots,c_{i,n_i}\}$. Recall that
  $u=|\mathcal{U}|$ and $m=|C|$.
	
  Let
  \[
    \Sigma = \bigcup_{i=1}^{u}\bigcup_{j=1}^{k}\{x_i^{(j)}\} \cup
    \bigcup_{i=1}^{m}\bigcup_{j=1}^{n_i+1}\{\$_{i,j}\} \cup
    \bigcup_{i=1}^{u}\bigcup_{j=2}^{k}\{\$_{i,j}',\$_{i,j}''\} \cup
    \bigcup_{i=1}^{m}\bigcup_{j=2}^{n_i}\{\$_{i,j}''',\$_{i,j}^{(4)}\}
    \cup \{\#\}
  \]
  be our alphabet. Note that in the construction below, $x_i^{(j)}$ or
  $\$_{i,j}^{(4)}$ denotes a single symbol, while $\#^{k-1}$ denotes a
  concatenation of $k-1$ occurrences of symbol $\#$. We will now build
  a string $T_C$ over the alphabet $\Sigma$.
	
  Let
  \[
    T_C = \prod_{i=1}^{u}P_i \cdot \prod_{i=1}^{m} R_i S_i,
  \]
  where $\cdot$/$\prod$ denotes the concatenation of strings and
  $P_i$, $R_i$, and $S_i$ are defined below.
	
  Intuitively, we associate each $t\in \mathcal{U}$ with the substring
  $x_{t}^{(1)}\cdots x_{t}^{(k)}$ and each collection $C_i$ with
  $S_i$.  Each $S_i$ will contain all $n_i$ strings corresponding to
  elements of $C_i$ as substrings. The aim of $S_i$ is to simulate ---
  via how many positions are used within $S_i$ in the solution to the
  \textsc{$k$-Attractor} on $T_C$ --- the choice between not including
  $C_i$ in the solution to \textsc{$k$-SetCover} on $C$ (in which case
  $S_i$ is covered using a minimum possible number of positions that
  necessarily leaves uncovered all substrings corresponding to
  elements of $C_i$) or including $C_i$ (in which case, by using only
  one additional position in the cover of $S_i$, the solution covers
  all substrings unique to $S_i$ \emph{and simultaneously} all $n_i$
  substrings of $S_i$ corresponding to elements of $C_i$).  Gadgets
  $R_i$ and $P_i$ are used to cover ``for free'' certain substrings
  occurring in $S_i$ so that any algorithm solving
  \textsc{$k$-Attractor} for $T_C$ will not have to optimize for their
  coverage within $S_i$. This will be achieved as follows: each gadget
  $P_i$ (similar for $R_i$) will have $x_{P_i}$ non-overlapping
  substrings (for some $x_{P_i}$) that appear only in $P_i$ and
  nowhere else in $T_C$. This will imply that any $k$-attractor for
  $T_C$ has to include at least $x_{P_i}$ positions within $P_i$. On
  the other hand, we will show that there exists an optimal choice of
  $x_{P_i}$ positions within $P_i$ that covers all those unique
  substrings, plus the substrings of $P_i$ occurring also $S_i$ that
  we want to cover ``for free'' within $P_i$.
	
  For $i\in\{1,2,\ldots,m\}$, let (brackets added for clarity)
  \[
    S_i = \left( \prod_{j=1}^{n_i} \#^{k-1} \$_{i,1} \cdots \$_{i,j}
    x_{c_{i,j}}^{(1)} \cdots x_{c_{i,j}}^{(k)} \$_{i,j} \right)
    \#^{k-1} \$_{i,1} \cdots \$_{i,n_i+1}.
  \]
	
  An example of $S_i$ for $k=6$ and $n_i=4$ is reported below. The
  meaning of overlined and underlined characters is explained next.
    \begin{align*}
      S_i =\ & \# \# \# \# \# \underline{\$_{i,1}}
	\overline{x_{c_{i,1}}^{(1)}} x_{c_{i,1}}^{(2)} x_{c_{i,1}}^{(3)}
	x_{c_{i,1}}^{(4)} x_{c_{i,1}}^{(5)} x_{c_{i,1}}^{(6)}
	\overline{\underline{\$_{i,1}}} \\
	& \# \# \# \# \# \$_{i,1}
	\underline{\$_{i,2}} \overline{x_{c_{i,2}}^{(1)}}
	x_{c_{i,2}}^{(2)} x_{c_{i,2}}^{(3)} x_{c_{i,2}}^{(4)}
	x_{c_{i,2}}^{(5)} x_{c_{i,2}}^{(6)}\overline{\underline{\$_{i,2}}}\\
	& \# \# \# \# \# \$_{i,1}
	\$_{i,2} \underline{\$_{i,3}} \overline{x_{c_{i,3}}^{(1)}}
	x_{c_{i,3}}^{(2)} x_{c_{i,3}}^{(3)} x_{c_{i,3}}^{(4)}
	x_{c_{i,3}}^{(5)} x_{c_{i,3}}^{(6)}
	\overline{\underline{\$_{i,3}}}\\
	& \# \# \# \# \# \$_{i,1} \$_{i,2}
	\$_{i,3} \underline{\$_{i,4}} \overline{x_{c_{i,4}}^{(1)}}
	x_{c_{i,4}}^{(2)} x_{c_{i,4}}^{(3)}x_{c_{i,4}}^{(4)}
	x_{c_{i,4}}^{(5)} x_{c_{i,4}}^{(6)}
	\overline{\underline{\$_{i,4}}}\\[1.3ex]
  & \# \# \# \# \# \overline{\$_{i,1}}
	\$_{i,2} \$_{i,3} \$_{i,4} \overline{\underline{\$_{i,5}}}
    \end{align*}
	
  Any $k$-attractor of $T_C$ contains at least $2n_i+1$ positions
  within $S_i$ because: (i) $S_i$ contains $2n_i$ non-overlapping
  substrings of length $k$, each of which
  \emph{necessarily}\footnote{That is, independent of what is $C_i$.
    Importantly, although any of the substrings in
    $\bigcup_{j=1}^{n_i}\{x_{c_{i,j}}^{(1)}\cdots x_{c_{i,j}}^{(k)}\}$
    could have the only occurrence in $S_i$, they are not
    \emph{necessarily} unique to $S_i$. This situation is analogous to
    \textsc{$k$-SetCover} when some $t\in\mathcal{U}$ is covered by
    only one set in $C$, and thus that set has to be included in the
    solution.}  occurs in $S_i$ only once and nowhere
  else\footnote{This can be verified by consulting the definitions of
    families $\{R_t\}_{t=1}^{m}$ and $\{P_t\}_{t=1}^{u}$ that follow.}
  in $T_C$: $$\bigcup_{j=1}^{n_i}\{\$_{i,j}\#^{k-1}\}
  \cup\bigcup_{j=1}^{n_i}\{\$_{i,j}x_{c_{i,j}}^{(1)}\cdots
  x_{c_{i,j}}^{(k-1)}\},$$ and (ii) $S_i$ contains symbol
  $\$_{i,n_i+1}$, which occurs only once in $S_i$ and nowhere else in
  $T_C$, and does not \mbox{overlap any} of the $2n_i$ substrings mentioned
  before. With this in mind we now observe that $S_i$ has the
  following \mbox{two properties}:
  \begin{enumerate}
    \item There exists a ``minimum'' set $\Gamma_{S,i}$ of $2n_i+1$
      positions within the occurrence of $S_i$ in $T_C$ that covers
      all substrings of $S_i$ of length $\leq k$ that necessarily
      occur only in $S_i$ and nowhere else in $T_C$. The \mbox{set
      $\Gamma_{S,i}$ includes}: the leftmost occurrence of $\$_{i,j}$
      for $j\in\{1,\ldots,n_i+1\}$ and the second occurrence from the
      left of $\$_{i,j}$ for $j\in\{1,\ldots,n_i\}$ ($\Gamma_{S,i}$ is
      shown in the above example using underlined
      positions). Furthermore, $\Gamma_{S,i}$ is \emph{the only} such
      set.  This is because in any such set there needs to be at least
      one position inside each of the $2n_i+1$ non-overlapping
      substrings of $S_i$ mentioned above. In the first $n_i$ of those
      substrings, $\bigcup_{j=1}^{n_i}\{\$_{i,j}\#^{k-1}\}$, the first
      position intersects $k$ distinct substrings of length $k$ that
      necessarily occur only once in $S_i$ and nowhere else in $T_C$,
      and hence in those substrings the position in the attractor is
      fixed. Next, the position in any such set is also trivially
      fixed for the only occurrence of $\$_{i,n_i+1}$ in $T_C$.  Let
      us then finally look at each of the remaining $n_i$ substrings,
      $\bigcup_{j=1}^{n_i}\{\$_{i,j}x_{c_{i,j}}^{(1)}\cdots
      x_{c_{i,j}}^{(k-1)}\}$, starting from the rightmost
      ($j=n_i$). In the substring $\$_{i,n_i}x_{c_{i,n_i}}^{(1)}\cdots
      x_{c_{i,n_i}}^{(k-1)}$ the first position intersects only $k-1$
      substrings of $S_i$ of length $k$ that necessarily occur only
      once in $S_i$ and nowhere else in $T_C$. However, all other
      occurrences (for $j=n_i$ there is only one; in general there is
      $n_i-j+1$ occurrences) in $T_C$ of the remaining non-unique
      substring intersecting the first position,
      $\#^{k-n_i}\$_{i,1}\cdots \$_{i,n_i}$, are in $S_i$ (and nowhere
      else in $T_C$), to the right of the discussed occurrence of
      $\$_{i,n_i}x_{c_{i,n_i}}^{(1)}\cdots x_{c_{i,n_i}}^{(k-1)}$, and
      are not covered.  Thus, the attractor needs to include the first
      position in this substring. Repeating this argument for
      $j=n_i-1, \ldots, 1$ yields the claim.  Now we observe that the
      only substrings of $S_i$ of length $\leq k$ not covered by
      $\Gamma_{S,i}$ are strings $\{\#^{k-1}\} \cup
      \bigcup_{j=1}^{n_i}\{x_{c_{i,j}}^{(1)}\cdots
      x_{c_{i,j}}^{(k)}\}$ and all their proper substrings.  We have
      thus demonstrated that if in any $k$-attractor of $T_C$, $S_i$
      is covered using the minimum number of $2n_i+1$ positions, these
      positions \emph{must} be precisely $\Gamma_{S,i}$ and hence, in
      particular, any of the strings in the set
      $\bigcup_{j=1}^{n_i}\{x_{c_{i,j}}^{(1)}\cdots
      x_{c_{i,j}}^{(k)}\}$ is then not covered within $S_i$.
    \item There exists a ``nearly-universal'' set $\Gamma_{S,i}'$ of
      $2n_i+2$ positions within the occurrence of $S_i$ in $T_C$ that
      covers: (i) all substrings of $S_i$ of length $\leq k$ that
      necessarily occur only in $S_i$ and nowhere else in $T_C$, and
      (ii) $\bigcup_{j=1}^{n_i}\{x_{c_{i,j}}^{(1)}\cdots
      x_{c_{i,j}}^{(k)}\}$. The set $\Gamma_{S,i}'$ includes: the only
      occurrence of $x_{c_{i,j}}^{(1)}$ for $j\in\{1,\ldots,n_i\}$,
      the second occurrence of $\$_{i,j}$ for $j\in\{1,\ldots,n_i\}$,
      the only occurrence of $\$_{i,n_i+1}$, and the last occurrence
      of $\$_{i,1}$ ($\Gamma_{S,i}'$ is shown in the above example
      using overlined positions). The only substrings of $S_i$ of
      length $\leq k$ not covered by $\Gamma_{S,i}'$ are strings
      $\{\#^{k-1}\}\cup \bigcup_{j=1}^{n_i}\{x_{c_{i,j}}^{(2)} \cdots
      x_{c_{i,j}}^{(k)}\}$ and all their proper substrings, and all
      substrings of length $> 1$ of the string
      $\$_{i,2}\cdots\$_{i,n_i}$. For these strings we introduce
      separate gadget strings described next.
  \end{enumerate}
	
  To finish the construction, we will ensure that for any
  $i\in\{1,\ldots,m\}$, certain substrings of $S_i$ are covered ``for
  free'' elsewhere in $T_C$. To this end we introduce families
  $\{P_i\}_{i=1}^{u}$ and $\{R_i\}_{i=1}^{m}$. Specifically, all
  strings (and all their proper substrings) in the set
  $\{\#^{k-1}\}\cup\bigcup_{i=1}^{u}\{x_{i}^{(2)}\cdots x_{i}^{(k)}\}$
  will be covered for free in $\{P_i\}_{i=1}^{u}$. Analogously, all
  strings (and all their proper substrings) in
  $\bigcup_{i=1}^{m}\{\$_{i,2}\cdots\$_{i,n_i}\}$ will be covered for
  free in $\{R_i\}_{i=1}^{m}$.  Assuming these substrings are covered:
  (i) if we use $\Gamma_{S,i}$ to cover unique substrings of $S_i$,
  the only substrings of $S_i$ of length $\leq k$ not covered by
  $\Gamma_{S,i}$ will be $\bigcup_{j=1}^{n_i}\{x_{c_{i,j}}^{(1)}\cdots
  x_{c_{i,j}}^{(k)}\}$, and (ii) if we use $\Gamma_{S,i}'$, all
  substrings of $S_i$ of length $\leq k$ will be covered.
	
  We now show the existence of the families $\{P_i\}_{i=1}^{u}$ and
  $\{R_i\}_{i=1}^{m}$.  For $i\in\{1,2,\ldots,u\}$, let
  \[
    P_i = \prod_{j=2}^{k} \#^{k-1} \$_{i,j}' x_{i}^{(2)} \cdots
    x_{i}^{(j)} \$_{i,j}'' \#^{k-1} \$_{i,j}' x_{i}^{(2)} \cdots
    x_{i}^{(j)} \$_{i,j}' \$_{i,j}' \$_{i,j}''.
  \]
	
  A prefix of $P_i$ for $k=6$ is
  \begin{align*}
    P_i =\ 
    & \# \# \# \# \# \$_{i,2}' \underline{x_i^{(2)}} \$_{i,2}''
      \# \# \# \# \underline{\#} \$_{i,2}' x_i^{(2)} \underline{\$_{i,2}'} \$_{i,2}' \underline{\$_{i,2}''}\\[0.5ex]
    & \# \# \# \# \# \$_{i,3}' x_i^{(2)} \underline{x_i^{(3)}} \$_{i,3}''
      \# \# \# \# \# \underline{\$_{i,3}'} x_i^{(2)} x_i^{(3)}\underline{\$_{i,3}'} \$_{i,3}' \underline{\$_{i,3}''}\\[0.5ex]
    & \# \# \# \# \# \$_{i,4}' x_i^{(2)} x_i^{(3)} \underline{x_i^{(4)}} \$_{i,4}''
      \# \# \# \# \# \underline{\$_{i,4}'} x_i^{(2)} x_i^{(3)} x_i^{(4)} \underline{\$_{i,4}'} \$_{i,4}' \underline{\$_{i,4}''}.
  \end{align*}
	
  Any $k$-attractor of $T_C$ contains at least $4(k-1)$ positions
  within $P_i$ because there are $4(k-1)$ non-overlapping substrings
  of length two of $P_i$ that occur only in $P_i$ and nowhere else in
  $T_C$.\footnote{This for example enforces $k \geq 2$ in our proof.}
  These substrings are, for $j\in\{2,\ldots,k\}$: $\# \$_{i,j}'$,
  $x_{i}^{(j)} \$_{i,j}''$, $x_{i}^{(j)} \$_{i,j}'$,
  $\$_{i,j}'\$_{i,j}''$.
	
  On the other hand, there exists a ``universal'' set $\Gamma_{P,i}$
  of $4(k-1)$ positions within the occurrence $P_i$ in $T_C$ that
  covers \emph{all} substrings of $P_i$ of length $\leq
  k$.\footnote{Note a small subtlety here. Because each of the gadgets
    $\{P_i\}_{i=1}^{u}$, $\{S_i\}_{i=1}^{m}$, $\{R_i\}_{i=1}^{m}$
    begins with $\#^{k-1}$ and each $P_i$ is followed by some other
    gadget in $T_C$, the following set of substrings of $P_i$:
    $\{\$_{i,k}''\#^{t}\}_{t=1}^{k-1}$ will indeed be covered by
    $\Gamma_{P,i}$ but for $k>2$ the covered occurrences are not
    substrings of $P_i$. An analogous property holds for
    $\{R_i\}_{i=1}^{m}$.} In particular, $\Gamma_{P,i}$ covers the
  strings $x_i^{(2)}\cdots x_{i}^{(k)}$ and $\#^{k-1}$, and all their
  proper substrings.  The set $\Gamma_{P,i}$ includes: the position of
  the leftmost occurrence of $x_{i}^{(j)}$ for $j\in\{2,\ldots,k\}$,
  the position preceding the second occurrence of $\$_{i,2}'$ from the
  left, the third occurrence of $\$_{i,2}'$ from the left, the second
  and third occurrences of $\$_{i,j}'$ from the left for
  $j\in\{3,\ldots,k\}$, and the second occurrence of $\$_{i,j}''$ from
  the left for $j\in\{2,\ldots,k\}$.  The positions in $\Gamma_{P,i}$
  are underlined in the above example.
	
  For $i\in\{1,2,\ldots,m\}$, let
  \[
    R_i = \prod_{j=2}^{n_i} \#^{k-1} \$_{i,j}''' \$_{i,2} \cdots
    \$_{i,j} \$_{i,j}^{(4)} \#^{k-1} \$_{i,j}''' \$_{i,2} \cdots
    \$_{i,j} \$_{i,j}''' \$_{i,j}''' \$_{i,j}^{(4)}.
  \]
	
  An example of $R_i$ for $k=6$ and $n_i=4$ is
  \begin{align*}
    R_i =\
    & \# \# \# \# \# \$_{i,2}''' \underline{\$_{i,2}} \$_{i,2}^{(4)}
      \# \# \# \# \underline{\#} \$_{i,2}''' \$_{i,2} \underline{\$_{i,2}'''} \$_{i,2}''' \underline{\$_{i,2}^{(4)}} \\[0.5ex]
    & \# \# \# \# \# \$_{i,3}''' \$_{i,2} \underline{\$_{i,3}} \$_{i,3}^{(4)}
      \# \# \# \# \#\underline{\$_{i,3}'''} \$_{i,2} \$_{i,3} \underline{\$_{i,3}'''} \$_{i,3}''' \underline{\$_{i,3}^{(4)}}\\[0.5ex]
    & \# \# \# \# \# \$_{i,4}''' \$_{i,2} \$_{i,3} \underline{\$_{i,4}} \$_{i,4}^{(4)}
      \# \# \# \# \# \underline{\$_{i,4}'''} \$_{i,2} \$_{i,3} \$_{i,4} \underline{\$_{i,4}'''} \$_{i,4}''' \underline{\$_{i,4}^{(4)}}.
  \end{align*}
	
  Note, that if $n_i=1$ then $R_i$ is the empty string. Suppose that
  $R_i$ is non-empty, i.e., $n_i\geq 2$.  The construction of $R_i$ is
  analogous to $P_i$.  Any $k$-attractor of $T_C$ contains at least
  $4(n_i-1)$ positions within $R_i$.  On the other hand, there exists
  a ``universal'' set $\Gamma_{R,i}$ of $4(n_i-1)$ positions of $R_i$
  that covers \emph{all} substrings of $R_i$ of length $\leq k$.  In
  particular, $\Gamma_{R,i}$ covers the string $\$_{i,2}\cdots
  \$_{i,n_i}$ and all its proper substrings.  The set $\Gamma_{R,i}$
  includes: the position of the leftmost occurrence of $\$_{i,j}$ for
  $j\in\{2,\ldots,n_i\}$, the position preceding the second occurrence
  of $\$_{i,2}'''$ from the left, the third occurrence of
  $\$_{i,2}'''$ from the left, the second and third occurrence of
  $\$_{i,j}'''$ for $j\in\{3,\ldots,n_i\}$, and the second occurrence
  of $\$_{i,j}^{(4)}$ from the left for $j\in\{2,\ldots,n_i\}$.  The
  positions in $\Gamma_{R,i}$ are underlined in the example.
	
  With the above properties, we are now ready to prove the following
  claim: an instance $\langle \mathcal{U},C \rangle$ of
  \textsc{$k$-SetCover} has a solution of size $\leq p$ if and only if
  $T_C$ has a $k$-attractor of size $\leq 4u(k-1)+p+6t-3m$, where
  $t=\sum_{i=1}^{m}n_i$.
	
  ``$(\Rightarrow)$'' Let $C' \subseteq C$ be a cover of $\mathcal{U}$
  of size $p'\leq p$ and let
  \begin{align*}
    \Gamma_{C'} = 
    \bigcup\{\Gamma_{S,i}'\mid C_i \in C'\} \cup
    \bigcup\{\Gamma_{S,i}\mid C_i\not\in C'\} \cup
    \bigcup_{i=1}^{u}\Gamma_{P,i} \cup \bigcup_{i=1}^{m}\Gamma_{R,i}.
  \end{align*}
	
  $\Gamma_{C'}$ contains universal attractors $\Gamma_{P,\cdot}$ and
  $\Gamma_{R,\cdot}$ introduced above for $\{P_i\}_{i=1}^{u}$ and
  $\{R_i\}_{i=1}^{m}$, and nearly-universal attractors
  $\Gamma_{S,\cdot}'$ for elements of $\{S_i\}_{i=1}^{m}$
  corresponding to elements of $C'$. All other strings in
  $\{S_i\}_{i=1}^{m}$ are covered using minimum attractors
  $\Gamma_{S,\cdot}$. It is easy to check that
  $|\Gamma_{C'}|=4u(k-1)+p'+6t-3m$.  From the above discussion
  $\Gamma_{C'}$ covers all substrings of $T_C$ of length $\leq k$
  inside $\{P_i\}_{i=1}^{u}$, $\{R_i\}_{i=1}^{m}$, and
  $\{S_i\}_{i=1}^{m}$. In particular, $\{x_i^{(1)}\cdots
  x_i^{(k)}\}_{i=1}^{u}$ are covered because $C'$ is a cover of
  $\mathcal{U}$.  All other substrings of $T_C$ of length $\leq k$
  span at least two gadget strings and thus are also covered since all
  sets forming $\Gamma_{C'}$ include the last position of the gadget
  string.
	
  ``$(\Leftarrow)$'' Let $\Gamma$ be a $k$-attractor of $T_C$ of size
  $\leq 4u(k-1)+p+6t-3m$.  We will show that $\mathcal{U}$ must have a
  cover of size $\leq p$ using elements from $C$.  Let $\mathcal{I}$
  be the set of indices $i\in\{1,\ldots,m\}$ for which $\Gamma$
  contains more than $2n_i+1$ positions within the occurrence of $S_i$
  in $T_C$. To bound the cardinality of $\mathcal{I}$ we first observe
  that by the above discussion, $\Gamma$ cannot have less than
  $\sum_{i=1}^{m}4(n_i-1)+\sum_{i=1}^{u}4(k-1)=4u(k-1)+4t-4m$
  positions within all occurrences of $\{P_i\}_{i=1}^{u}$ and
  $\{R_i\}_{i=1}^{m}$ in $T_C$.  Thus, there is only at most $2t+m+p$
  positions left to use within $\{S_i\}_{i=1}^{m}$.  Furthermore, each
  of $S_i$, $i\in\{1,\ldots,m\}$ requires $2n_i+1$ positions, and
  hence there cannot be more than $p$ indices where $\Gamma$ uses more
  positions than necessary. Thus, $|\mathcal{I}|\leq p$. Let
  $C'_{\Gamma}=\{C_i \in C \mid i \in \mathcal{I}\}$. We now show that
  $C'_{\Gamma}$ is a cover of $\mathcal{U}$.  Take any
  $t\in\mathcal{U}$. Since $\Gamma$ is a $k$-attractor of $T_C$, the
  string $x_t^{(1)}\cdots x_t^{(k)}$ is covered in some $S_{i_t}$ such
  that $t\in C_{i_t}$. By the above discussion for this to be possible
  $\Gamma$ must use more than $2n_{i_t}+1$ positions within
  $S_{i_t}$. Thus, $i_t\in\mathcal{I}$ and hence $C_{i_t}\in
  C_{\Gamma}'$.
	
  The above reduction proves the NP-completeness of
  \textsc{$k$-Attractor} for any constant $k\geq 3$.  We now show a
  property of $T_C$ that will allow us to obtain the NP-completeness
  for non-constant $k$.  Denote the size of the smallest $k$-attractor
  of string $X$ by $\gamma_k^*(X)$. By definition a $k'$-attractor of
  string $X$ is also a $k$-attractor of $X$ for any $k < k'$ and thus
  for any $k\in\{1,\ldots,|X|-1\}$, $\gamma_k^*(X) \leq
  \gamma_{k+1}^*(X)$. The inequality in general can be strict, e.g.,
  for $X={\tt acacaacc}$, $\gamma_2^*(X)<\gamma_3^*(X)$.  We now show
  that for $T_C$ it holds $\gamma_{k}^*(T_C) = \gamma_{k'}^*(T_C)$ for
  any $k<k'\leq |T_C|$. Assume that $p$ is the size of the smallest
  $k$-set-cover of $\mathcal{U}$ and let $C'\subseteq C$ be the
  optimal cover.  Then, $\Gamma_{C'}$ (defined as above) is the
  smallest $k$-attractor of $T_C$ and, crucially, admits a particular
  structure, namely, it is a union of universal, nearly-universal and
  minimum attractors introduced above. We will now show that
  $\Gamma_{C'}$ is a $k'$-attractor of $T_C$. Since each of the sets
  forming $\Gamma_{C'}$ covers the last position of the corresponding
  gadget string, we can focus on substrings of length $>k$ entirely
  contained inside gadget strings.  To show the claim for
  $\{S_i\}_{i=1}^{m}$ it suffices to verify that all substrings of
  $\#^{k-1}\$_{i,1}\cdots \$_{i,n_i}$ of length $>k$ are covered in
  both $\Gamma_{S,i}$ and $\Gamma_{S,i}'$.  Analogously, for
  $\{P_i\}_{i=1}^{u}$ and $\{R_i\}_{i=1}^{m}$ it suffices to verify
  the claim for the families $\{\#^{k-1}\$_{i,j}'x_{i}^{(2)}\cdots
  x_{i}^{(j-1)}\}_{j=3}^{k}$ and $\{\#^{k-1}\$_{i,j}'''\$_{i,2}\cdots
  \$_{i,j-1}\}_{j=3}^{n_i}$.  Thus, $\Gamma_{C'}$ is a $k'$-attractor
  of $T_C$.
	
  To show the NP-completeness of $k$-\textsc{Attractor} for
  non-constant $k$ (in particular for $k=n$) consider any
  non-decreasing function $k=f(n)$ such that $\lim_{n \to \infty} f(n)
  = \infty$. Let $n_0=\min\{n\geq 1 \mid f(n) \geq 3\}$. Suppose that
  we have a polynomial-time algorithm for $f(n)$-\textsc{Attractor}.
  Consider an instance $\langle \mathcal{U},C \rangle$,
  $C=\{C_i\}_{i=1}^{m}$ of 3-\textsc{SetCover}.  To decide if $\langle
  \mathcal{U},C \rangle$ has a cover of size $\leq p$, we first build
  the string $T_C$. If $|T_C|<n_0$, we run a brute-force algorithm to
  find the answer in $\bigO(2^m\,{\rm poly}(t))=\bigO(2^{n_0}\,{\rm
    poly}(n_0))=\bigO(1)$ time, where $t=\sum_{i=1}^{m}|C_i|$.
  Otherwise, the answer is given by checking the inequality
  $\gamma_{3}^{*}(T_C)=\gamma_{f(|T_C|)}^{*}(T_C) \leq 8u+p+6t-3m$
  (where $u=|\mathcal{U}|$) in polynomial time.
\end{proof}

We further demonstrate that \textsc{Minimum}-$k$-\textsc{Attractor}
can be efficiently approximated up to a constant factor when $k\geq 3$
is constant, but unless P=NP, does not have a PTAS. This is achieved
by a reduction from vertex cover on bounded-degree graphs, utilizing
the smallest $k$-set cover as an intermediate problem. Using explicit
constants derived by Berman and Karpinski~\cite{bk99} for the vertex
cover, we also obtain explicit constants for our problem (and general
$k$).

\begin{theorem}
  \label{thm:apx-complete}
  For any constant $k\geq 3$, \textsc{Minimum-$k$-Attractor} is
  APX-complete.
\end{theorem}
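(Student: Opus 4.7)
The plan is to establish both APX-membership and APX-hardness of \textsc{Minimum-$k$-Attractor}.

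For APX-membership, I would exhibit a $2k$-approximation computable in linear time. Let $d$ be the number of distinct length-$k$ substrings of $T$. The algorithm outputs the set $\Gamma^{\mathrm{alg}}$ consisting of, for each such distinct substring, the starting position of its leftmost occurrence, together with the last $k-1$ positions of $T$. Any substring of length at most $k$ whose leftmost occurrence can be extended to the right to length $k$ is attracted by the starting position of the corresponding leftmost length-$k$ occurrence; substrings whose only occurrences lie in the last $k-1$ positions are attracted by those positions. Hence $\Gamma^{\mathrm{alg}}$ is a valid $k$-attractor of size at most $d+k-1$. For the matching lower bound, every position of $T$ lies inside at most $k$ distinct length-$k$ substrings (their starting positions form a window of $k$ consecutive indices), so any $k$-attractor $\Gamma^*$ satisfies $|\Gamma^*|\geq d/k$. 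Combining the two bounds yields $|\Gamma^{\mathrm{alg}}|/|\Gamma^*|\leq 2k$; the boundary case $d<k$ is handled trivially since then $|T|=\bigO(k)$.

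For APX-hardness, I would give an L-reduction from \textsc{$k$-Vertex-Cover} on graphs of bounded maximum degree $\Delta\leq k$, which is APX-hard with explicit inapproximability constants due to Berman and Karpinski~\cite{bk99}. The reduction factors into two stages. First, the textbook bijection sending a graph $G=(V,E)$ to the $k$-SetCover instance with universe $E$ and sets $\{N(v):v\in V\}$ is an L-reduction with constants equal to $1$: vertex covers and set covers correspond bijectively and preserve sizes. Second, I compose with the reduction of Theorem~\ref{thm:k-attractor-npc}, which transforms a $k$-SetCover instance $\langle \mathcal{U},C\rangle$ with optimum $p^*$ into a string $T_C$ whose smallest $k$-attractor has size $\gamma_k^*(T_C)=4u(k-1)+p^*+6t-3m$. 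Crucially, the backward direction of that proof already shows how to decode any $k$-attractor $\Gamma$ of $T_C$ into a cover $C'_\Gamma$ of $\mathcal{U}$ with $|C'_\Gamma|\leq |\Gamma|-(4u(k-1)+6t-3m)$, so excess attractor positions translate additively into excess cover sets; this is precisely the second L-reduction condition.

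The main obstacle is verifying the first L-reduction condition for the second stage, namely $\gamma_k^*(T_C)\leq \alpha\cdot p^*$ for some constant $\alpha$ depending only on $k$ and $\Delta$. The additive overhead $4u(k-1)+6t-3m$ is linear in $u=|E|$, $t=\sum_i|C_i|=2|E|$, and $m=|V|$, so I must relate $|E|$ and $|V|$ to $p^*$. This is where bounded degree is essential: on a graph of maximum degree $\Delta$ we have $|V|\leq(\Delta+1)p^*$ and $|E|\leq \Delta|V|/2\leq \Delta(\Delta+1)p^*/2$, so the additive overhead is $\bigO(k\Delta^2)\cdot p^*$ and the L-reduction holds with $\alpha=\bigO(k\Delta^2)$. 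Combined with the additive decoding just described, a hypothetical PTAS for \textsc{Minimum-$k$-Attractor} would yield a PTAS for \textsc{$k$-VC} on bounded-degree graphs, contradicting APX-hardness of the latter. The same chain of reductions, specialized to the explicit $3$-VC constant of Berman and Karpinski, will later yield the explicit $11809/11808-\varepsilon$ inapproximability bound announced in the introduction.
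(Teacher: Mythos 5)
Your proposal follows essentially the same strategy as the paper: a $2k$-approximation via a leftmost-occurrence argument for APX-membership, and APX-hardness by chaining \textsc{Minimum-$k$-VertexCover} $\to$ \textsc{Minimum-$k$-SetCover} $\to$ \textsc{Minimum-$k$-Attractor} through the reduction of Theorem~\ref{thm:k-attractor-npc}. The small differences are cosmetic: for membership, the paper works with $T^2$ to sidestep boundary effects while you pad with the last $k-1$ positions of $T$; for hardness, you phrase the argument explicitly as an L-reduction while the paper computes the resulting inapproximability ratio directly.

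One small flaw in the membership part: your disposal of the case $d<k$ by asserting ``$|T|=\bigO(k)$'' is not correct --- for example, $T=a^n$ has $d=1$ for any $n$. The bound nevertheless holds in that regime, but for a different reason: when $d<k$ one has $|\Gamma^{\mathrm{alg}}| \leq d + k - 1 < 2k \leq 2k\,\gamma_k^*(T)$ simply because $\gamma_k^*(T)\geq 1$. So the conclusion is right, but the justification offered would mislead a reader. The rest of the argument --- including the decoding $\Gamma \mapsto C'_\Gamma$ giving $\beta=1$ and the degree-based bounds $|V|\leq(\Delta+1)p^*$, $|E|\leq\Delta(\Delta+1)p^*/2$ giving $\alpha=\bigO(k\Delta^2)$ --- is sound.
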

\begin{proof}
  Denote the size of the smallest $k$-attractor of $T$ by
  $\gamma_k^*(T)$ and let $\sigma_k(T)$ be the number of different
  substrings of $T$ of length $k$. We claim that $\gamma_k^*(T) \leq
  \sigma_k(T^2) \leq 2k\gamma_k^*(T)$ (where $T^2$ is a concatenation
  of two copies of $T$).  To show the first inequality, define
  $\Gamma$ as the set containing the beginning of the leftmost
  occurrence of every distinct substring of $T^2$ of length $k$. Such
  $\Gamma$ can be easily computed in polynomial time. We claim that
  $\Gamma$ is a $k$-attractor of $T$. Consider any substring of $T$ of
  length $k'\leq k$ and let $T[i..i+k'-1]$ be its leftmost occurrence.
  Then, $T^2[i..i+k-1]$ is the leftmost occurrence of $T^2[i..i+k-1]$
  in $T^2$, as otherwise we would have an earlier occurrence of
  $T[i..i+k'-1]$ in $T$. Thus, $i\in\Gamma$.  On the other hand, each
  position in a $k$-attractor of $T^2$ covers at most $k$ distinct
  substrings of $T^2$ of length $k$. Thus $\gamma_{k}^{*}(T^2) \geq
  \lceil \sigma_k(T^2)/k \rceil$. Combining this with $\gamma_k^*(T^2)
  \leq \gamma_k^*(T)+1$ gives the second inequality.  Thus,
  \textsc{Minimum}-$k$-\textsc{Attractor} is in APX.
	
  To show that \textsc{Minimum-$k$-Attractor} is APX-hard we
  generalize the hardness argument of Charikar et al.~\cite{CLLPPSS05}
  from grammars to attractors. We show that to approximate
  \textsc{Minimum-$k$-VertexCover} (minimum vertex cover for graphs
  with vertex-degree bounded by $k$) in polynomial time below a factor
  $1+\epsilon$, for any constant $\epsilon>0$, it suffices to
  approximate \textsc{Minimum-$k$-Attractor} in polynomial time below
  a factor $1+\delta$, where $\delta=\epsilon/(2k^3+4k^2-3k+1)$. In
  other words, if \textsc{Minimum}-$k$-\textsc{Attractor} has a PTAS
  then \textsc{Minimum}-$k$-\textsc{VertexCover} also has a
  PTAS. Since \textsc{Minimum-$k$-VertexCover} is
  APX-hard~\cite{papa88}, this will yield the claim.
	
  Let $G=(V, E)$ be an undirected graph with vertex-degree bounded by
  $k$.  Assume that $|V|\leq|E|$ and that $G$ has no isolated
  vertices.\footnote{\textsc{Minimum}-$k$-\textsc{VertexCover} is
    still APX-hard under this assumption, since a PTAS for this case
    would give us a PTAS for the general case.}  Let $\mathcal{U}_G=E$
  and $C_G=\{E_v \mid v\in V\}$, where $E_v=\{e\in E \mid e\text{ is
    incident to } v\}$. Then, the size of the minimum $k$-set cover
  for $C_G$ is $p$ if and only if the minimum $k$-vertex cover of $G$
  has size $p$.  Consider the string $T_G := T_{C_G}$ as in
  Theorem~\ref{thm:k-attractor-npc}. The smallest $k$-attractor of
  $T_G$ has size $(8+4k)|E|-3|V|+p$ (since the universe size is $|E|$,
  the number of sets in $C_G$ is $|V|$, and their total cardinality is
  $2|E|$) if and only if the smallest vertex cover of $G$ has size
  $p$.

  Assume it is NP-hard to approximate \textsc{Minimum-$k$-VertexCover}
  below the ratio $1+\epsilon$. Then it is also NP-hard to approximate
  the smallest $k$-attractor for $T_G$ below the ratio
  \[
    r = \frac{(8+4k)|E|-3|V|+(1+\epsilon)p}{(8+4k)|E|-3|V|+p} = 1 +
    \frac{\epsilon p}{(8+4k)|E|-3|V|+p}.
  \]
	
  Since all vertices have degree at most $k$, $2|E|\leq k|V|$.
  Furthermore, since each vertex can cover at most $k$ edges, the size
  of the minimum $k$-vertex cover, $p$, must be at least
  $\frac{1}{k}|E|\geq \frac{1}{k}|V|$.  The expression above achieves
  its minimum when $|E|$ is large and $p$ is small.  From the
  constraints $|E| \leq\frac{k}{2}|V|$ and $p\geq \frac{1}{k}|V|$, we
  thus get the lower bound
  \[
    r \geq 1 + \frac{\epsilon \cdot \frac{1}{k}|V|}{(8+4k)\cdot
      \frac{k}{2}|V| - 3|V| + \frac{1}{k}|V|} = 1+
    \frac{\epsilon}{2k^3 + 4k^2 - 3k + 1}.
  \]
\end{proof}

\begin{corollary}
  \label{thm:approx-hardness-constant}
  For every constant $\epsilon>0$ and every (not necessarily constant)
  $k \geq 3$, it is NP-hard to approximate
  \textsc{Minimum-$k$-Attractor} within factor $11809/11808-\epsilon$.
\end{corollary}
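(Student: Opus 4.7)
The plan is to specialize the PTAS-threshold inequality established in the proof of Theorem~\ref{thm:apx-complete} to $k=3$ and substitute into it an explicit inapproximability constant for vertex cover on degree-bounded graphs.  Berman and Karpinski~\cite{bk99} proved that, unless P=NP, \textsc{Minimum-3-VertexCover} cannot be approximated in polynomial time within any factor strictly smaller than $145/144$; that is, for every $\epsilon' > 0$ it is NP-hard to approximate it within $145/144 - \epsilon'$.

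First, I will instantiate the argument of Theorem~\ref{thm:apx-complete} at $k=3$.  Evaluating the expression $2k^3 + 4k^2 - 3k + 1$ at $k=3$ gives $54 + 36 - 9 + 1 = 82$, so the reduction transforms a hardness factor $1 + \eta$ for \textsc{Minimum-3-VertexCover} into a hardness factor $1 + \eta/82$ for \textsc{Minimum-3-Attractor}.  Plugging in $\eta = 1/144 - \epsilon'$ from Berman--Karpinski and using $144 \cdot 82 = 11808$ yields that it is NP-hard to approximate \textsc{Minimum-3-Attractor} within a factor of $11809/11808 - \epsilon$ for every $\epsilon > 0$ (after absorbing $\epsilon'/82$ into $\epsilon$).

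Second, I will promote this from $k = 3$ to every $k \geq 3$ by appealing to the structural invariance of the reduction string.  In the proof of Theorem~\ref{thm:k-attractor-npc} it was shown that the string $T_C$ (and in particular its specialization $T_G$ used in Theorem~\ref{thm:apx-complete}) satisfies $\gamma_{3}^{*}(T_C) = \gamma_{k'}^{*}(T_C)$ for every $3 \leq k' \leq |T_C|$.  Consequently, any polynomial-time approximation algorithm for \textsc{Minimum-$k$-Attractor} (with $k \geq 3$ arbitrary, possibly depending on $n$) run on $T_G$ would output an attractor whose cardinality is compared against the same optimum value as in the case $k = 3$, and would therefore yield a polynomial-time approximation of the same ratio for \textsc{Minimum-3-Attractor}; the inapproximability bound of $11809/11808 - \epsilon$ transfers verbatim.

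The argument is essentially plug-and-play once Theorem~\ref{thm:apx-complete} and the property from Theorem~\ref{thm:k-attractor-npc} are in hand, so the main obstacle is purely bookkeeping: keeping the three distinct $\epsilon$-parameters (the one in the Berman--Karpinski bound, the one produced by dividing through by $82$, and the one in the corollary's statement) cleanly separated, and verifying that the arithmetic $1 + 1/(144 \cdot 82) = 11809/11808$ is carried out correctly.  No further combinatorial insight is needed beyond what has already been developed.
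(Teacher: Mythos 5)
Your proof is correct and follows the same route as the paper: instantiate the APX-hardness reduction of Theorem~\ref{thm:apx-complete} at $k=3$, plug in the Berman--Karpinski constant $145/144$ to get $1 + \tfrac{1/144}{82} = 11809/11808$, and extend to arbitrary $k \geq 3$ via the invariance $\gamma_3^*(T_G)=\gamma_{k'}^*(T_G)$ established for the reduction string. The arithmetic $2\cdot 27 + 4\cdot 9 - 9 + 1 = 82$ and $144\cdot 82 = 11808$ checks out.
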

\begin{proof}
  By~\cite{bk99}, \textsc{Minimum-$3$-VertexCover} is NP-hard to
  approximate below a factor $1+\epsilon_3=\frac{145}{144}$. By
  Theorem~\ref{thm:apx-complete} it is equally hard to approximate
  \textsc{Minimum-$3$-Attractor} below $1+\frac{\epsilon_3}{2k^3 +
    4k^2 - 3k + 1}$, where $k=3$.  The claim for larger (and
  non-constant) $k$ follows from the property
  $\gamma_{k}^*(T_G)=\gamma_{k'}^*(T_G)$, $k < k' \leq |T_G|$ of the
  string $T_G$ used in the proof on Theorem~\ref{thm:apx-complete}.
\end{proof}

Theorem~\ref{thm:apx-complete} implies a $2k$-approximation algorithm
for \textsc{Minimum-$k$-Attractor}, $k\geq 3$.  By reducing the
problem to \textsc{Minimum-$k$-SetCover} we can however obtain a
better ratio.

\begin{theorem}
  \label{thm:log-approx}
  For any $k\geq 3$, \textsc{Minimum-$k$-Attractor} can be
  approximated in polynomial time up to a factor of
  $\mathcal{H}(k(k+1)/2)$, where
  $\mathcal{H}(p)=\sum_{i=1}^{p}\frac{1}{i}$ is the $p$-th harmonic
  number.  In particular, \textsc{MinimumAttractor} can be
  approximated to a factor $\mathcal{H}(n(n+1)/2)\leq 2\ln((n+1)/\sqrt
  2) + 1$.
\end{theorem}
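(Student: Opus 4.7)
The plan is to reduce \textsc{Minimum-$k$-Attractor} directly to \textsc{Minimum-Set-Cover} and then invoke the classical greedy algorithm of Johnson/Chv\'atal/Lov\'asz, which approximates Set Cover within a factor of $\mathcal{H}(s_{\max})$, where $s_{\max}$ is the size of the largest set in the instance.

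First, I would construct the Set Cover instance as follows. Let the universe $\mathcal{U}$ be the set of distinct substrings of $T$ of length at most $k$; by standard suffix-array/suffix-tree techniques, $\mathcal{U}$ can be enumerated and its elements uniquely named in $\mathrm{poly}(n)$ time. For each text position $p \in [1..n]$, define $S_p \subseteq \mathcal{U}$ to consist of those distinct substrings that admit some occurrence in $T$ crossing position $p$. Directly from \cref{def: string attractor} (specialised to length-$\leq k$ substrings), a set $\Gamma \subseteq [1..n]$ is a $k$-attractor of $T$ if and only if $\bigcup_{p \in \Gamma} S_p = \mathcal{U}$. Thus the minimum $k$-attractor problem is literally an instance of Minimum Set Cover.

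Second, I would bound $s_{\max} = \max_p |S_p|$. A distinct substring of length $\ell$ contributes to $S_p$ only if it has an occurrence whose starting position lies in $\{p-\ell+1, \dots, p\}$, of which there are at most $\ell$ choices. Summing over $\ell = 1, \dots, k$ yields
\[
  |S_p| \leq \sum_{\ell=1}^{k} \ell = \frac{k(k+1)}{2}.
\]
Running the standard greedy algorithm on this instance therefore produces a $k$-attractor whose size is at most $\mathcal{H}(k(k+1)/2)$ times the optimum, in polynomial time. For the special case $k=n$ (the full attractor problem), the same argument gives ratio $\mathcal{H}(n(n+1)/2)$, and the stated closed-form bound follows from the standard inequality $\mathcal{H}(p) \leq \ln p + 1$:
\[
  \mathcal{H}(n(n+1)/2) \;\leq\; \ln(n(n+1)/2) + 1 \;\leq\; \ln((n+1)^2/2) + 1 \;=\; 2\ln((n+1)/\sqrt{2}) + 1.
\]

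There is essentially no technical obstacle: once the universe-and-sets formulation is in place, all the work is done by the combinatorial bound $|S_p| \leq k(k+1)/2$ and the classical Set Cover result. The only small care point is making sure that equality of substrings (to decide which $S_p$'s contain which element of $\mathcal{U}$) is tested efficiently, which is routine via suffix arrays and longest common extension queries.
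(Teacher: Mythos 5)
Your proposal is correct and takes essentially the same approach as the paper: both reduce \textsc{Minimum-$k$-Attractor} to \textsc{Minimum-SetCover} with universe equal to the distinct substrings of length $\leq k$, sets $S_p$ indexed by text positions, the bound $|S_p| \leq k(k+1)/2$, and the greedy $\mathcal{H}(s_{\max})$ guarantee. The only addition is that you spell out the elementary derivation of the closed-form inequality $\mathcal{H}(n(n+1)/2)\leq 2\ln((n+1)/\sqrt 2)+1$, which the paper states without proof.
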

\begin{proof}
  We first show that in polynomial time we can reduce
  \textsc{Minimum-$k$-Attractor} to an instance of
  \textsc{Minimum-$k'$-SetCover} for $k'=k(k+1)/2$.  Let $T$ be the
  input string of length $n$. Consider the set $\mathcal{U}$ of all
  distinct substrings of $T$ of length $\leq k$. The size of
  $\mathcal{U}$ is at most $kn$, i.e., polynomial in $n$. We create a
  collection $C$ of sets over $\mathcal{U}$ as follows. For any
  position $i\in[1,n]$ in $T$ take all distinct substrings of length
  $\leq k$ that have an occurrence containing position $i$ (there is
  at most $p$ such substrings of length $p$ and hence not more than
  $k(k+1)/2$ in total) and add a set containing those substrings to
  $C$. It is easy to see that \textsc{Minimum-$k$-attractor} for $T$
  has the same size as \textsc{Minimum-$k'$-SetCover} for $C$. Since
  the latter can be approximated to a factor
  $\mathcal{H}(k')$~\cite{j73}, the claim follows.
\end{proof}

For constant $k\geq 3$, Duh and F\"{u}rer~\cite{df97} describe an
approximation algorithm based on semi-local optimization that achieves
an approximation ratio of $\mathcal{H}(k)-1/2$ for
\textsc{Minimum-$k$-SetCover}. Thus, we obtain an improved
approximation ratio for constant $k$.

\begin{theorem}
  Let $\mathcal{H}(p)=\sum_{i=1}^{p}\frac{1}{i}$ be the $p$-th
  harmonic number.
  \label{thm:better-approx}
  For any constant $k\geq 3$, \textsc{Minimum-$k$-Attractor} can be
  approximated in polynomial time up to a factor of
  $\mathcal{H}(k(k+1)/2)-1/2$.
\end{theorem}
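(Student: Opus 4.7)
The plan is to essentially replay the proof of Theorem~\ref{thm:log-approx} but plug in a stronger approximation algorithm at the end. The reduction from \textsc{Minimum-$k$-Attractor} on an input string $T$ of length $n$ to an instance of \textsc{Minimum-$k'$-SetCover} with $k' = k(k+1)/2$ constructed there is polynomial-time, preserves the optimum exactly (the smallest $k$-attractor of $T$ has the same cardinality as the smallest set cover of the induced collection $C$ over the universe $\mathcal{U}$ of distinct substrings of length $\le k$), and, crucially, produces an instance in which every set has size at most $k(k+1)/2$. Hence any $\alpha$-approximation for \textsc{Minimum-$k'$-SetCover} yields, after pulling back the selected sets to the corresponding text positions, an $\alpha$-approximation for \textsc{Minimum-$k$-Attractor}.

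The point where I would depart from Theorem~\ref{thm:log-approx} is the choice of set-cover approximator. Instead of invoking Johnson's greedy $\mathcal{H}(k')$-approximation, I would invoke the semi-local optimization algorithm of Duh and F\"urer~\cite{df97}, which improves the ratio to $\mathcal{H}(k') - 1/2$ whenever the maximum set size $k'$ is a constant. The hypothesis of the theorem, that $k$ is constant, guarantees that $k' = k(k+1)/2$ is also a constant, so this algorithm runs in polynomial time on the reduced instance. Combining the exact-preserving reduction with the Duh--F\"urer bound gives the desired approximation factor $\mathcal{H}(k(k+1)/2) - 1/2$.

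The only real subtlety is verifying that the reduction's output is a legitimate input to the Duh--F\"urer algorithm, namely a $k'$-bounded set cover instance, not just that the optimum is bounded. This is already taken care of by the construction in Theorem~\ref{thm:log-approx}: for each text position $i$, the set added to $C$ consists of the distinct substrings of length at most $k$ whose occurrences contain $i$, and there are at most $p$ such substrings of length $p$, giving at most $\sum_{p=1}^k p = k(k+1)/2 = k'$ elements per set. So the main step is to state the reduction, observe its $k'$-boundedness, and quote~\cite{df97}; there is no substantive obstacle beyond checking these bookkeeping facts.
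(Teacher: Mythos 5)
Your proposal matches the paper's argument exactly: the paper proves this theorem by invoking the reduction of Theorem~\ref{thm:log-approx} to a $k'$-bounded set-cover instance with $k'=k(k+1)/2$, and then replacing Johnson's greedy $\mathcal{H}(k')$-approximation with the Duh--F\"urer semi-local optimization bound of $\mathcal{H}(k')-1/2$, which applies because $k'$ is constant when $k$ is. Your extra check that the reduction yields a genuinely $k'$-bounded instance (not merely a bounded optimum) is correct and is the same bookkeeping already present in the proof of Theorem~\ref{thm:log-approx}.
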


\section{Optimal-Time Random Access}

In this section we show that the simple string attractor property
introduced in Definition \ref{def: string attractor} is sufficient to
support random access in \emph{optimal} time on string attractors and,
in particular, on most dictionary-compression schemes.  We show this
fact by extending an existing lower bound of Verbin and
Yu~\cite{CVY13} (holding on grammars) and by providing a data
structure matching this lower bound. First, we reiterate the main step
of the proof in~\cite{CVY13}, with minute technical details tailored
to our needs.

\begin{theorem} [Verbin and Yu~\cite{CVY13}]\label{th:low_bound}
  Let $g$ be the size of any SLP for a string of length $n$.  Any
  static data structure taking $\bigO(g\;{\rm polylog}\;n)$ space
  needs $\Omega(\log n / \log \log n)$ time to answer random access
  queries.
\end{theorem}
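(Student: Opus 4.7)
The plan is to follow the Verbin--Yu argument, which reduces a cell-probe-hard problem to random access on SLP-compressed text. The starting point is a hard static-data-structure problem $P$ for which any representation occupying $\mathrm{polylog}(n)$ cells of space must spend $\Omega(\log n / \log\log n)$ time per query in the cell-probe model; the natural choice used in \cite{CVY13} is a carefully chosen ``tree-path lookup'' problem on a balanced tree of fanout $b = \Theta(\log n)$ and depth $h = \Theta(\log n / \log\log n)$, whose hardness comes from standard round-elimination and asymmetric-communication arguments.

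First, I would show that any instance of $P$ can be turned into a string $T$ of length $n$ together with an SLP for it of size $g = \mathrm{polylog}(n)$, such that each query of $P$ can be reformulated as reading one position of $T$ at an index computable in $o(\log n / \log\log n)$ time. Concretely, the grammar mirrors the underlying tree: the top $h$ levels are query-dependent (with $O(b)$ nonterminals per level), while the bottom levels share nonterminals aggressively so that the total number of rules is $O(bh) = \mathrm{polylog}(n)$. The adversarial payload is encoded either in the leaf terminals or through the identity of the shared subtrees, and the root-to-leaf ``path'' of $P$ translates into a plain integer address in $[1,n]$.

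Second, I would observe that any data structure of size $O(g\,\mathrm{polylog}\,n) = \mathrm{polylog}(n)$ supporting random access on $T$ in time $t$ immediately answers a query of $P$ using $t + o(\log n / \log\log n)$ cell probes. Combined with the cell-probe lower bound for $P$, this forces $t = \Omega(\log n / \log\log n)$, which is exactly the bound claimed in the statement.

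The main obstacle is the grammar construction itself: one must encode a genuinely adversarial instance of $P$ while keeping $g$ polylogarithmic, i.e., the sharing of subtrees that shrinks $g$ must not collapse information that different queries need to distinguish. Balancing these two requirements — a shallow, highly shared grammar skeleton on top of enough unshared ``addressing'' levels — is the technical heart of Verbin--Yu that we would reiterate here; once it is in place, the theorem follows by plugging the reduction into the known cell-probe lower bound for $P$.
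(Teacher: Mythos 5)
Your high-level skeleton --- take a cell-probe-hard static problem, encode all its answers into an ``answer string,'' show that string has a small SLP, and transfer the lower bound to random access --- is indeed the shape of the Verbin--Yu argument that the paper reiterates. But the concrete instantiation you describe has two genuine errors that would make the argument fail.

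First, the hard problem is wrong. Verbin and Yu do not use a ``tree-path lookup'' problem with round elimination; the paper (following~\cite{CVY13}) uses \emph{2-sided parity range counting} on $m$ points in an $m \times m^\epsilon$ grid, whose $\Omega(\log m / \log\log m)$ lower bound for structures of size $O(m\ \mathrm{polylog}\ m)$ is due to P\u{a}tra\c{s}cu and is invoked as~\cite[Lem.~5]{CVY13}. The tree-based round-elimination picture you describe is the predecessor-search lower bound, a different result in a different parameter regime.

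Second, and more damagingly, your quantitative bookkeeping is off. You claim the grammar has size $g = \mathrm{polylog}(n)$ and therefore the data structure is $O(g\ \mathrm{polylog}\ n) = \mathrm{polylog}(n)$ words. In the actual construction, the answer string has length $n = m^{1+\epsilon}$ and SLP size $g = O(m\log m)$~(see~\cite[Lem.~6]{CVY13}), i.e., $g \approx n^{1/(1+\epsilon)}\log n$ is \emph{polynomial} in $n$, and the allowed space $O(g\ \mathrm{polylog}\ n) = O(m\ \mathrm{polylog}\ m)$ is near-linear in the number of points. That near-linear space budget is essential: the $\Omega(\log m / \log\log m)$ range-counting lower bound you need simply does not exist for polylogarithmic space, so the step ``combined with the cell-probe lower bound for $P$ \ldots'' has no valid lower bound to plug in. The correct conclusion is obtained by noting $\log m = \Theta(\log n)$, which translates the $\Omega(\log m / \log\log m)$ bound for range counting into $\Omega(\log n / \log\log n)$ for random access, as in the paper's proof.
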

\begin{proof}
  Consider the following problem: given $m$ points on a grid of size
  $m \times m^{\epsilon}$, where $\epsilon>0$ is some constant, build
  a data structure answering 2-sided parity range-counting queries,
  i.e., for any position $(x,y)$ find the number (modulo 2) of points
  with coordinates in $[1,x]\times[1,y]$. Any static data structure
  answering such queries using $\bigO(m\;{\rm polylog}\;m)$ words of
  space must have a query time of $\Omega(\log m / \log \log
  m)$~\cite[Lem. 5]{CVY13}.  Assume that our claim does not hold,
  i.e., for any SLP of size $g$, there exists a static data structure
  $D$ of size $\bigO(g\;{\rm polylog}\;n)$ that answers access queries
  in $o(\log n / \log \log n)$ time. Now take any instance of the
  above range-counting problem, i.e., a set of $m$ points on a
  grid. Take the string of length $n=m^{1+\epsilon}$ encoding answers
  to all possible queries (call it the \emph{answer string}) in
  row-major order. This string, by~\cite[Lem. 6]{CVY13}, has an SLP of
  size $g\in \bigO(m \log m)$.  Thus, $D$ takes $\bigO(g\;{\rm
    polylog}\;n) =\bigO(m\;{\rm polylog}\;m)$ space and answers access
  (and hence also range-counting) queries in $o(\log n / \log \log
  n)=o(\log m / \log \log m)$ time,
  contradicting~\cite[Lem. 5]{CVY13}.
\end{proof}

The key observation for extending the above lower bound to other
compression schemes and to string attractors is that we can use known
reductions from SLPs to obtain a different representation (e.g., a
collage system or a macro scheme) of size at most $g$. For example,
the fact that $z\leq g^*$~\cite{rytter2003application} immediately
implies that the above bound also holds within $\bigO(z\;{\rm
  polylog}\;n)$ space. Hence, for any compression method that is at
least as powerful as SLPs we can generalize the lower bound.

\begin{theorem}\label{thm:low_bound_compressors}
  Let $T\in\Sigma^n$ and let $\alpha$ be any of these measures:
  \begin{enumerate}
    \item[(1)] the size $\gamma$ of a string attractor for $T$,
    \item[(2)] the size $g_{rl}$ of an RLSLP for $T$,
    \item[(3)] the size $c$ of a collage system for $T$,
    \item[(4)] the size $z$ of the LZ77 parse of $T$,
    \item[(5)] the size $b$ of a macro scheme for $T$.
  \end{enumerate}
  Then, $\Omega(\log n/\log\log n)$ time is needed to access one
  random position of $T$ within $\bigO(\alpha\ \mathrm{polylog}\ n)$
  space.
\end{theorem}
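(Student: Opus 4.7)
My plan is to reduce each of the five cases to Theorem~\ref{th:low_bound}, which already gives the $\Omega(\log n/\log\log n)$ lower bound within $\bigO(g\ \mathrm{polylog}\ n)$ space when $g$ is the size of an SLP. The strategy is to show that, starting from any SLP of size $g$ for $T$, one can produce a representation of the other type (attractor, RLSLP, collage system, LZ77 parse, macro scheme) whose size is at most $g$ (i.e., $\alpha \le g$). Once this is in place, any hypothetical data structure of size $\bigO(\alpha\ \mathrm{polylog}\ n)$ that accesses $T$ in $o(\log n/\log\log n)$ time would also be a data structure of size $\bigO(g\ \mathrm{polylog}\ n)$ with the same access time, contradicting Theorem~\ref{th:low_bound}.

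Concretely, the plan proceeds case by case. For (2), an SLP is by definition a special case of an RLSLP, so the smallest RLSLP has size $g_{rl}\le g$. For (3), an SLP is likewise a special case of a collage system, hence $c\le g$. For (1), Theorem~\ref{th:attr_c} turns any collage system of size $c$ into an attractor of size at most $c$; composing with (3) gives $\gamma\le g$. For (4), we invoke the classical inequality $z\le g^{*}$ of Rytter~\cite{rytter2003application}, which in fact gives $z\le g$ for every SLP of $T$. For (5), LZ77 is itself a macro scheme (a unidirectional bidirectional parse, as noted in \Cref{sec:preliminaries}), so the smallest macro scheme satisfies $b\le z\le g$. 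In each case the resulting representation has size at most $g$, which is what we need.

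With $\alpha \le g$ established, the contradiction argument is immediate. Given an instance of the $m\times m^{\epsilon}$ 2-sided parity range-counting problem, form the answer string of length $n=m^{1+\epsilon}$ that, by \cite[Lem.~6]{CVY13}, admits an SLP of size $g\in\bigO(m\log m)$; the reductions above then produce a representation of the corresponding type $\alpha\in\bigO(m\log m)$ of the \emph{same} string. A hypothetical structure using $\bigO(\alpha\ \mathrm{polylog}\ n)=\bigO(m\ \mathrm{polylog}\ m)$ words and answering random access in $o(\log n/\log\log n)=o(\log m/\log\log m)$ time would yield a parity range-counting structure violating \cite[Lem.~5]{CVY13}, exactly as in the proof of Theorem~\ref{th:low_bound}.

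There is no real obstacle: the only thing to double-check is that the chain $b\le z\le g$ does not secretly require the \emph{optimal} SLP (it does not; $z\le g$ holds for every SLP because LZ77 is optimal among unidirectional parses and every SLP induces such a parse via its parse-tree traversal), and that the attractor reduction in Theorem~\ref{th:attr_c} applies verbatim to the SLP viewed as a collage system. Both are direct consequences of the definitions and the reductions already proved in \Cref{sec:compressors->SA}, so the theorem follows by collecting the five one-line inequalities and invoking Theorem~\ref{th:low_bound}.
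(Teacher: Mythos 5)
Your proposal is correct and follows essentially the same route as the paper: start from the Verbin--Yu hard string and its SLP of size $g$, translate that SLP into a representation of each other type of size at most $g$ (trivial inclusion for RLSLPs and collage systems, Theorem~\ref{th:attr_c} for attractors, $z\le g^*$ for LZ77, and LZ77 being a macro scheme for $b$), and conclude that any $\bigO(\alpha\,\mathrm{polylog}\,n)$-space structure would contradict Theorem~\ref{th:low_bound}.
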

\begin{proof}
  Let $G$ be the SLP of size $g$ used in Theorem \ref{th:low_bound} to
  compress the answer string. By our reduction stated in Theorem
  \ref{th:attr_c}, we can build a string attractor of size $\gamma
  \leq g$, therefore $\gamma\ \mathrm{polylog}\ n
  \in\bigO(g\ \mathrm{polylog}\ n)$ and bound (1) holds. Since RLSLPs
  and collage systems are extensions of SLPs, $G$ is also an RLSLP and
  a collage system for $T$, hence bounds (2) and (3) hold
  trivially. From the relation $z\leq
  g^*$~\cite{rytter2003application} we have that
  $z\ \mathrm{polylog}\ n \in\bigO(g\ \mathrm{polylog}\ n)$, therefore
  bound (4) holds. Finally, LZ77 is a particular unidirectional parse,
  and macro schemes are extensions of unidirectional parses, hence
  bound (5) holds.
\end{proof}

We now describe a parametrized data structure based on string
attractors matching lower bounds (1-5) of Theorem
\ref{thm:low_bound_compressors}. Our result generalizes Block
Trees~\cite{BGGKOPT15} (where blocks are only copied left-to-right)
and a data structure proposed very recently by Gagie et
al.~\cite{gagie2017optimal} supporting random access on the RLBWT
(where only constant out-degree is considered).

\begin{theorem}\label{thm:extract}
  Let $T[1..n]$ be a string over alphabet $[1..\sigma]$, and let
  $\Gamma$ be a string attractor of size $\gamma$ for $T$. For any
  integer parameter $\tau\geq 2$, we can store a data structure of
  $\bigO(\gamma \tau \log_\tau(n/\gamma))$ $w$-bit words supporting
  the extraction of any length-$\ell$ substring of $T$ in
  $\bigO(\log_\tau(n/\gamma) + \ell\log(\sigma)/w)$ time.
\end{theorem}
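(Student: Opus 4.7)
The plan is to build a $\tau$-ary block tree on $T$ guided by the attractor $\Gamma$, generalizing the construction of Gagie et al.~\cite{gagie2017optimal} and Block Trees~\cite{BGGKOPT15}. First, using Lemma~\ref{lemma: superset} I augment $\Gamma$ with at most $\gamma$ extra equally-spaced positions, so that consecutive attractor positions lie at distance at most $n/\gamma$ without changing the asymptotic size. I then partition $T$ into $\gamma$ initial blocks of length $n/\gamma$, each containing one attractor position. These blocks, together with their immediate left and right neighbours (at most $3\gamma$ blocks total), constitute the \emph{marked} blocks at level $0$ of the tree.

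Inductively, every marked block at level $k$ is split into $\tau$ children of equal length at level $k+1$. A child is itself marked (and recursed into) whenever it contains an attractor position or lies adjacent to a child that does. For every other child $B$, the defining property of $\Gamma$ in Definition~\ref{def: string attractor} guarantees an occurrence of $B$ spanning some attractor position $p$; since $|B|$ equals the level-$(k+1)$ block length, this occurrence is entirely contained in at most two consecutive children of the marked level-$k$ blocks around $p$. For such $B$ I store a pointer to that pair of siblings together with an offset. The recursion stops once block length falls to $\Theta(w/\log\sigma)$; at the leaves the text is stored explicitly in packed form. Each attractor position induces $\bigO(1)$ marked blocks per level, so there are $\bigO(\gamma\tau)$ stored children per level and $\bigO(\gamma\tau\log_\tau(n/\gamma))$ $w$-bit words in total across the $\bigO(\log_\tau(n/\gamma))$ levels of the structure.

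Random access proceeds top-down: to reach position $i$ we descend into the enclosing block, and whenever that block is unmarked we follow its pointer (rewriting the query to a translated position in the referenced pair of siblings) before continuing downward. Each step does $\bigO(1)$ work, so locating the leaf containing $T[i]$ costs $\bigO(\log_\tau(n/\gamma))$ time; a length-$\ell$ substring is then produced by scanning consecutive leaves left-to-right and reading each in packed form, for $\bigO(\ell\log\sigma/w)$ additional cost, since hopping between adjacent leaves amortizes to $\bigO(1)$. The main obstacle is to guarantee that whenever an unmarked child fires its pointer, both siblings the pointer references are themselves marked and hence actually present in the tree; this is precisely why at every level we proactively mark the two neighbours of each attractor-carrying block, and it is the single place where the attractor property of $\Gamma$ is invoked. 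Everything else is a routine space/time analysis along the lines of standard block-tree arguments.
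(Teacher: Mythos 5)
Your construction follows the general block-tree shape of the paper's data structure, but it misses the one idea that makes the stated extraction time possible, and the place where it goes wrong is precisely the sentence you wave through with ``hopping between adjacent leaves amortizes to $\bigO(1)$.''

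Concretely: with non-overlapping children and a leaf size of $\Theta(w/\log\sigma)$, there is no sense in which consecutive leaves of your query range are adjacent in the structure. An unmarked child is reached by following a pointer whose target can be anywhere among the marked blocks, so two characters that are neighbours in $T$ may be stored under completely unrelated subtrees after a redirect; accessing the ``next'' leaf is a fresh $\bigO(\log_\tau(n/\gamma))$-time descent, not an $\bigO(1)$ hop. What your construction actually gives is $\bigO\bigl((1 + \ell\log\sigma/w)\log_\tau(n/\gamma)\bigr)$ time, which is the Block-Tree bound from the literature (compare the second row of Table~\ref{table:extract}); it does \emph{not} meet the $\bigO(\log_\tau(n/\gamma) + \ell\log\sigma/w)$ claim. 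The paper closes this gap with two devices your write-up does not mention. First, at every level it stores, in addition to the $2\tau$ aligned child blocks flanking an attractor position, another $2\tau-1$ blocks of the same length shifted by half a block. This overlapping family guarantees that any substring of length at most half the child-block length is wholly contained in some stored block, so a query range can descend level by level \emph{without ever splitting} -- one pointer per level. Second, the recursion is stopped not at word-size leaves but at block length $\Theta(\alpha)$ with $\alpha = w\log_\tau(n/\gamma)/\log\sigma$, so that a single $\bigO(\log_\tau(n/\gamma))$-time descent already delivers $\alpha$ contiguous packed characters, and a length-$\ell$ query is answered by $\lceil \ell/\alpha \rceil$ independent descents, yielding $\bigO((\ell/\alpha+1)\log_\tau(n/\gamma)) = \bigO(\log_\tau(n/\gamma) + \ell\log\sigma/w)$.

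A secondary issue is that pointing an unmarked child to a \emph{pair of siblings plus an offset} reintroduces the splitting problem the overlapping blocks are designed to eliminate: the translated query can straddle the boundary between the two target siblings, at which point a single descent chain no longer suffices. Your space accounting and the single-position access argument are fine; the gap is entirely in achieving the claimed extraction time, and both the overlapping-block trick and the $\Theta(\alpha)$-sized leaves are needed for that.
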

\begin{proof}
  We describe a data structure supporting the extraction of $\alpha =
  \frac{w\log_\tau(n/\gamma)}{\log\sigma}$ packed characters in
  $\bigO(\log_\tau(n/\gamma))$ time. To extract a substring of length
  $\ell$ we divide it into $\lceil\ell/\alpha\rceil$ blocks and
  extract each block with the proposed data structure. Overall, this
  will take $\bigO((\ell/\alpha+1)\log_\tau(n/\gamma))
  =\bigO(\log_\tau(n/\gamma) + \ell\log(\sigma)/w)$ time.
	
  Our data structure is organized into $\bigO(\log_\tau(n/\gamma))$
  levels. For simplicity, we assume that $\gamma$ divides $n$ and that
  $n/\gamma$ is a power of $\tau$.  The top level (level 0) is
  special: we divide the string into $\gamma$ blocks
  $T[1..n/\gamma]\,T[n/\gamma+1..2n/\gamma]\dots T[n-n/\gamma+1..n]$
  of size $n/\gamma$.  Intuitively, at each level $i>0$ we associate
  to each $j\in \Gamma$ two \emph{context substrings} of length $s_i =
  n/(\gamma\cdot \tau^{i-1})$ flanking position $j$. These substrings
  are divided in a certain number of (overlapping) blocks of length
  $s_{i}/\tau = s_{i+1}$. Each block is then associated to an
  occurrence at level $i+1$ overlapping some element $j'\in\Gamma$
  (possible by definition of $\Gamma$). At some particular level $i^*$
  (read the formal description below) we store explicitly all
  characters in the context substrings. To extract a substring of
  length $\alpha$, we will map it from level $0$ to level $i^*$, and
  then extract naively using the explicitly stored characters.

  More formally, for levels $i>0$ and for every element $j\in\Gamma$,
  we consider the $2\tau$ non-overlapping blocks of length $s_{i+1}$
  forming the two context substrings flanking $j$: $T[j-s_{i+1}\cdot k
    +1 ... j-s_{i+1}\cdot (k-1)]$ and $T[j+s_{i+1}\cdot (k-1) + 1
    ... j+s_{i+1}\cdot k]$, for $k=1,\dots, \tau$.  We moreover
  consider a sequence of $2\tau-1$ additional consecutive and
  non-overlapping blocks of length $s_{i+1}$, starting in the middle
  of the first block above defined and ending in the middle of the
  last: $T[j-s_{i+1}\cdot k + 1 + s_{i+1}/2 ... j-s_{i+1}\cdot (k-1)+
    s_{i+1}/2]$ for $k=1,\dots, \tau$, and $T[j+s_{i+1}\cdot (k-1) + 1
    + s_{i+1}/2 ... j+s_{i+1}\cdot k + s_{i+1}/2]$, for $k=1,\dots,
  \tau-1$.  Note that, with this choice of blocks, at level $i$ for
  any substring $S$ of length at most $s_{i+1}/2$ inside the context
  substrings around elements of $\Gamma$ we can always find a block
  fully containing $S$. This property will now be used to map
  ``short'' strings from the first to last level of our structure
  without splitting them, until reaching explicitly stored characters
  at some level $i^*$ (see below).
	
  From the definition of string attractor, blocks at level $0$ and
  each block at level $i > 0$ have an occurrence at level $i+1$
  crossing some position in $\Gamma$. Such an occurrence can be fully
  identified by the coordinate $\langle \mathit{off}, j\rangle$, for
  $0 \leq \mathit{off} < s_{i+1}$ and $j\in\Gamma$, indicating that
  the occurrence starts at position $j - \mathit{off}$.  Let $i^*$ be
  the smallest number such that $s_{i^*+1} < 2\alpha =
  \frac{2w\log_\tau(n/\gamma)}{\log\sigma}$. Then $i^*$ is the last
  level of our structure.  At this level, we explicitly store a packed
  string with the characters of the blocks. This uses in total
  $\bigO(\gamma \cdot s_{i^*}\log(\sigma)/w)
  =\bigO(\gamma\tau\log_\tau(n/\gamma))$ words of space. All the
  blocks at levels $0\leq i < i^*$ store instead the coordinates
  $\langle \mathit{off},j\rangle$ of their primary occurrence in the
  next level. At level $i^*-1$, these coordinates point inside the
  strings of explicitly stored characters.
	
  Let $S= T[i..i+\alpha-1]$ be the substring to be extracted. Note
  that we can assume $n/\gamma \geq \alpha$; otherwise the whole
  string can be stored in plain packed form using $n\log(\sigma)/w <
  \alpha \gamma\log(\sigma)/w \in\bigO(\gamma\log_\tau (n/\gamma))$
  words and we do not need any data structure. It follows that $S$
  either spans two blocks at level 0, or it is contained in a single
  block. The former case can be solved with two queries of the latter,
  so we assume, without losing generality, that $S$ is fully contained
  inside a block at level $0$. To retrieve $S$, we map it down to the
  next levels (using the stored coordinates) as a contiguous substring
  as long as this is possible, that is, as long as it fits inside a
  single block. Note that, thanks to the way blocks overlap, this is
  always possible as long as level $i$ is such that $\alpha \leq
  s_{i+1}/2$. By definition, then, we arrive in this way precisely to
  level $i^*$, where characters are stored explicitly and we can
  return the packed substring. Note also that, since blocks in the
  same level have the same length, at each level we spend only
  constant time to find the pointer to the next level (this requires a
  simple integer division).
\end{proof}

Table \ref{table:extract A-DAG} reports some interesting space-time
trade-offs achievable with our data structure.  For
$\tau=\log^\epsilon n$, the data structure takes
$\bigO(\gamma\ \mathrm{polylog}\ n)$ space and answers random access
queries in $\bigO(\log(n/\gamma)/\log\log n)$ time, which is optimal
by Theorem \ref{thm:low_bound_compressors} (note that
$\log(n/\gamma)\in \Theta(\log n)$ for the string used in Theorem
\ref{thm:low_bound_compressors}, so the structure does not break the
lower bound). Choosing $\tau = (n/\gamma)^\epsilon$, space increases
to $\bigO(\gamma^{1-\epsilon}n^\epsilon)$ words and query time is
optimal \emph{in the packed setting}.  Note that our data structure is
\emph{universal}: given any dictionary-compressed representation, by
the reductions of Section \ref{sec:compressors->SA} we can derive a
string attractor of the same asymptotic size and build our data
structure on top of it. By Theorems \ref{th:low_bound} and
\ref{thm:low_bound_compressors} we obtain:

\begin{corollary}
  For $\tau=\log^\epsilon n$ (for any constant $\epsilon>0$), the data
  structure of Theorem \ref{thm:extract} supports random access in
  optimal time on string attractors, SLPs, RLSLPs, LZ77, collage
  systems, and macro schemes.
\end{corollary}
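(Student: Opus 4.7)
The plan is to proceed by direct substitution into Theorem~\ref{thm:extract} followed by invocation of the reductions of Section~\ref{sec:compressors->SA} and the lower bounds of Theorem~\ref{thm:low_bound_compressors}. First I would instantiate $\tau=\log^{\epsilon}n$ in Theorem~\ref{thm:extract}: the space becomes
\[
  \bigO\!\left(\gamma\cdot\log^{\epsilon}n\cdot\log_{\log^{\epsilon}n}(n/\gamma)\right)
  =\bigO\!\left(\gamma\cdot \frac{\log^{\epsilon}n\cdot\log(n/\gamma)}{\log\log n}\right)
  \subseteq \bigO(\gamma\,\mathrm{polylog}\,n),
\]
and the time for extracting a length-$\ell$ substring becomes
$\bigO(\log(n/\gamma)/\log\log n+\ell\log(\sigma)/w)$.
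Specialising to random access ($\ell=1$, a single character read in word-packed form) gives query time $\bigO(\log(n/\gamma)/\log\log n)\subseteq \bigO(\log n/\log\log n)$, since $\gamma\geq 1$.

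Next I would transfer the bound from string attractors to each of the listed compression schemes. For a target scheme with measure $\alpha\in\{g,g_{rl},c,z,b\}$, the corresponding reduction of Section~\ref{sec:compressors->SA} (Theorem~\ref{th:attr_c} for SLPs/RLSLPs/collage systems, Lemma~\ref{th: attractor LZ77} for LZ77, Theorem~\ref{th: attractor MS} for macro schemes) yields a string attractor $\Gamma$ with $\gamma=|\Gamma|\in\bigO(\alpha)$ that can be produced from the compressed input in polynomial time. Building the structure of Theorem~\ref{thm:extract} on this $\Gamma$ with $\tau=\log^{\epsilon}n$ therefore occupies $\bigO(\alpha\,\mathrm{polylog}\,n)$ words and supports random access in $\bigO(\log n/\log\log n)$ time.

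Finally I would invoke Theorem~\ref{thm:low_bound_compressors}: for each of the measures $\gamma,g_{rl},c,z,b$ (and hence also $g\geq g_{rl}$), random access on a representation of size $\bigO(\alpha\,\mathrm{polylog}\,n)$ requires $\Omega(\log n/\log\log n)$ time. Matching the upper and lower bounds yields optimality in each case, establishing the corollary.

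The only mildly delicate point is bookkeeping: one must check that $\log(n/\gamma)/\log\log n$ is actually $\Theta(\log n/\log\log n)$ on the hard instances where the lower bound bites (so that optimality is not vacuous), and that the extra $\log^{\epsilon}n$ factor in the space is absorbed by ``polylog''; both are immediate from the construction in Theorem~\ref{th:low_bound} and the definition of $\mathrm{polylog}\,n$. Once these are noted, the corollary reduces to routine assembly of the previously established results, with no new technical obstacle.
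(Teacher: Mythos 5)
Your proposal is correct and follows essentially the same line of reasoning the paper gives in the paragraph immediately preceding the corollary: substitute $\tau=\log^\epsilon n$ into Theorem~\ref{thm:extract}, transfer to each compression scheme via the reductions of Section~\ref{sec:compressors->SA}, and match against the lower bounds, including the same sanity check that $\log(n/\gamma)\in\Theta(\log n)$ on the hard instances. The only cosmetic remark is that for SLPs you can cite Theorem~\ref{th:low_bound} directly rather than routing through $g\geq g_{rl}$, which is what the paper does, but your version is equally valid.
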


\section{Conclusions}

In this paper we have proposed a new theory unifying all known
dictionary compression techniques.  The new combinatorial object at
the core of this theory --- the string attractor --- is NP-hard to
optimize within some constant in polynomial time, but logarithmic
approximations can be achieved using compression algorithms and
reductions to well-studied combinatorial problems.  We have moreover
shown a data structure supporting optimal random access queries on
string attractors and on most known dictionary compressors. Random
access stands at the core of most compressed computation techniques;
our results suggest that compressed computation can be performed
independently of the underlying compression scheme (and even in
optimal time for some queries).

An interesting view for future research is to treat (the size of the
smallest) $k$-attractors as a measure of string compressibility akin
to the $k$-th order empirical entropy (which has proven to be an
accurate and robust measure for texts that are not highly-repetitive),
as it exhibits a similar regularity, e.g., $\gamma_k^*(X) \leq
\gamma_{k+1}^*(X)$ for any $k$, while being sensitive to repetition:
$\gamma_{k}^{*}(X^t) \leq \gamma_{k}^{*}(X)+1$.

Another use of our techniques could be to provide a linear ordering of
compression algorithms based on how well they approximate the smallest
attractor.  For example, the unary string shows that a ``weak''
compression like LZ78 in the worst case cannot achieve a better ratio
than $|LZ78|/\gamma^* \in \Omega(\sqrt{n})$, while we showed that LZ77
achieves (via our reductions from attractors) $|LZ77|/\gamma^* \in
\bigO({\rm polylog}\;n)$ ratio. Relatedly, it is still an open problem
to determine whether the smallest attractor can be approximated up to
$o(\log n)$ ratio in polynomial time for all strings.  Even within
logarithmic ratio, we have left open the problem of efficiently
computing such an approximation. A naive implementation of our
algorithm based on set-cover runs in cubic time.

It would also be interesting to further explore the landscape of
compressed data structures based on string attractors. In this paper
we showed that the simple string attractor property is sufficient to
support random access. Is this true for more complex queries such as,
e.g., indexing?

Finally, an intriguing problem is that of optimal approximation of
string $k$-attractors; e.g., what is the complexity of the 2-attractor
problem? what is, assuming P$\neq$NP, the best approximation ratio for
the minimum 3-attractor problem? For the latter question, in this
paper we gave a lower bound of 11809/11808
(Corollary~\ref{thm:approx-hardness-constant}) and an upper bound of
$1.95$ (Theorem~\ref{thm:better-approx}).

\section*{Acknowledgments}

We received useful suggestions from many people during the write-up
of this paper. We would like to thank (in alphabetical order) Philip
Bille, Anders Roy Christiansen, Mikko Berggren Ettienne, Travis
Gagie, Inge Li G\o rtz, Juha K{\"a}rkk{\"a}inen, Gonzalo Navarro,
Alberto Policriti, and Esko Ukkonen for the great feedback.

This work was partially funded by Danish Research Council grant
DFF-4005-00267 and by the project \linebreak MIUR-SIR CMACBioSeq (``Combinatorial
methods for analysis and compression of
biological sequences'') grant RBSI146R5L.

\bibliographystyle{plainurl}
\bibliography{stoc2018}

\end{document}